\newcommand{\pf}[1]{\text{Pref} (\mathcal #1)}
\newcommand{\pff}[1]{\text{\emph{Pref}}(\mathcal #1)}
\newcommand{\nats}{\mathbb N} 
\newcommand{\mt}[1]{\mathcal #1}
\newcommand{\notaN}[1]{\textcolor{brown}{#1 NP}}
\title{On (co-lex) Ordering Automata}
\author{Nicola Cotumaccio\inst{1} \and Giovanna D'Agostino\inst{2} \and Alberto Policriti\inst{2}  \and Nicola Prezza\inst{3}} 
\institute{Gran Sasso Science Institute, L'Aquila,  Italy. Email: \email{nicola.cotumaccio@gssi.it} \and University of Udine, Italy.  Email: \email{giovanna.dagostino@uniud.it, alberto.policriti@uniud.it} \and University Ca' Foscari, Venice,  Italy. Email: \email{nicola.prezza@unive.it}}
\date{\today}
\begin{document}

\maketitle
\thispagestyle{empty}

\begin{abstract}
   
The states of a deterministic finite automaton $\mathcal A$ can be identified with collections of words in $\pf{L(\mathcal A})$—the set of prefixes of words belonging to the regular language accepted by $\mathcal A$. But words can be ordered and among the many possible orders a very natural one is the co-lexicographic one. Such naturalness stems from the fact that it suggests a transfer of the order from words to the automaton’s states. This suggestion is, in fact, concrete and in a number of papers  automata admitting a total ordering of states coherent with the ordering of the set of words reaching them  have been proposed, studied, and implemented. Such class of ordered automata—the Wheeler automata—turned out to be efficiently stored/searched using an index and the set of words reaching each of their states have been proved to be intervals of the co-lexicographic ordered $\pf{L(\mathcal A})$.

Unfortunately  not all automata can be totally ordered as previously outlined. That is, in general not all sets of prefixes of $\mathcal L(\mathcal A)$ reaching a given state form an interval in $\pf{L(\mathcal A})$. However, automata can always be \emph{partially} ordered and an intrinsic measure of their complexity can be defined and effectively determined, as the minimum width of  one of their admissible partial orders.
As    shown in previous works, this new concept of width of an automaton  has useful consequences in the fields of graph compression, indexing data structures, and automata theory.

In this paper we prove that  a canonical, minimum-width, partially-ordered, automaton accepting a language $\mathcal L$—dubbed the Hasse automaton $\mathcal H$ of $\mathcal L$—can be exhibited. $\mathcal H$ provides, in a precise sense, the best possible way to (partially) order the states of any automaton accepting $\mathcal L$, as long as we want to maintain an operational link with the (co-lexicographic) order of $\pf{L(\mathcal A})$. 
The difficulty in defining the Hasse automaton is a consequence of the existence of (co-lex) monotone sequences  of words  reaching infinitely many times more than one state of any deterministic automaton accepting L. Properly constructing the Hasse's automaton requires a distribution of these monotone sequences upon $\mathcal H$’s states on different linear components of the underlying partial order.
The number of such linear components (i.e. the width of the underlying partial order) intrinsically constrains the index-ability and, hence, the complexity of $\mathcal L$. 
We prove that such a complexity, that is,  the width of the language, can be effectively computed from the minimum  automaton recognizing the language. 
Finally, we  explore the relationship between two (often conflicting) objectives: minimizing the width and minimizing the number of states of an automaton.

\end{abstract}


\newpage

\clearpage
\pagenumbering{arabic} 
 
 \section{Introduction}

Equipping the domain of a structure with some kind of order is often a fruitful move performed in both Computer Science and Mathematics. An order provides direct access to data or domain elements and sometimes allows to tackle problems too difficult to cope with without some direct manipulation tool. For example, in descriptive complexity it is not known how to logically capture the class $P$ in general, while this can be done using a logic defined on a signature including an order relation (see \cite{Libkin}). 
The price to be paid  when imposing an order may be the restriction of the class of admissible structures. If we do not wish to pay such a price, a \emph{partial} order can be a reasonable alternative. Then the farther the partial order is from a total order, the less we are able to simplify our study. The ``distance'' from the partial to a total order becomes    a measure of the extent to which we have to ``tame''  the class of structures under consideration.

In this paper we cast the above observations onto the class of finite automata.
Partial orders and automata have already met and attracted attention because of their relation with logical, combinatorial, and algebraic characterization of languages (see, among many others,  \cite{thierrin1974ordered_aut,  DBLP:journals/jcss/BrzozowskiF80, DBLP:conf/dlt/SchwentickTV01}). In the literature a partially-ordered NFA is an automaton where the transition relation induces a partial order on its states (see  \cite{DBLP:journals/lmcs/MasopustK21}, for example,  for recent complexity results on this class of automata). More recently, in an effort to find a simple and unitary way to present a number of different algorithmic techniques related to the Burrows-Wheeler transform \cite{burrows1994block}, in  \cite{GAGIE201767}  Gagie et al.    proposed a different approach. The simple strategy they put forward was based on the idea of enforcing and using  order on a given automaton, based on an \emph{a priori} fixed order of its alphabet.
The resulting kind of automata, called Wheeler automata, admit an efficient index data structure for searching  subpaths labeled with a given query pattern  and    enable a representation of the graph in a space proportional to that of the edges' labels (as well as enabling more advanced compression mechanisms, see \cite{DBLP:conf/dcc/AlankoGNB19,DBLP:conf/soda/Prezza21}). 
This is in contrast with the fact that general graphs require a logarithmic (in the graph's size) number of bits per edge to be represented, as well as with recent results showing that in general, the subpath search problem can not be solved in subquadratic time, unless the strong exponential time hypothesis is false \cite{DBLP:journals/corr/BackursI15,DBLP:conf/sofsem/EquiMT21,DBLP:journals/corr/abs-1901-05264,DBLP:conf/sosa/GibneyHT21, DBLP:journals/ipl/PotechinS20}.
Wheeler languages---i.e. languages accepted by Wheeler automata---form an interesting   class of languages, where determinization and  membership  verification   become   easy tasks. However, as was to be expected, requiring the presence of a \emph{total} Wheeler order  over an automaton comes with a price: languages recognized by Wheeler automata constitute a small class, a subclass of star-free languages  \cite{alanko2020wheeler}.
 
Our results show that, as a matter of fact,  Wheeler (automata and) languages can be seen as a first level---in some sense the optimal level---of a hierachy of languages based on the minimum \emph{width} of a suitable \emph{partial} order over the states of  automata accepting the language. In other words, the minimum \emph{width} of a partial order definable on the collection of states is a measure that (properly, as we   prove in a companion paper in preparation \cite{DBLP:journals/corr/abs-2102-06798}), stratifies finite automata and regular languages. 

The  non-deterministic    width   of a language $\mathcal L$, ${\text{width}^N}(\mathcal L)$,   is  defined as the smallest $p$ such that there exists an NFA $\mathcal A$ with ${\text{width}}(\mathcal A)=p$  recognizing the language. The deterministic width of a language, ${\text{width}^D}(\mathcal L)$ is similarly defined using DFAs. 
The  non-deterministic/deterministic width allow to define two hierarchies of regular languages and in a     companion paper in preparation \cite{DBLP:journals/corr/abs-2102-06798}  we proved that   the  two hierarchies are  strict and do  not collapse. Except for level one---the Wheeler languages---the corresponding levels in the two hierarchies do not   coincide:  
there exist    languages $ \mathcal{L} $ such that $\text{width}^{D}(\mathcal L) > \text{width}^{N}(\mathcal L)$.
Moreover, we proved that   for a regular language $\mathcal L$ it holds:
  \[\text{width}^N(\mathcal L)\leq  \text{width}^D(\mathcal L) \leq 2^{\text{width}^N(\mathcal L)} - 1. \]

As shown in \cite{NicNic2021}, this new concept of width of an NFA/DFA has groundbreaking consequences in the fields of graph compression, indexing data structures, and automata theory: consider a finite automaton of width $p$. Then: (i) the automaton can be compressed using just $\Theta(\log p)$ bits per edge (assuming a constant-sized alphabet), (ii) pattern matching on the automaton's path (including  testing membership in the automaton's accepted language) can be solved in time proportional to $p^2$ per matched character, and (iii) the well-known  explosion in the number of states occurring when computing the powerset DFA equivalent to an input NFA is exponential in $p$, rather in the input's size \cite{NicNic2021}.

Having established that the width of a language is a meaningful complexity measure,  in this paper we address the problem of the effectiveness of this measure in the deterministic case:
is the deterministic width of a language $\mathcal L$, presented  by an automaton,  computable?
Notice that the width of a language  is not in general equal to the width of, say,  its  minimum automaton, since already at level one of the deterministic hierarchy there are Wheeler languages whose minimum automaton has deterministic width greater than one \cite{DBLP:conf/soda/AlankoDPP20}. 

\medskip

In this paper,  answering positively the above question we prove that \emph{any} language admits a canonical automaton realising its (best) width. 

In particular, we show that, although the deterministic width of a language, $\text{width}^D(\mathcal L)$, differs in general from the width of its (unique) minimum automaton,  $\text{width}^D(\mathcal L)$ can be computed and an automaton realising this width can be exhibited. The key observation for this result is a  combinatorial   property of automata that we called the {\em entanglement} number of a DFA. The entanglement of an automaton $\mathcal A$ measures the intrinsic incomparability of the automaton's states, which cannot be reduced even if we duplicate states.
This turns out to exactly correspond to the width of the partial order of sets of words reaching any given state---the partial order obtained as a lifting to sets of the co-lexicographic order of their elements/words. For this reason, the above mentioned canonical automaton is dubbed \emph{Hasse} automaton and its states can be distributed in $\text{width}^D(\mathcal L)$ linear components.   

Next, we prove that the entanglement of a language $\mathcal L$ (i.e. the best possible entanglement for $\mathcal L$) is readable/computable on the minimum DFA accepting $\mathcal L$ and can always be realized as the width of the Hasse automaton. The Hasse automaton is built on top of a  convex decomposition of the \emph{trace} of $\mathcal L$, which is the linear co-lexicographic order of the prefixes of words belonging to $\mathcal L$. 

Finally, we provide a DFA-free interpretation of the width by proving a full Myhill-Nerode theorem. Every linear component determines a partition of some sets of strings into convex sets, and algebraically we can capture this property by considering convex equivalence relations. We also exhibit a minimum automaton and, more generally, we explore the relationship between two (often conflicting) objectives: minimizing the width and minimizing the number of states.

\section{Notations on automata and orders}

Let $\Sigma$ be a finite alphabet and let  $ \Sigma^*$ be the set of  all (possibly empty) finite words on $\Sigma$. A finite  nondeterministic  automaton  (a NFA) accepting strings in $ \Sigma^{*} $ is a tuple  $\mathcal A=(Q, s, \delta,F)$ where $ Q $ is a finite set of states, $ s $  is a \emph{unique} initial state, $ \delta(\cdot, \cdot): Q \times \Sigma \rightarrow \mathcal Pow(Q) $ is the  transition function  (where $\mathcal Pow(Q)$ is the set of  all subsets of $Q$), and $ F\subseteq Q $ is the set of  final states. Sometimes we write $Q_{\mt A}, s_{\mt A}, \delta_{\mt A}, F_{\mt A}$ when the automaton $ \mathcal{A} $ is not clear from the context.

 As customary, we extend $ \delta $ to operate on strings as follows: for all $ q \in Q, a\in \Sigma, $ and $ \alpha \in \Sigma^{*} $:
\begin{align*}
 {\delta}(q,\epsilon)=\{q\}, &   \hspace{1cm}
 {\delta}(q,\alpha a) = \bigcup_{v\in{\delta}(q,\alpha)} \delta(v,a).
\end{align*}   
 We denote by   $\mathcal L(\mathcal A)=\{\alpha\in \Sigma^*: \delta(s, \alpha)\cap F\neq \emptyset\}$   the language accepted by the automaton $ \mathcal A $.
In this paper we will mostly consider {\em deterministic}  finite  automata (DFA),  where $ |\delta(q,a)|\leq 1 $,  for any $ q \in Q $ and $ a \in \Sigma $.
If the automaton is deterministic we  write  ${\delta}(q,\alpha)=q'$  for   the unique $q'$  such that ${\delta}(q,\alpha)=\{q'\}$  (if defined).

We assume from each state one can reach a final state (possibly the state itself)  and also that every state is reachable from the (unique) initial state. 
Hence, $ \text{Pref}(\mathcal L(\mathcal A))$, the collection of prefixes of words accepted by $ \mathcal A $,  consists of the set of  words that can be read on $ \mathcal{A} $ starting from the initial state. 
If $q\in Q$, $\alpha \in \pf L$, we denote by $I_q$ the set of words arriving at $q$: 
\[I_q=\{\alpha \in\pf L: q\in \delta(s, \alpha)\}.\]

In the paper we will   use the following   consequence of the Myhill-Nerode Theorem for regular languages: if $\mt A$ is the minimum DFA recognizing a language $\mt L$  and $\mt B$ is   any DFA  recognizing $\mt L$, then any $I_q$ with $q\in Q_{\mt A}$ is the union of a finite number of sets $I_u$, for $u\in Q_{\mt B}$.

A  {\em partial order} is a pair $ (Z, \le) $,   where     $ Z $ is a set and   $ \le $ is a binary relation on $ Z $ being reflexive, antisymmetric, and transitive. Any $  u, v \in Z $ are said to be $ \le $-{\em comparable} if either $ u \le v $ or $ v \le u $. We write $ u < v $ when $ u \le v $ and $ u \not = v $.
We write $ u ~\|~ v $ if $ u $ and $ v $ are not  $\leq$-comparable.  

A  partial order $ (Z, \le) $ is a {\em total order}     if     every pair of elements in $(Z, \le)$ are $\le$-comparable.   A subset $ Z' \subseteq Z $ is a $\le$-{\em chain} if $ (Z', \le) $ is a total order, and a family $ \{Z_i\}_{i = 1}^p $ is a $\le$-{\em chain partition} if $ \{Z_i\}_{i = 1}^p $ is a partition of $ Z $ and each $ Z_i $ is a $\le$-chain. 

The {\em width} of a partial order  $ (Z, \le) $ is the cardinality of a smallest $ \le$-chain partition.  A subset  $ U \subseteq Z $ is an $\le$-{\em antichain} if every pair of elements in $ U $ are not $ \le $-comparable. Dilworth's Theorem \cite{dilworth} states that the width of $ (Z, \le) $ is  equal to the cardinality of a largest $\le$-antichain.

A partial order   $\leq$ on $Z$   induces a   partial order over the  non-empty subsets of $Z$, still denoted by $\leq$ and defined   as follows:
$X\leq Y \Leftrightarrow ( X=Y) ~\lor~  \forall x \in X ~ \forall y\in Y ~( x<y)$.   If $ z \in Z $ and $ X \subseteq Z $, then $ z < X $ is a shortcut for $ \{z \} < X $ and $ X < z $ is a shortcut for $ X < \{z\} $.

A {\em monotone sequence} in a partial order $(Z, \le)$ is a sequence $(v_n)_{n\in \mathbb N}$ with $v_n\in V$ and either $v_i\le v_{i+1}$, for all $i$, or  $v_i\ge v_{i+1}$, for all $i$. 

Throughout the paper, we assume that  there is a fixed total order $ \preceq $ on $ \Sigma $ (in our examples, the alphabetical order), and we extend it \emph{co-lexicographically} to words in $\Sigma^*$, that is, $ \alpha \prec \beta $ if and only if the the reversed string $ \alpha^R $ is lexicographically smaller than the reversed string $ \beta^R $. 

If $\alpha\preceq  \alpha'\in \Sigma^*$, we define $[\alpha, \alpha']=\{\beta: \alpha\preceq \beta \preceq \alpha'\}$;  if the order among $\alpha, \alpha'$ is not known,  we set 
$[\alpha, \alpha']^{\pm}= 
[\alpha, \alpha']$, if $\alpha \preceq  \alpha'$, while  $[\alpha, \alpha']^{\pm}= 
[\alpha', \alpha]$,  if $\alpha' \preceq  \alpha$.

\section{Previous results}

In this section we contextualise  our work with respect to previously proved results in the field.

Wheeler automata/languages where introduced in \cite{GAGIE201767} as a means for transferring known and efficient string-manipulation indexes and techniques based on the Burrows-Wheeler transform \cite{burrows1994block} from strings  to languages. Given an ordered alphabet $ (\Sigma, \preceq) $, a  {\em Wheeler order} over an NFA  $\mathcal A=(Q, s, \delta, F)$ is   a \emph{total}  order  $ \leq $  over 
$Q$ in which  the initial state $s$  is the minimum and for every 
$ v_{1} \in \delta(u_{1},a_{1}) $ and $ v_{2} \in \delta(u_{2},a_{2}) $, with $ a_1, a_2 \in \Sigma $:
 \begin{enumerate}
	\item[(i)] $ a_{1}\prec a_{2}\rightarrow v_{1} < v_{2} $;
	\item[(ii)] $  (a_{1}=a_{2}  \wedge u_{1} < u_{2}) \rightarrow v_{1} \leq v_{2}$. 
\end{enumerate} 
As stated in the introduction, the class of Wheeler NFAs (i.e., NFAs endowed with a Wheeler order)  reduces membership-query time  from $O(\pi m)$ (where $\pi$ is the length of the query and $m$ is the size of the NFA)  to the optimal $O(\pi)$ after a suitable pre-processing (construction of an index) that can be done in time $O(m)$.    In \cite{DBLP:conf/esa/GibneyT19} the problem of deciding whether a given NFA admits a Wheeler order was proved to be NP-complete, while the same problem restricted to DFA or even to \emph{reduced} NFA (that is automata where different states are reached by different sets of words) becomes polynomial \cite{alanko2020wheeler}. A regular language is {\em Wheeler} if it is recognized by a Wheeler NFA. In \cite{DBLP:conf/soda/AlankoDPP20} it is proved that every Wheeler language is also recognized by some Wheeler \emph{deterministic} automaton. In \cite{alanko2020wheeler} it is also proved that it can be decided in polynomial time whether a regular language - given by means of a DFA recognizing the language - is Wheeler.  
Notice that the natural  generalization  obtained by considering all possible orders on the alphabet $ \Sigma $   leads to a larger class of languages, but, as proved in \cite{dagostino2021ordering},  the membership problem becomes  computationally more complicate ($NP$-complete instead of polynomial).

In \cite{NicNic2021} the totality  requirement  on  the order $\leq$ of states is dropped, thereby obtaining  the notion of {\em co-lexicographic order} over an NFA $\mathcal A=(Q, s, \delta, F)$. The order $ \le $ can now be \emph{partial} and (ii) is replaced with:
 (ii)'  $  (a_{1}=a_{2}  \wedge v_{1} < v_{2}) \rightarrow u_{1} \leq u_{2}$. 
 
Notice that (ii) and (ii)' are equivalent on \emph{total} orders. We measure the complexity of a co-lexicographic order by its width, so a Wheeler order is co-lexicographic order of width 1. 
  
In    \cite{NicNic2021}  it is   proved that the if an NFA admits a co-lexicographic order of width $p$, then:
\begin{enumerate}
    \item indexed pattern matching can always be solved in $\tilde O(\pi p^2)$ time (assuming constant-sized alphabet for simplicity);
    \item the standard powerset construction algorithm always produces an output whose size is exponentially bounded in $p$, rather than in the input's size;
    \item NFAs can be \emph{succinctly} encoded using $O(1+\log p)$ bits per edge  (assuming constant-sized alphabet for simplicity).
\end{enumerate}

Every automaton admits a co-lexicographic order \cite{NicNic2021}, so for every regular language we can define $ \text{width}^{N}(\mathcal L) $ ($\text{width}^{D}(\mathcal L) $) to be the minimum integer $ p $ such that there exists an NFA (DFA) accepting $ \mathcal{L} $ and admitting a co-lexicographic order of width $ p $. A language is Wheeler if and only if $ \text{width}^{N}(\mathcal L) = \text{width}^{D}(\mathcal L) = 1 $ \cite{DBLP:conf/soda/AlankoDPP20}. The following properties are proved in  a companion paper in preparation \cite{DBLP:journals/corr/abs-2102-06798}:
\begin{enumerate}
\item $ \text{width}^{N}(\mathcal L) $ and $\text{width}^{D}(\mathcal L) $ can be arbitrarily large;
\item There exist infinite languages $ \mathcal{L} $ such that $\text{width}^{D}(\mathcal L) > \text{width}^{N}(\mathcal L)$  (whereas in the Wheeler case equality always holds);
\item $\text{width}^N(\mathcal L)\leq  \text{width}^D(\mathcal L) \leq 2^{\text{width}^N(\mathcal L)} - 1 $ for every language $ \mathcal{L} $.
\end{enumerate}

Additionally, in the above mentioned   companion paper  we also  show that, for any given DFA, the best co-lexicographic order (that is, the one with minimum width) is the partial order that we  introduce below  in Definition \ref{def:det_automata_width}. 

In this paper we lift our point of view from automata to  a language-theoretic perspective,  
 tackling natural problems gravitating towards the following natural question: how do we determine the deterministic width of a language?

\section{The  width of a regular language: the Hasse automaton}\label{sec:ent}

 By identifying a state $ q $ with the set of strings $ I_q$ arriving at $q$ 
  we are  able to lift the co-lexicographic  order $\preceq$  to an order on the set of states of a DFA. This results in (partially) ordering states of a DFA by comparing strings arriving  at specific pairs of    states, as in the following definition. 
\begin{definition}\label{def:det_automata_width}
Let $ \mathcal A=(Q, s ,\delta, F) $ be a DFA. Let $ \preceq_\mathcal{A} $ be the partial order on $ Q $ such that for all $ q_{1}, q_{2}\in Q $ with $ q_1 \not = q_2 $: 
\begin{align*}
q_{1} \prec_{\mathcal A} q_{2} \iff I_{q_{1}} \prec    I_{q_{2}}.
\end{align*}
\end{definition}

An important measure of the complexity of a DFA $ \mathcal A$ is the ``distance" of the  partial order $\preceq_{\mathcal A}$ from a total order. Formally, such a distance is captured by the notion of width.  
\begin{definition}
The \emph{width} of a deterministic automaton, $\text {width}(\mathcal A)$,  is the width of the partial order $\preceq_{\mathcal A}$  or,  equivalently,  the maximum cardinality of an  antichain in $\preceq_{\mathcal A}$. 
\end{definition}

  We extend the notion of width from automata to languages, by considering  the best possible width among all automata recognizing the language.
  
\begin{definition}\label{def:det_width}
Given a regular language $\mathcal L$, its  \emph{deterministic width} is defined as follows:
\[\text {width}^D(\mathcal L)= \text{min} \{ \text{width}(\mathcal A)\mid ~ \mathcal A ~\text{is a DFA  } \wedge  \mathcal L(\mathcal A)=\mathcal L\}.\]
\end{definition}
Analogously, one may define the non-deterministic width. In this paper, we will only study the deterministic width of a language, so we shall  use the notation  $\text {width}(\mathcal L)$ instead of $\text {width}^D(\mathcal L)$. 

It would be convenient to have the width of a language equal to the width of its minimum automaton. Unfortunately, in general, this is not the case.

\begin{example}\label{ex:no_minimal}
In Figure \ref{fig:2minimumDFA}, the automaton  $ \mathcal{A}_1 $ on the left is a minimum DFA of width 3. The automata $ \mathcal{A}_2 $ and $ \mathcal{A}_3 $ in the center and on the right are two non-isomorphic automata of width 2 recognizing the same language recognized by $ \mathcal{A}_1 $ and having only one state more than $ \mathcal{A}_1 $. No DFA of width 1 recognizes this language. Hence, the width of the language is 2, the width of the minimum automaton is 3, and there is not a  unique minimum automaton among all DFAs of minimum width. More details can be found in Appendix \ref{appA}.
Although the concept of minimum automaton makes perfect sense for Wheeler languages \cite{DBLP:conf/soda/AlankoDPP20}, it may seem that no minimality result holds true for non-Wheeler languages. However, in section \ref{sec:minimal} we will explain why Example \ref{ex:no_minimal} is only seemingly undesirable, and we will derive an adequate notion of minimality.

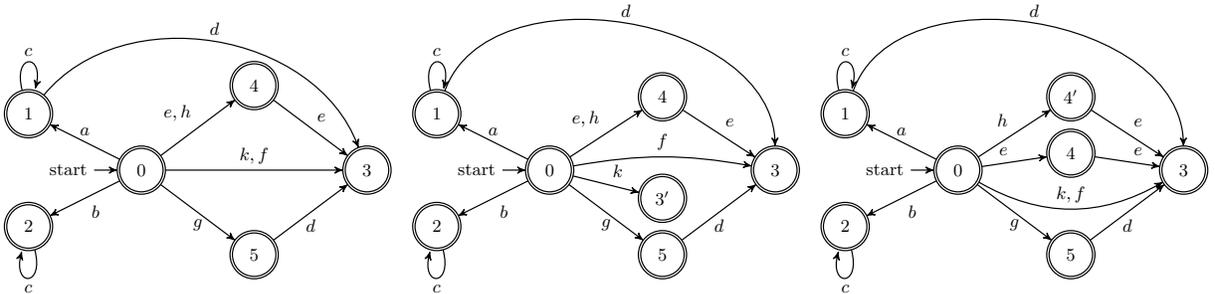
\begin{figure}[h!] 
 \begin{center}
 \scalebox{0.75}{ 
\begin{tikzpicture}[->,>=stealth', semithick, auto, scale=1]
\node[state,accepting,initial] (s)    at (0,0)	{$0$};

\node[state, accepting,label=above:{}] (1)    at (-2,1)	{$1$};
\node[state,accepting, label=above:{}] (2)    at (-2,-1)	{$2$};

\node[state, accepting,label=above:{}] (3)    at (4,0)	{$3$};
\node[state,accepting, label=above:{}] (4)    at (2,1.5)	{$4$};
\node[state,accepting, label=above:{}] (5)    at (2,-1.5)	{$5$};

\draw (s) edge [] node [above] {$a$} (1);
\draw (s) edge [] node [] {$b$} (2);

\draw (s) edge [] node [] {$k,f$} (3);
\draw (s) edge [] node [] {$e,h$} (4);
\draw (s) edge [] node [below] {$g$} (5);
\draw (5) edge [] node [below] {$d$} (3);
 \draw (4) edge [] node [] {$e$} (3);
 \draw (1) edge  [bend left=60, above] node [] {$d$} (3);
 \draw (1) edge  [loop above] node {$c$} (1);
 \draw (2) edge  [loop below] node {$c$} (2);
\end{tikzpicture}
}
 \scalebox{0.75}{ 
\begin{tikzpicture}[->,>=stealth', semithick, auto, scale=1]
\node[state,accepting,initial] (s)    at (-3,0)	{$0$};

\node[state, accepting,label=above:{}] (1)    at (-5,1)	{$1$};
\node[state,accepting, label=above:{}] (2)    at (-5,-1)	{$2$};

\node[state, accepting,label=above:{}] (3)    at (1,0)	{$3$};
 \node[state, accepting,label=above:{}] (3')    at (-1,-0.5)	{$3'$};
\node[state,accepting, label=above:{}] (4)    at (-1,1.3)	{$4$};
\node[state,accepting, label=above:{}] (5)    at (-1,-1.5)	{$5$};

\draw (s) edge [] node [above] {$a$} (1);
\draw (s) edge [] node [] {$b$} (2);

\draw (s) edge [] node [] {$k$} (3');
\draw (s) edge [bend left=10] node [] {$f$} (3);
\draw (s) edge [] node [] {$e,h$} (4);
\draw (s) edge [] node [below] {$g$} (5);
\draw (5) edge [] node [below] {$d$} (3);
 \draw (4) edge [] node [] {$e$} (3);
 \draw (1) edge  [bend left=80, above] node [] {$d$} (3);
 \draw (1) edge  [loop above] node {$c$} (1);
 \draw (2) edge  [loop below] node {$c$} (2);
\end{tikzpicture}
}
 \scalebox{0.75}{ 
\begin{tikzpicture}[->,>=stealth', semithick, auto, scale=1] 
 
 \node[state,accepting,initial] (sb)    at (4,0)	{$0$};

\node[state, accepting,label=above:{}] (1b)    at (2,1)	{$1$};
\node[state,accepting, label=above:{}] (2b)    at (2,-1)	{$2$};

\node[state, accepting,label=above:{}] (3b)    at (8,0)	{$3$};
 \node[state,accepting, label=above:{}] (4b')    at (6,1.3)	{$4'$};
\node[state,accepting, label=above:{}] (4b)    at (6,0.3)	{$4$};
\node[state,accepting, label=above:{}] (5b)    at (6,-1.5)	{$5$};

\draw (sb) edge [] node [above] {$a$} (1b);
\draw (sb) edge [] node [] {$b$} (2b);

\draw (sb) edge [bend right=30] node [] {$k, f$} (3b);
\draw (sb) edge [] node [] {$h$} (4b');
\draw (sb) edge [] node [] {$e$} (4b);
\draw (sb) edge [] node [below] {$g$} (5b);
\draw (5b) edge [] node [below] {$d$} (3b);
 \draw (4b) edge [] node [] {$e$} (3b);
  \draw (4b') edge [] node [] {$e$} (3b);
 \draw (1b) edge  [bend left=80, above] node [] {$d$} (3b);
 \draw (1b) edge  [loop above] node {$c$} (1b);
 \draw (2b) edge  [loop below] node {$c$} (2b);
\end{tikzpicture}
}
 \end{center}
 \caption{Three DFAs recognizing the same language.}
 \label{fig:2minimumDFA}.
\end{figure}
\end{example}

\vspace{-20pt}
We now exhibit a measure that on the minimum automaton will capture exactly the width of the accepted language: the  {\em entanglement number} of a DFA. 

 \begin{definition}\label{def:ent}
 Let $\mathcal B$ be a DFA with  set of  states $Q$. 
 \begin{enumerate}
     \item A subset $Q'\subseteq Q$ is {\em entangled} if there exists a monotone sequence  $(\alpha_i)_{i\in \mathbb N}$ such that for all $q' \in Q'$ it holds $\delta(s,\alpha_i)=q'$ for infinitely many $i$'s. In this case  the sequence $(\alpha_i)_{i\in \mathbb N}$ is said to be a {\em witness} for $Q'$. 
     \item A set $V\subseteq Pref (\mathcal{L(B)})$ is \emph{entangled}  if the set  $\{ \delta(s,\alpha)  ~|~ \alpha \in V\}$, consisting of all states occurring in $ V $, is entangled.
     
 \end{enumerate}
 Moreover, define: \begin{align*}
& {\text ent}(\mathcal B) = max\{|Q'| \mid Q'\subseteq Q \text{~and~} Q'\text{ is entangled } \}&\\
&  {\text ent}(\mathcal L) = {\text min} \{ {\text ent}(\mathcal B) ~|~ \mt B ~\text{is a DFA} ~ \land~\mt L(\mt B)=\mt L\}. &\\
\end{align*}
 \end{definition}

 \begin{remark}
Notice that any singleton $\{q\} \subseteq Q$  turns out to be entangled,  as  witnessed by the trivially monotone sequence $(\alpha_i)_{i\in \mathbb N}$ where all the $\alpha_i$'s are equal and $\delta(s,\alpha_i) = q$.
 \end{remark}
 
 For example, the entanglement of all  DFAs  in Figure  \ref{fig:2minimumDFA} is two,   because the only entangled subset of states is $\{1,2\}$. 
 
When two states $q\neq q'$ of a DFA  $ \mathcal{B} $  belong to an entangled set,   there are words $\alpha\prec \beta\prec \alpha'$ such that $\alpha,\alpha'\in I_q, \beta\in I_{q'}$, and so that neither $I_q\prec I_{q'}$ nor  $I_{q'}\prec I_{q}$ can hold. In other words,  two distinct states $q, q'$  belonging to an entangled set are always $\preceq_{\mt B}$-incomparable.  More generally,   the entanglement of a DFA is always smaller than  or equal to its width.
 \begin{lemma}\label{lem:ent_leq_width} 
  Let $ \mathcal B$ be a DFA. Then
  $\text {ent}(\mathcal B)\leq \text{width}(\mathcal B).$
 \end{lemma}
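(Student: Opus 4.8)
The plan is to show that any entangled subset $Q'\subseteq Q$ is a $\preceq_{\mathcal B}$-antichain; this immediately gives $\text{ent}(\mathcal B)\leq\text{width}(\mathcal B)$, since the width is the maximum size of an antichain and $\text{ent}(\mathcal B)$ is the maximum size of an entangled set. So it suffices to fix an entangled $Q'$ with witness $(\alpha_i)_{i\in\mathbb N}$ and prove that no two distinct $q_1,q_2\in Q'$ are $\preceq_{\mathcal B}$-comparable, i.e. that neither $I_{q_1}\prec I_{q_2}$ nor $I_{q_2}\prec I_{q_1}$ holds.

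First I would recall the definition of $\prec$ lifted to sets: $I_{q_1}\prec I_{q_2}$ means $I_{q_1}\neq I_{q_2}$ and every word in $I_{q_1}$ is $\prec$-below every word in $I_{q_2}$. Since $q_1\neq q_2$ in a DFA we automatically have $I_{q_1}\cap I_{q_2}=\emptyset$, hence $I_{q_1}\neq I_{q_2}$, so comparability reduces to one of the two "all-below" conditions holding. Now use the witness. Suppose the witness is increasing, $\alpha_i\preceq\alpha_{i+1}$ for all $i$ (the decreasing case is symmetric). Both $q_1$ and $q_2$ are hit infinitely often, so we can extract indices $i<j<k$ with $\delta(s,\alpha_i)=q_1$, $\delta(s,\alpha_j)=q_2$, $\delta(s,\alpha_k)=q_1$. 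Monotonicity gives $\alpha_i\preceq\alpha_j\preceq\alpha_k$, with $\alpha_i,\alpha_k\in I_{q_1}$ and $\alpha_j\in I_{q_2}$.

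From this interleaving the incomparability is immediate: if $I_{q_1}\prec I_{q_2}$ held, then in particular $\alpha_k\prec\alpha_j$ (taking $\alpha_k\in I_{q_1}$, $\alpha_j\in I_{q_2}$), contradicting $\alpha_j\preceq\alpha_k$; and if $I_{q_2}\prec I_{q_1}$ held, then $\alpha_j\prec\alpha_i$, contradicting $\alpha_i\preceq\alpha_j$. One small point to handle carefully: the witness may have $\alpha_i=\alpha_j$ if the $\alpha$'s are not strictly increasing, but since $q_1\neq q_2$ the same word cannot arrive at both states in a DFA, so the relevant inequalities among $\alpha_i,\alpha_j,\alpha_k$ are strict exactly where needed, or at worst one can push $j$ and $k$ further out to separate the values. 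Either way both putative comparabilities are refuted, so $q_1\parallel_{\mathcal B} q_2$. Since $q_1,q_2$ were arbitrary distinct elements of $Q'$, the set $Q'$ is an antichain, and by Dilworth's Theorem (or directly by the definition of width) $|Q'|\leq\text{width}(\mathcal B)$. Taking the maximum over entangled $Q'$ yields $\text{ent}(\mathcal B)\leq\text{width}(\mathcal B)$.

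I do not expect any real obstacle here — the statement is essentially a restatement of the remark preceding the lemma, formalized. The only thing requiring a line of care is the extraction of the three interleaved indices from the "infinitely often" hypothesis and the bookkeeping that the word hitting $q_1$ appears both before and after a word hitting $q_2$ along the monotone sequence; everything else is unwinding definitions.
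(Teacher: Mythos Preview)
Your proposal is correct and follows exactly the approach the paper sketches in the paragraph immediately preceding the lemma: show that any entangled set of states is a $\preceq_{\mathcal B}$-antichain by extracting from the monotone witness three words $\alpha\prec\beta\prec\alpha'$ with $\alpha,\alpha'\in I_{q_1}$ and $\beta\in I_{q_2}$, which rules out both $I_{q_1}\prec I_{q_2}$ and $I_{q_2}\prec I_{q_1}$. The paper treats this as essentially immediate and gives no further details, so your write-up is in fact more explicit than what appears there.
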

 
 The converse of the above inequality is not always true:   for the (minimum) DFA $\mt A_1 $ on the left of  Figure \ref{fig:2minimumDFA} we have  $\text {ent}(\mathcal A)=2,  \text{width}(\mathcal A)=3$.
 
We can easily prove that the    entanglement of a regular  language is realized by the minimum automaton accepting the language. 
 
 \begin{lemma}\label{lem:minent}
If  $\mathcal A $ is the minimum   DFA recognizing $\mathcal L$ then
 $\text{ent}(\mathcal A)= \text{ent}(\mathcal L)$.
 \end{lemma}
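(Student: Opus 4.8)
The plan is to show both inequalities $\mathrm{ent}(\mathcal A)\geq \mathrm{ent}(\mathcal L)$ and $\mathrm{ent}(\mathcal A)\leq \mathrm{ent}(\mathcal L)$ for $\mathcal A$ the minimum DFA of $\mathcal L$. The first is immediate from the definition of $\mathrm{ent}(\mathcal L)$ as a minimum over all DFAs recognizing $\mathcal L$, since $\mathcal A$ is one such DFA. So the entire content is the reverse inequality: for every DFA $\mathcal B$ with $\mathcal L(\mathcal B)=\mathcal L$ we must produce, from an entangled set $Q'\subseteq Q_{\mathcal B}$ of maximum size, an entangled set in $Q_{\mathcal A}$ of at least the same size; equivalently, $\mathrm{ent}(\mathcal A)\leq \mathrm{ent}(\mathcal B)$ for all such $\mathcal B$, and then in particular $\mathrm{ent}(\mathcal A)\leq \mathrm{ent}(\mathcal L)$.

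Actually it is cleaner to argue in the other direction: take a maximum entangled set $U\subseteq Q_{\mathcal A}$ with $|U|=\mathrm{ent}(\mathcal A)=k$, witnessed by a monotone sequence $(\alpha_i)_{i\in\mathbb N}$ that hits each $q\in U$ infinitely often, and show this same sequence witnesses an entangled set of size $k$ in any DFA $\mathcal B$ recognizing $\mathcal L$. Consider the sequence of states $(\delta_{\mathcal B}(s_{\mathcal B},\alpha_i))_{i\in\mathbb N}$ in $\mathcal B$. By the pigeonhole principle there is a state $u\in Q_{\mathcal B}$ reached by infinitely many $\alpha_i$; passing to that infinite subsequence (still monotone, since a subsequence of a monotone sequence is monotone) we may repeat the argument, and since $Q_{\mathcal B}$ is finite we can extract, by a finite iteration of pigeonhole over subsequences, an infinite sub-subsequence $(\alpha_{i_j})_j$ all of whose terms are read by $\mathcal B$ to states lying in a single set $U'\subseteq Q_{\mathcal B}$, with each state of $U'$ hit infinitely often. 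Thus $U'$ is entangled in $\mathcal B$, giving $|U'|\leq\mathrm{ent}(\mathcal B)$. The key remaining point is that the map $\delta_{\mathcal A}(s_{\mathcal A},\alpha_{i_j})\mapsto \delta_{\mathcal B}(s_{\mathcal B},\alpha_{i_j})$ is well defined and \emph{injective} on $U$: this is exactly where the Myhill–Nerode consequence stated in the excerpt enters. Since $\mathcal A$ is the minimum DFA, distinct states $q_1\neq q_2\in U$ have disjoint $I_{q_1},I_{q_2}$, and each $I_{q}$ ($q\in Q_{\mathcal A}$) is a union of sets $I_u$ for $u\in Q_{\mathcal B}$; so two words landing on different states of $\mathcal A$ cannot land on the same state of $\mathcal B$ (a single $I_u$ cannot straddle two classes $I_{q_1}, I_{q_2}$). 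Hence the states of $\mathcal B$ visited along the sub-subsequence, restricted to indices where $\mathcal A$ is in $U$, form at least $k$ distinct states, all hit infinitely often and witnessed by the same monotone subsequence — so $k=|U|\leq |U'|\leq\mathrm{ent}(\mathcal B)$.

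Combining, $\mathrm{ent}(\mathcal A)\leq \mathrm{ent}(\mathcal B)$ for every DFA $\mathcal B$ recognizing $\mathcal L$, hence $\mathrm{ent}(\mathcal A)\leq \mathrm{ent}(\mathcal L)$; together with $\mathrm{ent}(\mathcal A)\geq\mathrm{ent}(\mathcal L)$ this gives the claim. The main obstacle, and the step deserving the most care, is the double extraction of subsequences together with the bookkeeping that after passing to the final infinite subsequence \emph{every} state of the relevant set is still visited infinitely often — one should phrase the finite iteration of pigeonhole carefully (e.g. process the states of $Q_{\mathcal B}$ one at a time, at each stage either discarding a state visited only finitely often or keeping it, always retaining an infinite subsequence) — and then verifying that injectivity of $U\to Q_{\mathcal B}$ survives the restriction to this subsequence. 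Monotonicity is preserved automatically, so no extra work is needed there.
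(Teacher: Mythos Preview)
Your proof is correct and follows the same core idea as the paper: start from a maximum entangled set in $\mathcal A$, use Myhill--Nerode to see that each $I_{q_j}$ is a finite union of sets $I_u$ with $u\in Q_{\mathcal B}$, and apply pigeonhole to find $k$ distinct $\mathcal B$-states hit infinitely often. The paper's argument is shorter, however, because it never passes to a subsequence: the definition of ``entangled'' only requires that each state in the set be visited infinitely often, not that \emph{only} those states be visited, so the original monotone sequence $(\alpha_i)$ already witnesses the entanglement of $\{u_1,\dots,u_k\}$ in $\mathcal B$; your entire extraction step can be dropped. (A minor point: the ``map'' you describe from $\mathcal A$-states to $\mathcal B$-states is not well defined as a function on $U$---different words reaching the same $q\in U$ may reach different $u$'s in $\mathcal B$---but what you actually use, and what is true, is that the map $u\mapsto q$ with $I_u\subseteq I_q$ is well defined, which is enough for the injectivity argument.)
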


  Our aim is to prove  that   $\text{width} (\mt L(\mt A))= \text{ent} (\mt A)$ for $\mathcal A$ minimum. To prove this equality, we proceed as follows.  We first prove that, given a DFA $\mt B$,  there exists an equivalent DFA  $\mt B'$ such that   $\text{ent} (\mt B)= \text{width} (\mt B')$, that is, the automaton $\mt B'$ realizes the entanglement of $\mt B$ as its width. 
  
In order to give some intuition on the construction of the automaton $\mt B'$  we  use the {\em trace of a DFA},  that is the (in general) transfinite   sequence: $(\delta(s, \alpha))_{\alpha \in (\pf L,\preceq)}$ indexed over the totally ordered set $(\pf L,\preceq)$. We depict below a hypothetical $(\pf L,\preceq)$,  together with the  trace left by a  DFA $\mt B$ with set of states $\{q_1, q_2, q_3 \}$ and  $\delta(s, \alpha_i)=\delta(s, \alpha')=q_1$, 
  $\delta(s, \beta_i)= \delta(s, \beta_i')=q_2$, and  $\delta(s, \gamma_i)=q_3$: 
  \begin{center}
  \small
   \begin{tabular}{ccccccccccccccccccccccccccccccccccccccccccc}
&$\alpha_1$& $ \prec $ &$\beta_1$& $ \prec $ &$\alpha_2$& $ \prec $ &$\beta_2$& $ \prec $ & $ \ldots $ & $ \prec $ & $\alpha_i$& $ \prec $ &$\beta_i$& $ \prec $ & $ \ldots $ & $ \prec $ &$\beta_1'$ & $ \prec $ &$\gamma_1$& $ \prec $ &$\beta_2'$& $ \prec $ &$\gamma_2$& $ \prec $ & $ \ldots $ & $ \prec $ &$\beta_i'$& $ \prec $ &$\gamma_i$& $ \prec $ & $ \ldots $ & $ \prec $ & $\alpha'$&\\
 &$q_1$&&$q_2$&&$q_1$&&$q_2$&&$\ldots$&&$q_1$&&$q_2$&&$\ldots $&&$q_2$&&$q_3$&&$q_2$&&$q_3$&&$\ldots$&& $q_2$&&$q_3$&&$\ldots $&&$q_1$
\end{tabular}
\end{center}
   
Consider  the entanglement and  width of $\mt B$.  Notice that the sets $\{q_1, q_2\}$ and $\{q_2,q_3\}$ are entangled. The set  $\{q_1, q_3\}$ is not entangled  and therefore the set $\{q_1, q_2, q_3\}$ is not entangled.  However, $\{q_1, q_2, q_3\}$ contains pairwise incomparable states. 
 Hence the whole triplet $ \{q_1, q_2, q_3 \} $ does not contribute  to the entanglement but does contribute to the width  so that    $\text{ent}(\mt B)=2< \text{width}(\mt B)=3$. \\ 
In general,   an automaton where     incomparability  and  entanglement coincide  would have $\text{ent}(\mt B)=  \text{width}(\mt B)$. Hence, we would like   to force \emph{all} sets of incomparable states in $\mt B'$ to be entangled.  
 To this end, we will first slice $\pf L$ into  convex sets where incomparable states are entangled,  that is, 
   we will prove that there always exists a finite, ordered partition $\mt V=\{V_1, \ldots, V_r\}$ of 
$\pf L$ composed of  $\mt B$-entangled convex sets. In the    example above  we can write $\pf L=V_1\cup V_2\cup V_3 $, where: 
\[V_1=\{\alpha_1,\beta_1\ldots, \alpha_i, \beta_i, \ldots\}, V_2=\{\beta_1',\gamma_1, \ldots \beta_i', \gamma_i\ldots \},V_3=\{\alpha'\}\]
and the states occurring (and entangled) in $V_1,V_2,V_3$, respectively, are: $\{q_1,q_2\}$, $\{q_2,q_3\}$, and $\{q_1\}$. 
To eliminate  the pairwise incomparability  of $q_1,q_2,q_3$ we need to duplicate some of the original states. To this end, we will consider an equivalence relation  $\sim_{\mt V}^{\mt B}$ on $\pf L$ such that two strings are equivalent if and only if they are in the same $ I_q $ and all entangled $ V_i$'s between the two strings intersecate $I_q $. In the above example we have $\alpha_i \sim_{\mt V}^{\mt B} \alpha_j$  for all  integers $i,j$ but $\alpha_1\not \sim_{\mt V}^{\mt B} \alpha'$ because $\alpha_1\prec  V_2\prec \alpha'$ and $V_2\cap I_{q_1}=\emptyset$. 

In general, we will prove that the equivalence $\sim_{\mt V}^{\mt B}$    decomposes  the  set  of   words arriving in a state $q$ into a {\em finite} number of $\sim_{\mt V}^{\mt B}$-classes  and  induces a well-defined quotient automaton $ \mathcal{B'} $ equivalent to $ \mathcal{B} $.
It will turn out   that  if the $\mt B'$-states $ [q_1]_{\sim_{\mt V}^{\mt B}}, \ldots,  [q_k]_{\sim_{\mt V}^{\mt B}}$ are  pairwise $\preceq_{\mt B'}$-incomparable, then there exists $V\in \mt V$ such that $\{q_1, \ldots, q_k\}\subseteq V$; since $V$ is entangled  the set of states $ \{q_1, \ldots, q_k \} $  contribute  to the  entanglement number of $\mt B$ and we  will obtain $\text{width}({\mt B'})= \text{ent}({\mt B}) $.  

In our example, the new automaton $\mt B'$  will leave  the following trace:
  \begin{center}
  \small
   \begin{tabular}{ccccccccccccccccccccccccccccccccccccccccccc}
&$\alpha_1$&$ \prec $&$\beta_1$&$ \prec $&$\alpha_2$&$ \prec $&$\beta_2$& $ \prec $&$\ldots$& $ \prec $& $\alpha_i$& $ \prec $&$\beta_i$& $ \prec $&$\ldots$& $ \prec $&$\beta_1'$ & $ \prec $&$\gamma_1$& $ \prec $&$\beta_2'$& $ \prec $&$\gamma_2$& $ \prec $&$\ldots$& $ \prec $&$\beta_i'$& $ \prec $&$\gamma_i$& $ \prec $&$\ldots$& $ \prec $& $\alpha'$&\\
 &$q_1$&&$q_2$&&$q_1$&&$q_2$&&$\ldots$&&$q_1$&&$q_2$&&$\ldots$&&$q_2$&&$q_3$&&$q_2$&&$q_3$&&$\ldots$&& $q_2$&&$q_3$&&$\ldots$ &&$q_1'$
\end{tabular}
\end{center}

and     $\text{ent}(\mt B )=\text{width}(\mt B')=2$. 

\medskip

Formally, let us  start with the existence of  a finite decomposition $\mt V$. The following theorem will be a special case of a combinatorial property of convex subsets of an arbitrary linear order (see Theorem  \ref{thm:generalfdt} and Remark \ref{rem:special} in Appendix \ref{app:minimal_convex}). 

\begin{theorem}\label{thm:fdt} If $\mt B$ is a DFA then there exists a finite  partition $\mt V$  of $\pf {L(\mathcal{B})}$ whose elements  are  convex in $(\pf{L(\mathcal{B})}, \preceq)$ and entangled in $\mt B$.  
\end{theorem}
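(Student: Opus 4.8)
The plan is to prove the statement as an instance of a purely order-theoretic fact: \emph{every linear order equipped with a finite colouring admits a finite partition into convex pieces each of which is entangled}. Concretely, put $Z=\pf{L(\mathcal B)}$ ordered by the co-lex order $\preceq$, and colour $\alpha\in Z$ by the state $c(\alpha):=\delta(s,\alpha)$; there are finitely many colours, and by Definition~\ref{def:ent} a convex $V\subseteq Z$ is entangled in $\mathcal B$ exactly when the colour set $c(V):=\{c(\alpha):\alpha\in V\}$ is an entangled set of states. So it suffices to produce a finite partition of $Z$ into convex sets whose colour sets are entangled.

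The main tool is the Dedekind completion. Let $\bar Z$ be the completion of $(Z,\preceq)$ with a bottom $-\infty$ and a top $+\infty$ adjoined, so $\bar Z$ is a complete linear order and hence compact in its order topology. For $p\in\bar Z$ I will use $C^-(p)$, the set of colours occurring \emph{cofinally below} $p$ (those $q$ such that every $x\prec p$ has some $\alpha\in Z$ with $x\prec\alpha\prec p$ and $c(\alpha)=q$), and symmetrically $C^+(p)$ for colours occurring \emph{coinitially above} $p$. The only property of these sets that the argument needs is: \emph{whenever $C\subseteq C^-(p)$ or $C\subseteq C^+(p)$, the colour set $C$ is entangled}. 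For $|C|\le 1$ this is immediate (a singleton is entangled via a constant sequence); for $C=\{q_1,\dots,q_k\}\subseteq C^-(p)$ one builds a strictly increasing sequence in $Z$ lying below $p$ that cycles through $q_1,\dots,q_k$ (each successive term exists by cofinality below $p$), which is monotone and meets every $q_j$ infinitely often; the $C^+$ case is the mirror image using a decreasing sequence.

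With this in hand the construction is: first, for each $p\in\bar Z$ choose an open interval $U_p=(a_p,b_p)\ni p$ of $\bar Z$ (modified in the obvious way when $p$ is an endpoint) with $c\big((a_p,p)\cap Z\big)=C^-(p)$ and $c\big((p,b_p)\cap Z\big)=C^+(p)$; this is possible because, as $x\uparrow p$, the sets $c((x,p)\cap Z)$ form a chain of subsets of a finite set, hence have a minimum, which equals $C^-(p)$, and symmetrically on the right. Then $\{U_p:p\in\bar Z\}$ is an open cover of the compact $\bar Z$; fix a finite subcover $U_{p_1},\dots,U_{p_m}$ and let $P=\{-\infty,+\infty\}\cup\{a_{p_i},p_i,b_{p_i}:1\le i\le m\}$, a finite subset of $\bar Z$. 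The points of $P$ together with the open gaps of $\bar Z\setminus P$ between consecutive points of $P$ partition $\bar Z$; intersecting with $Z$ gives a finite partition $\mathcal V$ of $Z$ into convex sets: the singletons $\{p\}$ for $p\in P\cap Z$, and the sets $(x,y)\cap Z$ for consecutive $x,y\in P$. Singletons are entangled. For a nonempty piece $(x,y)\cap Z$, pick $z\in(x,y)$; then $z\in U_{p_i}$ for some $i$, and since $a_{p_i},b_{p_i}\in P$ while no point of $P$ lies strictly between $x$ and $y$, we get $a_{p_i}\preceq x\prec y\preceq b_{p_i}$, hence $(x,y)\subseteq(a_{p_i},b_{p_i})$. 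As $p_i\in P$ lies outside $(x,y)$, either $p_i\preceq x$, giving $(x,y)\subseteq(p_i,b_{p_i})$ and $c((x,y)\cap Z)\subseteq C^+(p_i)$, or $y\preceq p_i$, giving $(x,y)\subseteq(a_{p_i},p_i)$ and $c((x,y)\cap Z)\subseteq C^-(p_i)$; either way the emphasised property above shows the piece is entangled, and the convexity of all pieces is clear. This proves the theorem.

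The step I expect to be the real obstacle, and where a naive approach breaks, is the passage from ``small neighbourhoods'' to the partition: a small neighbourhood of $p$ carries the colour set $C^-(p)\cup\{c(p)\}\cup C^+(p)$, which in general is \emph{not} entangled, so one cannot take a finite subcover of neighbourhoods directly as the partition. The fix is to also cut at the centres $p_i$ themselves (this is why $P$ contains the $p_i$ and not merely the endpoints $a_{p_i},b_{p_i}$); once that is seen, compactness plus the self-witnessing of $C^{\pm}$ closes the argument with no further combinatorics. The remaining work is routine bookkeeping about the completion and the endpoint/isolated-point edge cases.
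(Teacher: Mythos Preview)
Your proof is correct and takes a genuinely different route from the paper's. The paper argues combinatorially (Theorem~\ref{thm:generalfdt}): an outer induction on the number $m$ of colours, inside which an explicit splitting procedure removes all occurrences of $(P_{\pi(1)}\cdots P_{\pi(m)})^2$ for every permutation $\pi$, followed by an inner induction on the number of permutations for which $P_{\pi(1)}\cdots P_{\pi(m)}$ still occurs in a piece. You instead pass to the Dedekind completion, note that each one-sided cofinal colour set $C^{\pm}(p)$ is automatically entangled, use compactness of $\bar Z$ to extract finitely many cut points, and then observe that every open gap of the resulting partition sits entirely inside some $(a_{p_i},p_i)$ or $(p_i,b_{p_i})$, whence its colour set is contained in $C^{-}(p_i)$ or $C^{+}(p_i)$. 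Your insight that one must also cut at the centres $p_i$ (not only at $a_{p_i},b_{p_i}$) is exactly what makes the compactness argument go through.

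Your approach is shorter and more conceptual; the paper's is elementary and constructive. One difference worth flagging: the paper in fact establishes the \emph{stronger} notion of entanglement from the appendix (Definition~\ref{def:entangled}), where the witnessing monotone sequence lies \emph{inside} the piece; this local version is what Lemma~\ref{lem:union} and the subsequent development rely on. Your argument only yields the main-text Definition~\ref{def:ent}(2): the witness you build lives in $(a_{p_i},p_i)$ or $(p_i,b_{p_i})$, not necessarily in the smaller gap $(x,y)$ when $p_i\prec x$ or $y\prec p_i$ strictly. That is enough for Theorem~\ref{thm:fdt} as literally stated, but if you want the local version too, you would need to refine so that each nonempty gap has the relevant $p_i$ as one of its endpoints.
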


A partition $\mt V$ as  the one above is called an \emph{entangled, convex decomposition} of $\mt B$, or \emph{e.c. decomposition}, for short. If $ \mathcal{V} $ has also minimum cardinality, it is called a \emph{minimum-size} e.c. decomposition.  We are interested in minimum-size e.c. decompositions because they enforce  additional properties (see Remark \ref{rem: minimality} in Appendix \ref{app:minimal_convex}) implying that the relation that will be introduced in the following definition is right-invariant, so making it possible to define a quotient automaton. Note that in general a minimum-size e.c. decomposition of a DFA is not unique (see Example \ref{ex:no_uniqueness} in Appendix \ref{app:minimal_convex}).

\begin{definition}\label{def:sim}
Let $ \mathcal B$ be a DFA and let $\mt V$ be a minimum-size e.c. decomposition of $ \mathcal{B} $. The equivalence relation $\sim_{\mt V}^\mt B$  on  $ \pf {L(B)) }$  is defined by: 
\begin{align*}
    \alpha\sim_{\mt V}^\mt B \alpha' & \ \Leftrightarrow \   
  \delta(s,\alpha)=\delta(s,\alpha')  \wedge
  (\forall V \in \mt V) \;(\min \{\alpha, \alpha'\} \prec V \prec \max \{\alpha, \alpha'\} \rightarrow  V\cap I_{\delta(s,\alpha)}\neq \emptyset)
\end{align*}

When $ \mathcal{B} $ is clear from the context, we will simply write $ \sim_{\mt V} $.
\end{definition}

\begin{lemma}\label{lem:sim_finite}
Let $ \mathcal B$ be a DFA and let $\mt V$ be a minimum-size e.c. decomposition of $ Pref (\mathcal{L(B)})$. Then, $\sim_{\mt V}$ has a finite number of classes on  $\pf {L(\mt B)}$ and 
$\mt L(\mt B)$ is equal to the union of some $\sim_\mathcal{V}$-classes.
\end{lemma}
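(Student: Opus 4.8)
The plan is to prove the two assertions separately: first that $\sim_{\mt V}$ has finite index, and then that $\mt L(\mt B)$ is a union of $\sim_{\mt V}$-classes. For finiteness, fix a state $q \in Q$; I would show that the set $I_q$ is split by $\sim_{\mt V}$ into at most $|\mt V|$ classes. The key observation is that, within $I_q$, the equivalence $\sim_{\mt V}$ only "cuts" when we cross some $V \in \mt V$ with $V \cap I_q = \emptyset$. Concretely, list the elements of $\mt V$ in their natural order $V_1 \prec V_2 \prec \cdots \prec V_r$ (they are convex and partition $\pf{L(\mt B)}$, so they are linearly ordered). For $\alpha \in I_q$, consider the index $i(\alpha)$ of the block $V_i$ containing $\alpha$. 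I claim that $\alpha \sim_{\mt V} \alpha'$ iff no block strictly between $V_{i(\alpha)}$ and $V_{i(\alpha')}$ is disjoint from $I_q$ — but more usefully, $\sim_{\mt V}$ restricted to $I_q$ has as classes exactly the "runs" of consecutive blocks among those $V_i$ meeting $I_q$, merged across any string of blocks all of which meet $I_q$. Since there are at most $r = |\mt V|$ blocks, there are at most $r$ such runs, hence at most $r$ classes of $\sim_{\mt V}$ inside $I_q$. Summing over the finitely many states $q \in Q$ gives that $\sim_{\mt V}$ has at most $|Q|\cdot|\mt V|$ classes on $\pf{L(\mt B)}$, which is finite.

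For the second assertion, I would argue that whether $\alpha \in \mt L(\mt B)$ depends only on the $\sim_{\mt V}$-class of $\alpha$. Suppose $\alpha \sim_{\mt V} \alpha'$; by definition $\delta(s,\alpha) = \delta(s,\alpha') =: q$. Since $\mt B$ is deterministic, $\alpha \in \mt L(\mt B)$ iff $q \in F$ iff $\alpha' \in \mt L(\mt B)$. Hence membership in $\mt L(\mt B)$ is constant on each $\sim_{\mt V}$-class, so $\mt L(\mt B)$ is the union of those classes consisting of words reaching a final state. (One should note $\mt L(\mt B) \subseteq \pf{L(\mt B)}$, which holds by the standing assumption that every state reaches a final state, so in particular final states are reachable and every accepted word is a prefix of an accepted word.)

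The only genuinely delicate point is the combinatorial claim in the finiteness argument: that the classes of $\sim_{\mt V}$ inside a fixed $I_q$ really are "runs of blocks meeting $I_q$, possibly merged across intervening blocks that also meet $I_q$." The subtlety is that the condition in Definition~\ref{def:sim} quantifies over all $V$ strictly between $\alpha$ and $\alpha'$ (not just between their blocks), so I must check transitivity and the run-structure carefully: if $\alpha \prec \beta \prec \alpha'$ all lie in $I_q$ and $\alpha \sim_{\mt V} \alpha'$, then every block between $\alpha$ and $\beta$ is also between $\alpha$ and $\alpha'$ and hence meets $I_q$, giving $\alpha \sim_{\mt V} \beta$; conversely one stitches together consecutive equivalences. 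I expect this bookkeeping — verifying that $\sim_{\mt V}$ is genuinely an equivalence relation and that its restriction to each $I_q$ has the claimed run structure — to be the main (though routine) obstacle; the rest is immediate from determinism and the finiteness of $\mt V$ and $Q$.
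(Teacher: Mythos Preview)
Your proposal is correct and follows essentially the same approach as the paper. The paper's argument is even terser: it simply observes that every $\sim_{\mt V}$-class is a union of sets of the form $V\cap I_q$ (since any two words in the same $V\cap I_q$ are trivially $\sim_{\mt V}$-equivalent), and there are at most $|\mt V|\cdot|Q|$ such sets; your detailed analysis of the ``run structure'' inside each $I_q$ is accurate but more than is needed for the bare finiteness claim, while your treatment of the second assertion is identical to the paper's.
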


In order to prove that $\sim_{\mt V}$ is a right-invariant equivalence relation, we will first show a convenient consequence of considering $ \emph{minimum-size} $ e.c. decompositions: the relation $\sim_{\mt V}$ does not depend on the choice of $ \mathcal{V} $.

 \begin{lemma}\label{lem:easy_sim}
 Let $\mt B$ be a DFA and  let $\mt V$ be a minimum-size e.c. decomposition of $\mt B$. 
Then  $\alpha \sim_{\mt V} \alpha'$
    if and only if $\delta(s,\alpha)= \delta(s, \alpha')$ and   the interval $[\alpha, \alpha']^\pm$ is contained in a  finite union of convex, entangled sets  $C_1, \ldots,C_n$ in $\pf {L( \mt B)}$, with $C_i\cap I_{\delta(s,\alpha)}\neq \emptyset$ for all $i=1, \ldots,n$. In particular, $ \sim_{\mt V} $ is independent of the choice of $ \mathcal{V} $.
\end{lemma}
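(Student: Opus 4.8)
The plan is to prove the two implications directly and read off the independence statement as an immediate corollary, since the condition stated on the right never mentions $\mathcal V$. Both sides of the biconditional require $\delta(s,\alpha)=\delta(s,\alpha')$, which is also part of Definition~\ref{def:sim}, so I assume $\delta(s,\alpha)=\delta(s,\alpha')=:q$ and, by symmetry, $\alpha\preceq\alpha'$; the case $\alpha=\alpha'$ is immediate (take $n=1$ with $C_1$ the block of $\mathcal V$ containing $\alpha$), so I assume $\alpha\prec\alpha'$. Two elementary facts will be used throughout: (a) disjoint convex subsets of a linear order are $\prec$-comparable, hence the blocks of the finite partition $\mathcal V$ form a $\prec$-chain $B_1\prec\cdots\prec B_r$, and if $\alpha\in B_j$, $\alpha'\in B_k$ then $j\le k$, the blocks meeting $[\alpha,\alpha']$ are exactly $B_j,\dots,B_k$, and every $B_i$ with $j<i<k$ satisfies $\alpha\prec B_i\prec\alpha'$ and $B_i\subseteq[\alpha,\alpha']$; (b) a convex subset of $\pf{L(\mathcal B)}$ whose set of states is contained in an entangled set of states is itself entangled (the same monotone witness works), so any convex subset of an entangled convex set is entangled. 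For the forward implication, assuming $\alpha\sim_{\mathcal V}\alpha'$ I take $\{C_1,\dots,C_n\}:=\{B_j,\dots,B_k\}$: these are convex and entangled because $\mathcal V$ is an e.c. decomposition, they cover $[\alpha,\alpha']$ because $\mathcal V$ partitions $\pf{L(\mathcal B)}$, and each meets $I_q$ since $B_j\ni\alpha$, $B_k\ni\alpha'$, and for $j<i<k$ we have $\alpha\prec B_i\prec\alpha'$, so $B_i\cap I_q\neq\emptyset$ directly by Definition~\ref{def:sim}.

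For the converse, I assume $[\alpha,\alpha']\subseteq C_1\cup\cdots\cup C_n$ with each $C_\ell$ convex, entangled and meeting $I_q$, and suppose toward a contradiction that $\alpha\not\sim_{\mathcal V}\alpha'$. By Definition~\ref{def:sim} there is a block $B_t$ with $\alpha\prec B_t\prec\alpha'$ and $B_t\cap I_q=\emptyset$; then $B_t\subseteq[\alpha,\alpha']$ and the neighbours $B_{t-1},B_{t+1}$ exist (as $j\le t-1$, $t+1\le k$). The members of the family meeting $B_t$ cover $B_t$, and each such $C_\ell$ contains a point of $I_q$ which, since $B_t\cap I_q=\emptyset$ and $B_t$ is convex, is $\prec B_t$ or $\succ B_t$; being convex and meeting $B_t$, such a $C_\ell$ has $C_\ell\cap B_t$ equal to a $\prec$-initial (resp. $\prec$-final) segment of $B_t$. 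Taking unions, $B_t=D^-\cup D^+$ with $D^-$ an initial and $D^+$ a final segment of $B_t$; initial segments forming a chain under inclusion, $D^-=C^-\cap B_t$ and $D^+=C^+\cap B_t$ for single members $C^-,C^+$ of the family, and one checks that $C^-$ meets $B_{t-1}$ and $C^+$ meets $B_{t+1}$.

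At this point I invoke the additional structural properties of minimum-size e.c. decompositions from Remark~\ref{rem: minimality} to turn the cover of $K:=B_{t-1}\cup B_t\cup B_{t+1}$ furnished by $C^-$, $C^+$ and the blocks $B_{t-1},B_{t+1}$ into a partition of $K$ into at most two convex, entangled pieces $P\prec P'$; replacing $B_{t-1},B_t,B_{t+1}$ by $P,P'$ in $\mathcal V$ then yields an e.c. decomposition of $\pf{L(\mathcal B)}$ with $r-1<r=|\mathcal V|$ blocks, contradicting minimality. Hence $\alpha\sim_{\mathcal V}\alpha'$, and since the equivalent condition in the statement never refers to $\mathcal V$, any two minimum-size e.c. decompositions $\mathcal V,\mathcal V'$ of $\mathcal B$ induce the same relation, $\sim_{\mathcal V}=\sim_{\mathcal V'}$.

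The step I expect to be the main obstacle is exactly that conversion — showing the merged pieces $P,P'$ are genuinely \emph{entangled}. A union of two overlapping convex entangled sets need not be entangled, so this cannot be done by naive merging and must use both (i) that $B_t$ avoids $I_q$ while each $C_\ell$ touching it reaches $I_q$ on one side, which forces the states occurring in $B_t$ to accumulate, together with $I_q$, at limit points lying outside the convex hull of $B_t$, and (ii) the minimality of $\mathcal V$, which forbids re-covering a run of consecutive blocks by fewer convex entangled sets — this being precisely the content of Remark~\ref{rem: minimality}. Assembling these two ingredients so that $P$ and $P'$ come out entangled is the crux of the argument.
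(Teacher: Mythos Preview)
Your forward implication is fine and essentially the paper's argument. The backward implication, however, is left genuinely incomplete, and the missing step is not a routine verification.

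You yourself flag the obstacle: having found $B_t$ with $B_t\cap I_q=\emptyset$ and the two ``overshooting'' sets $C^-,C^+$, you propose to repartition $K=B_{t-1}\cup B_t\cup B_{t+1}$ into at most two convex entangled pieces $P\prec P'$, contradicting minimality of $\mathcal V$. But you never show that $P$ and $P'$ are entangled, and in general they need not be from the data you have assembled. Knowing that $B_{t-1}$ is entangled and that $C^-$ is entangled does not make $B_{t-1}\cup D^-$ entangled: the states occurring in $B_{t-1}$ may be disjoint from those occurring in $C^-\cap(B_{t-1}\cup D^-)$, and a monotone witness for $C^-$ may live partly outside $B_{t-1}\cup D^-$ (e.g.\ in $B_{t-2}$). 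Remark~\ref{rem: minimality} only tells you that no two \emph{consecutive} blocks can be merged; it does not hand you a two-piece entangled repartition of three blocks. So the ``crux'' you postpone is the whole proof.

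The paper does not attempt to shrink $\mathcal V$. Instead it first replaces the family $C_1,\dots,C_n$ by a pairwise disjoint ordered family with the same union and the same ``$q$ occurs in each $C_i$'' property (this is Lemma~\ref{lem:convex_in_order}, proved via Lemmas~\ref{lem:union} and \ref{lem:entangled}). With $C_1\prec\cdots\prec C_n$, take the block $V_s$ strictly between $\alpha$ and $\alpha'$; then $V_s\subseteq C_i\cup C_{i+1}$ for some $i$. Now use the boundary states $q_j,q'_j$ from Remark~\ref{rem: minimality} ($q_j$ occurs in $V_j$ but not $V_{j+1}$, $q'_j$ in $V_j$ but not $V_{j-1}$): because $C_i$ is entangled, the finite pattern $q'_{s-h},\dots,q'_s,q,q_s,\dots,q_{s+k}$ can be realised in order inside $C_i$, and since each $q'_j$ is barred from $V_{j-1}$ and each $q_j$ from $V_{j+1}$, the position of $q$ in this pattern is squeezed into $V_s$. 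This direct ``propagation'' argument is what replaces your missing entanglement step.
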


In view of the above lemma we shall drop the subscript $\mt V$ from $\sim_{\mt V}^\mt B $ and we will simply write $\sim^{\mt B} $  (or $ \sim $ when $ \mathcal{B} $ is clear from the context). We can now  prove that   $\sim$ is a right-invariant  equivalence relation.

\begin{lemma}\label{lem:simequivalence} Let $\mt B$ be a DFA. Then, the equivalence relation $\sim$ is right-invariant. 
\end{lemma}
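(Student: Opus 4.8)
The plan is to show that if $\alpha \sim^{\mt B} \alpha'$ then $\alpha a \sim^{\mt B} \alpha' a$ for every $a \in \Sigma$ such that $\alpha a \in \pf{L(\mt B)}$ (note $\alpha a \in \pf{L(\mt B)}$ iff $\alpha' a \in \pf{L(\mt B)}$ since $\delta(s,\alpha) = \delta(s,\alpha')$). I will use the characterization of $\sim$ given by Lemma \ref{lem:easy_sim}: $\alpha \sim^{\mt B} \alpha'$ iff $\delta(s,\alpha) = \delta(s,\alpha')$ and the interval $[\alpha,\alpha']^\pm$ is covered by finitely many convex, entangled sets $C_1, \dots, C_n$ in $\pf{L(\mt B)}$, each meeting $I_{\delta(s,\alpha)}$. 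Set $q = \delta(s,\alpha) = \delta(s,\alpha')$ and $q' = \delta(q,a) = \delta(s,\alpha a) = \delta(s,\alpha' a)$, so the first conjunct for $\alpha a \sim^{\mt B} \alpha' a$ holds immediately.

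The core of the argument is to produce a covering of $[\alpha a, \alpha' a]^\pm$ by finitely many convex, entangled sets, each intersecting $I_{q'}$. First I would observe that appending a fixed letter $a$ is monotone with respect to $\preceq$: if $\beta \prec \gamma$ then $\beta a \prec \gamma a$ (this is immediate from the co-lexicographic definition, since the last letters agree and the comparison reduces to that of $\beta$ and $\gamma$). Consequently, for the images $C_i a := \{\gamma a : \gamma \in C_i, \ \gamma a \in \pf{L(\mt B)}\}$, each $C_i a$ is convex in $\pf{L(\mt B)}$: if $\beta a \prec \eta \prec \gamma a$ with $\beta, \gamma \in C_i$ and $\eta \in \pf{L(\mt B)}$, then $\eta$ must itself end in $a$ (any string co-lex–between two strings ending in $a$ ends in $a$ — this is the one small point to check carefully), say $\eta = \mu a$, and then $\beta \prec \mu \prec \gamma$ forces $\mu \in C_i$ by convexity of $C_i$, so $\eta \in C_i a$. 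Moreover $C_i a$ is entangled: if $(\gamma_k)_{k}$ is a monotone witness for the states occurring in $C_i$, then $(\gamma_k a)_k$ is monotone (by the monotonicity of appending $a$) and $\delta(s, \gamma_k a) = \delta(\delta(s,\gamma_k), a)$ hits each state of $\{\delta(q'',a) : q'' \text{ occurs in } C_i\}$ infinitely often; this is exactly the set of states occurring in $C_i a$, so $C_i a$ is entangled. Since each $C_i$ meets $I_q$, pick $\gamma_i \in C_i \cap I_q$; then $\gamma_i a \in C_i a \cap I_{q'}$, so each $C_i a$ meets $I_{q'}$. Finally, the sets $C_1 a, \dots, C_n a$ cover $[\alpha a, \alpha' a]^\pm$: any $\eta$ in this interval ends in $a$ (same co-lex fact), say $\eta = \mu a$, and since appending $a$ is monotone and $\alpha, \alpha'$ bracket $\mu$ when $\alpha a, \alpha' a$ bracket $\mu a$, we get $\mu \in [\alpha,\alpha']^\pm \subseteq \bigcup_i C_i$, hence $\eta \in \bigcup_i C_i a$. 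By Lemma \ref{lem:easy_sim} this yields $\alpha a \sim^{\mt B} \alpha' a$.

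The main obstacle I anticipate is purely the careful handling of the "$\pm$" (unordered interval) bookkeeping and the elementary but load-bearing co-lexicographic fact that a string lying co-lex–between two strings that both end in $a$ must itself end in $a$, together with the strict monotonicity $\beta \prec \gamma \Rightarrow \beta a \prec \gamma a$. Once these are nailed down, the transfer of convexity, entanglement, and the "meets $I_q$" condition from $\{C_i\}$ to $\{C_i a\}$ is routine, and right-invariance follows. (One should also note that the $C_i$ need not be the blocks of the decomposition $\mt V$ itself — this is precisely why Lemma \ref{lem:easy_sim}'s more flexible characterization, rather than Definition \ref{def:sim}, is the convenient tool here.)
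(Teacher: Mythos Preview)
Your proposal is correct and follows essentially the same route as the paper's own proof: both invoke the $\mathcal V$-free characterization of Lemma~\ref{lem:easy_sim}, take the covering sets $C_1,\dots,C_n$ for $[\alpha,\alpha']^\pm$, and verify that $C_1 a,\dots,C_n a$ are convex, entangled, cover $[\alpha a,\alpha' a]^\pm$, and each meet $I_{\delta(s,\alpha a)}$. Your write-up is in fact slightly more explicit than the paper's about the underlying co-lexicographic facts (monotonicity of appending $a$, and that a word co-lex--between two words ending in $a$ must itself end in $a$).
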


We are now ready to complete the construction of the automaton $\mt B'$. The idea of the proof is to use $ \sim $ to build a quotient automaton.

  \begin{theorem}\label{thm:equivalentholeproof}
Let  $\mt B $ be  a DFA. Then, there exists a DFA $\mt B'$ such that $ \mathcal{L(B')} = \mathcal{L(B)} $ and $\text{ent}(\mt B)=\text{ent}(\mt B')= \text{width}(\mt B')$.
 \end{theorem}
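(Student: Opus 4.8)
The plan is to define $\mathcal{B}'$ as the quotient of $\mathcal{B}$ by the right-invariant equivalence $\sim$ (from Lemma \ref{lem:simequivalence}), restricted to the language $\pf{L(\mathcal{B})}$, and then verify the three claimed properties in turn. First I would fix a minimum-size e.c.\ decomposition $\mathcal{V}$ of $\mathcal{B}$ (which exists by Theorem \ref{thm:fdt}), so that $\sim \; = \; \sim_{\mathcal{V}}^{\mathcal{B}}$ has finitely many classes on $\pf{L(\mathcal{B})}$ by Lemma \ref{lem:sim_finite}. Define $\mathcal{B}'$ to have states the $\sim$-classes of $\pf{L(\mathcal{B})}$, initial state $[\epsilon]_\sim$, transition $[\alpha]_\sim \xrightarrow{a} [\alpha a]_\sim$ whenever $\alpha a \in \pf{L(\mathcal{B})}$ (well-defined by right-invariance, Lemma \ref{lem:simequivalence}), and final states those classes contained in $\mathcal{L}(\mathcal{B})$ (well-defined since, by Lemma \ref{lem:sim_finite}, $\mathcal{L}(\mathcal{B})$ is a union of $\sim$-classes; note $\alpha \sim \alpha'$ forces $\delta(s,\alpha)=\delta(s,\alpha')$, so this is consistent with acceptance in $\mathcal{B}$). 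A routine induction on word length shows $\delta_{\mathcal{B}'}([\epsilon]_\sim, \gamma) = [\gamma]_\sim$ for every $\gamma \in \pf{L(\mathcal{B})}$, whence $\mathcal{L}(\mathcal{B}') = \mathcal{L}(\mathcal{B})$ and, moreover, $I_{[\alpha]_\sim}^{\mathcal{B}'} = [\alpha]_\sim$ as a set of words — i.e.\ the set of words reaching a state of $\mathcal{B}'$ is exactly the corresponding $\sim$-class.

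Next I would establish $\text{ent}(\mathcal{B}) = \text{ent}(\mathcal{B}')$. The inequality $\text{ent}(\mathcal{B}') \le \text{ent}(\mathcal{B})$ follows because a witnessing monotone sequence in $\mathcal{B}'$ for an entangled set of classes projects (via $[\cdot]_\sim \mapsto \delta_{\mathcal{B}}(s,\cdot)$, applied to the words themselves) to a monotone sequence in $\mathcal{B}$: distinct classes appearing infinitely often must, by the pigeonhole principle, have their underlying $\mathcal{B}$-states appear infinitely often, and two distinct classes can share a $\mathcal{B}$-state only finitely often along a monotone sequence (by the definition of $\sim$ via Lemma \ref{lem:easy_sim}: once some entangled $V$ with $V \cap I_{\delta(s,\alpha)} = \emptyset$ is passed, strings on opposite sides are inequivalent) — so the underlying states are distinct and entangled in $\mathcal{B}$, giving $|Q'| \le \text{ent}(\mathcal{B})$. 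For $\text{ent}(\mathcal{B}) \le \text{ent}(\mathcal{B}')$: given an entangled $Q' \subseteq Q_{\mathcal{B}}$ witnessed by a monotone $(\alpha_i)$, some single $\sim$-class contains infinitely many of the $\alpha_i$ reaching each fixed $q \in Q'$ (finiteness of classes, pigeonhole), and one checks these classes are distinct for distinct $q$ and entangled in $\mathcal{B}'$ via a suitably thinned monotone subsequence.

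Finally, the crux: $\text{width}(\mathcal{B}') = \text{ent}(\mathcal{B}')$. By Lemma \ref{lem:ent_leq_width} we have $\text{ent}(\mathcal{B}') \le \text{width}(\mathcal{B}')$, so it remains to show $\text{width}(\mathcal{B}') \le \text{ent}(\mathcal{B}')$, i.e.\ every $\preceq_{\mathcal{B}'}$-antichain $\{[q_1], \ldots, [q_k]\}$ is entangled in $\mathcal{B}'$. This is exactly the informal argument sketched before the statement in the excerpt, and I expect it to be the main obstacle. The idea: since the classes are pairwise $\preceq_{\mathcal{B}'}$-incomparable, for each pair there are words $\alpha \prec \beta \prec \alpha'$ with $\alpha, \alpha'$ in one class and $\beta$ in the other; I would use these interleavings, together with the structure forced by the $\sim$-definition relative to the minimum-size decomposition $\mathcal{V}$ (specifically the additional minimality properties alluded to in Remark \ref{rem: minimality}), to show that all the $k$ classes in fact share a common $V \in \mathcal{V}$, i.e.\ $\{q_1, \ldots, q_k\} \subseteq V$ as states occurring in $V$. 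Since $V$ is $\mathcal{B}$-entangled, its occurring states admit a common monotone witness $(\alpha_i)$ in $\mathcal{B}$; thinning so that each $\alpha_i$ lands in the correct $\sim$-class (again possible since passing through $V$, which meets each $I_{q_j}$, does not separate the relevant strings) yields a monotone witness in $\mathcal{B}'$ for $\{[q_1], \ldots, [q_k]\}$. Hence every antichain of $\mathcal{B}'$ is entangled, so $\text{width}(\mathcal{B}') = \text{ent}(\mathcal{B}') = \text{ent}(\mathcal{B})$, completing the proof. The delicate point throughout is the careful bookkeeping of convexity and of which entangled sets "separate" which strings — this is where the hypothesis that $\mathcal{V}$ has \emph{minimum} size, not merely that it is an e.c.\ decomposition, is essential.
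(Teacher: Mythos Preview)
Your outline is correct and matches the paper's route: construct $\mathcal{B}'$ as the quotient by $\sim$, get $\text{ent}(\mathcal{B})=\text{ent}(\mathcal{B}')$ by eventual confinement of a monotone witness in a single $V\in\mathcal{V}$, and then bound the width by locating a common $V$ for any antichain.

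The one concrete mechanism you are missing at the crux is a Helly-type step. From Definition~\ref{def:sim} one reads off that each class $I_{[\alpha_j]_\sim}=[\alpha_j]_\sim$ is contained in a convex ``window'' $W_j=V_{n_j}\cup\cdots\cup V_{m_j}$ of consecutive elements of $\mathcal{V}$, every one of which meets $I_{q_j}$ (where $q_j=\delta(s,\alpha_j)$). Pairwise $\preceq_{\mathcal{B}'}$-incomparability of the classes forces the windows $W_j$ to intersect pairwise; the paper then invokes Lemma~\ref{lem:con-int} (convex sets in a linear order have the Helly property) to obtain a single $V\in\mathcal{V}$ contained in every $W_j$, hence meeting every $I_{q_j}$. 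Since $V$ is entangled, $q_1,\ldots,q_k$ are entangled in $\mathcal{B}$, giving $\text{width}(\mathcal{B}')\le\text{ent}(\mathcal{B})$ directly. Remark~\ref{rem: minimality} is not used here (it is what powers the proof of Lemma~\ref{lem:easy_sim}, hence right-invariance). Your proposed extra step of lifting the entanglement back to $\mathcal{B}'$ does work (any $\gamma\in V\cap I_{q_j}$ lies in $[\alpha_j]_\sim$ because $\gamma,\alpha_j\in W_j$ and every $V$ in $W_j$ meets $I_{q_j}$), but it is unnecessary once you have $\text{width}(\mathcal{B}')\le\text{ent}(\mathcal{B})$ and $\text{ent}(\mathcal{B})=\text{ent}(\mathcal{B}')\le\text{width}(\mathcal{B}')$.
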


In the remaining of this section, given a DFA $ \mathcal{B} $, we will always denote by $ \mathcal{B}' $ the DFA obtained by the construction in Theorem \ref{thm:equivalentholeproof}. Let us now apply  the construction to  the minimum automaton $\mt A$ of a regular language $\mt L$. 
\begin{definition} If $\mt A$ is the minimum automaton of a regular language $\mt L$,  the DFA $\mt A' $
is called the \emph{Hasse automaton} for $\mt L$ and it is denoted by $\mt H_{\mt L}$.
\end{definition} 

The name ``Hasse'' is justified by the fact that, as shown in the following theorem,  $\text{width}(\mt L)= \text{width}(\mt H_{\mt L})$ so that   the Hasse diagram of the partial order $\preceq_{\mt H}$ allows to ``visualize''  the width of the language.

 \begin{theorem}\label{thm:hasse} 
If $\mt A$ is the minimum automaton of the regular language $\mt L$, then:
\[\text{width}(\mt L)= \text{width}(\mt H_{\mt L})= \text{ent}(\mt A)=\text{ent}(\mt L).\]
 \end{theorem}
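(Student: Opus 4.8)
The plan is to prove the four-way equality by chaining together the results already established earlier in the excerpt, together with the lower bound $\text{width}(\mathcal L)\ge \text{ent}(\mathcal L)$, which I believe is the only genuinely new ingredient needed here. First recall what is already available: Lemma~\ref{lem:minent} gives $\text{ent}(\mathcal A)=\text{ent}(\mathcal L)$ when $\mathcal A$ is the minimum DFA of $\mathcal L$; and Theorem~\ref{thm:equivalentholeproof} applied to $\mathcal A$ produces $\mathcal H_{\mathcal L}=\mathcal A'$ with $\mathcal L(\mathcal H_{\mathcal L})=\mathcal L$ and $\text{ent}(\mathcal A)=\text{ent}(\mathcal H_{\mathcal L})=\text{width}(\mathcal H_{\mathcal L})$. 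Combining these two facts immediately yields
\[
\text{width}(\mathcal H_{\mathcal L})=\text{ent}(\mathcal A)=\text{ent}(\mathcal L).
\]
So the only thing left is to identify these common quantities with $\text{width}(\mathcal L)$, i.e. to show $\text{width}(\mathcal L)=\text{width}(\mathcal H_{\mathcal L})$.

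For this I would prove two inequalities. The inequality $\text{width}(\mathcal L)\le \text{width}(\mathcal H_{\mathcal L})$ is trivial from the definition of $\text{width}(\mathcal L)$ as a minimum over all DFAs recognizing $\mathcal L$: since $\mathcal H_{\mathcal L}$ is such a DFA, $\text{width}(\mathcal L)\le \text{width}(\mathcal H_{\mathcal L})$. The reverse inequality $\text{width}(\mathcal H_{\mathcal L})\le \text{width}(\mathcal L)$ is the heart of the matter, and I would obtain it from the chain
\[
\text{width}(\mathcal H_{\mathcal L})=\text{ent}(\mathcal L)\le \text{ent}(\mathcal B)\le \text{width}(\mathcal B)
\]
valid for \emph{every} DFA $\mathcal B$ recognizing $\mathcal L$: the first equality is what we just established, the second inequality is the definition of $\text{ent}(\mathcal L)$ as a minimum over all such $\mathcal B$, and the third is Lemma~\ref{lem:ent_leq_width}. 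Taking $\mathcal B$ to be a DFA of minimum width recognizing $\mathcal L$ — i.e. with $\text{width}(\mathcal B)=\text{width}(\mathcal L)$ — gives $\text{width}(\mathcal H_{\mathcal L})\le \text{width}(\mathcal L)$, as desired. Putting the two inequalities together, $\text{width}(\mathcal L)=\text{width}(\mathcal H_{\mathcal L})$, and combining with the earlier identifications we conclude
\[
\text{width}(\mathcal L)=\text{width}(\mathcal H_{\mathcal L})=\text{ent}(\mathcal A)=\text{ent}(\mathcal L).
\]

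The main obstacle, then, is not in this theorem at all but was already discharged in the preceding development: essentially all the real work lives in Theorem~\ref{thm:equivalentholeproof} (the quotient construction $\mathcal B\mapsto \mathcal B'$ via the relation $\sim$, which in turn relies on the finite entangled convex decomposition of Theorem~\ref{thm:fdt} and the right-invariance of $\sim$ from Lemma~\ref{lem:simequivalence}) and in the combinatorial Lemma~\ref{lem:ent_leq_width} relating entanglement to width. Given those, Theorem~\ref{thm:hasse} is a short bookkeeping argument: the key conceptual point to highlight in the write-up is the ``sandwich'' $\text{ent}(\mathcal L)\le \text{ent}(\mathcal B)\le \text{width}(\mathcal B)$ holding uniformly over all $\mathcal B$, which forces the minimum width over all DFAs to coincide with the (minimum) entanglement, and the fact that $\mathcal H_{\mathcal L}$ actually \emph{attains} this minimum, certifying that the bound is tight.
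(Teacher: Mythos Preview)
Your proposal is correct and follows essentially the same approach as the paper: the paper also chains Lemma~\ref{lem:minent}, Theorem~\ref{thm:equivalentholeproof}, Lemma~\ref{lem:ent_leq_width}, and the definition of $\text{width}(\mathcal L)$ into the sandwich $\text{width}(\mathcal H_{\mathcal L})=\text{ent}(\mathcal A)=\text{ent}(\mathcal L)\le \text{width}(\mathcal L)\le \text{width}(\mathcal H_{\mathcal L})$. Your write-up is slightly more explicit about where the inequality $\text{ent}(\mathcal L)\le \text{width}(\mathcal L)$ comes from, but the logic is identical.
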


 The Hasse automaton $ \mt H_{\mt L} $ captures the width of a language. The following lemma shows that $ \mt H_{\mt L} $, which is defined starting from the minimum DFA $ \mathcal{A} $, also inherits a minimality property with respect to a natural class of automata. This reasonable result should be compared with Example \ref{ex:no_minimal}, which showed that the minimality problem does not admit a trivial interpretation when the concept of width is introduced. A satisfying solution will be provided in Section \ref{sec:minimal}.

 \begin{lemma}\label{lem:minHasse}
 Let $ \mathcal{L} $ be a language, and consider the class:
 \begin{equation*}
\mathscr{C} = \{\mathcal{B'} | \text{$ \mathcal{B} $ is a DFA and $ \mathcal{L(B)} = \mathcal{L} $} \}.
 \end{equation*}
 Then, there exists exactly one DFA being in $ \mathscr{C}$ and having the minimum number of states, namely, the Hasse automaton  $\mt H_{\mt L}$. In other words, $\mt H_{\mt L}$ is the minimum DFA of $ \mathscr{C} $.
 \end{lemma}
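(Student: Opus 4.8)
The plan is to carry out the whole comparison at the level of the equivalence relations on $\mathrm{Pref}(\mathcal{L})$ out of which the automata in $\mathscr{C}$ are built. Write $\mathcal{A}$ for the minimum DFA of $\mathcal{L}$, so that $\mathcal{H}_{\mathcal{L}}=\mathcal{A}'$, and, for any DFA $\mathcal{B}$ with $\mathcal{L}(\mathcal{B})=\mathcal{L}$, let $\sim^{\mathcal{B}}$ and $\sim^{\mathcal{A}}$ be the relations of Definition~\ref{def:sim} attached to $\mathcal{B}$ and to $\mathcal{A}$. By the construction behind Theorem~\ref{thm:equivalentholeproof}, $\mathcal{B}'$ is the quotient automaton whose states are the $\sim^{\mathcal{B}}$-classes (finitely many, by Lemma~\ref{lem:sim_finite}), with initial state $[\epsilon]$, transitions sending $[\alpha]$ to $[\alpha a]$ on letter $a$, and final states the classes meeting $\mathcal{L}$; likewise $\mathcal{H}_{\mathcal{L}}$ is the quotient of $\mathcal{A}$ by $\sim^{\mathcal{A}}$. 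Hence the number of states of $\mathcal{B}'$ equals the number of $\sim^{\mathcal{B}}$-classes, and the lemma reduces to the single claim that $\sim^{\mathcal{B}}$ \emph{refines} $\sim^{\mathcal{A}}$. Indeed, from the refinement we get $|Q_{\mathcal{B}'}|\ge|Q_{\mathcal{H}_{\mathcal{L}}}|$; moreover two equivalences, one refining the other, with the same finite number of classes must coincide, so $|Q_{\mathcal{B}'}|=|Q_{\mathcal{H}_{\mathcal{L}}}|$ forces $\sim^{\mathcal{B}}=\sim^{\mathcal{A}}$, and then $\mathcal{B}'$ and $\mathcal{H}_{\mathcal{L}}$, being the quotient automaton of one and the same relation, are isomorphic. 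Taking $\mathcal{B}=\mathcal{A}$ shows the minimum is attained by $\mathcal{H}_{\mathcal{L}}$, and the previous sentence shows it is the unique minimiser in $\mathscr{C}$.

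So the substance is to prove $\alpha\sim^{\mathcal{B}}\alpha'\Rightarrow\alpha\sim^{\mathcal{A}}\alpha'$, and here I would lean on the characterisation of Lemma~\ref{lem:easy_sim}. Since $\mathcal{A}$ is the minimum DFA of $\mathcal{L}=\mathcal{L}(\mathcal{B})$ and every state of $\mathcal{B}$ is reachable and co-reachable, $\mathcal{A}$ is a quotient of $\mathcal{B}$: there is a surjective DFA morphism $h\colon Q_{\mathcal{B}}\to Q_{\mathcal{A}}$ with $\delta_{\mathcal{A}}(s,\gamma)=h(\delta_{\mathcal{B}}(s,\gamma))$ for all $\gamma\in\mathrm{Pref}(\mathcal{L})$, so that, writing $I^{\mathcal{B}}_q$ and $I^{\mathcal{A}}_p$ for the sets of words arriving at $q$ in $\mathcal{B}$ and at $p$ in $\mathcal{A}$, we have $I^{\mathcal{A}}_{h(q)}=\bigcup_{h(q')=h(q)}I^{\mathcal{B}}_{q'}\supseteq I^{\mathcal{B}}_{q}$; this is precisely the Myhill--Nerode consequence recalled among the notations. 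Now assume $\alpha\sim^{\mathcal{B}}\alpha'$. By Lemma~\ref{lem:easy_sim} applied to $\mathcal{B}$ we get $\delta_{\mathcal{B}}(s,\alpha)=\delta_{\mathcal{B}}(s,\alpha')=:q$ and $[\alpha,\alpha']^{\pm}\subseteq C_1\cup\dots\cup C_n$ with each $C_i$ convex in $(\mathrm{Pref}(\mathcal{L}),\preceq)$, entangled in $\mathcal{B}$, and $C_i\cap I^{\mathcal{B}}_q\neq\emptyset$. Pushing forward along $h$: $\delta_{\mathcal{A}}(s,\alpha)=h(q)=\delta_{\mathcal{A}}(s,\alpha')$; each $C_i$ is still convex; $C_i\cap I^{\mathcal{A}}_{h(q)}\supseteq C_i\cap I^{\mathcal{B}}_q\neq\emptyset$; and $C_i$ is entangled in $\mathcal{A}$, because the set of $\mathcal{A}$-states occurring in $C_i$ is exactly the $h$-image of the set of $\mathcal{B}$-states occurring in $C_i$, and a monotone sequence witnessing the entanglement of the latter in $\mathcal{B}$ also witnesses that of the former in $\mathcal{A}$ — for each $\mathcal{B}$-state $q'$ occurring in $C_i$ the sequence hits $q'$ infinitely often, hence hits $h(q')$ infinitely often, and these $h(q')$ exhaust the $\mathcal{A}$-states in $C_i$. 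By Lemma~\ref{lem:easy_sim} applied to $\mathcal{A}$ — legitimate since $\mathcal{A}$ admits a minimum-size e.c. decomposition by Theorem~\ref{thm:fdt} — we conclude $\alpha\sim^{\mathcal{A}}\alpha'$, which is the required refinement.

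The only genuine obstacle is this push-forward step: that convexity and, above all, \emph{entanglement} survive the passage from $\mathcal{B}$ to its minimal quotient $\mathcal{A}$, together with the bookkeeping needed to recognise that Lemma~\ref{lem:easy_sim} describes $\sim^{\mathcal{B}}$ and $\sim^{\mathcal{A}}$ through convex, entangled sets taken relative to the \emph{respective} automata, so that the characterisation really does transfer from $\mathcal{B}$ to $\mathcal{A}$. Everything else — refinement implies the inequality of class counts, equal counts force the two relations to coincide, and coinciding relations give isomorphic quotient automata — is routine, and it is exactly what makes Example~\ref{ex:no_minimal} compatible with a clean minimality statement once one restricts attention to the class $\mathscr{C}$.
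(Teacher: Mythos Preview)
Your proposal is correct and follows essentially the same route as the paper's proof: reduce the lemma to showing that $\sim^{\mathcal{B}}$ refines $\sim^{\mathcal{A}}$, establish this via the characterisation of Lemma~\ref{lem:easy_sim} by pushing the convex entangled sets $C_i$ forward along the Myhill--Nerode quotient from $\mathcal{B}$ to $\mathcal{A}$, and then read off minimality and uniqueness from the index inequality. Your write-up is in fact slightly more explicit than the paper's about \emph{why} entanglement survives the passage to $\mathcal{A}$ (via the morphism $h$ and the observation that the $\mathcal{A}$-states occurring in $C_i$ are exactly the $h$-images of the $\mathcal{B}$-states), but the skeleton is identical.
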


 \section{On computing the width of a regular language}\label{sec:computing}

The problem of computing the width of an automaton is in $ \texttt{P} $ if the automaton is deterministic, and it is $ \texttt{NP} $-hard if the automaton can be nondeterministic \cite{NicNic2021}. In this section we address the problem of determining the width of the \emph{language} recognized by an automaton. More precisely, we show that if we are given a regular language $ \mathcal{L} $ by means of \emph{any} DFA $\mathcal A$ accepting $ \mathcal{L} $, then the problem of computing $ p = width(\mathcal{L}) $ is in the class \texttt{XP}, that is, solvable in polynomial time for fixed values of $p$. We propose a dynamic programming algorithm that extends the ideas introduced in \cite{alanko2020wheeler} when solving the corresponding problem for Wheeler languages.

Theorem \ref{thm:hasse} suggests that the minimum automaton should contain all topological information required to compute the width of a language.
In the next theorem, we provide a graph-theoretical characterization of the width of a language based on the minimum automaton recognizing the language.

\begin{theorem}\label{thm:conditions width}
    Let $ \mathcal{L} $ be a regular language, and let $ \mathcal{A} $ the minimum DFA of $ \mathcal{L} $, with set of states $ Q $. Let $ k \ge 2 $ be an integer. Then, $ width(\mathcal{L}) \ge k $ if and only if there exist strings $ \mu_1, \dots, \mu_k $ and $ \gamma $ and there exist pairwise distinct $ u_1, \dots, u_k \in Q $ such that:
    \begin{enumerate}
        \item $ \mu_j $ labels a path from the initial state $ s $ to $ u_j $, for every $ j = 1, \dots, k $;
        \item $ \gamma $ labels a cycle starting (and ending) at $ u_j $, for every $ j = 1, \dots, k $;
        \item either all the $ \mu_j $'s are smaller than $ \gamma $ or $ \gamma $ is smaller than all $ \mu_j $'s;
        \item $ \gamma $ is not a suffix of $ \mu_j $, for every $ j = 1, \dots, k $.
    \end{enumerate}
\end{theorem}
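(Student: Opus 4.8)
The plan is to prove the two directions separately, using Theorem~\ref{thm:hasse} to reduce the statement about $\text{width}(\mathcal{L})$ to a statement about $\text{ent}(\mathcal{A})$, where $\mathcal{A}$ is the minimum DFA. By Theorem~\ref{thm:hasse} we have $\text{width}(\mathcal{L}) = \text{ent}(\mathcal{A})$, so it suffices to show that $\text{ent}(\mathcal{A}) \ge k$ if and only if there exist the strings $\mu_1,\dots,\mu_k,\gamma$ and pairwise distinct states $u_1,\dots,u_k$ satisfying conditions (1)--(4).

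For the direction ($\Leftarrow$): suppose conditions (1)--(4) hold. The idea is that iterating $\gamma$ produces a monotone sequence of prefixes witnessing the entanglement of $\{u_1,\dots,u_k\}$. Concretely, for each $j$, consider the strings $\mu_j, \mu_j\gamma, \mu_j\gamma^2, \dots$, all of which reach $u_j$ (using conditions (1) and (2)). Condition (4) — $\gamma$ is not a suffix of $\mu_j$ — guarantees that appending copies of $\gamma$ strictly changes the co-lexicographic position in a fixed direction: if the last letter where $\mu_j$ and $\gamma^{\infty}$ (conceptually) differ is fixed, then $\mu_j\gamma^n$ is strictly monotone in $n$; more carefully, I would compare $\mu_j\gamma^n$ with $\mu_j\gamma^{n+1}$ and show (via condition (3), relating $\gamma$ and $\mu_j$) that the whole doubly-indexed family can be interleaved into a single monotone sequence. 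The key point is that condition~(3) forces the \emph{same} direction of monotonicity for all $j$ simultaneously, so that one can build a single monotone sequence hitting each $u_j$ infinitely often — first arrange the $u_j$'s according to the co-lex order of their $\mu_j\gamma^{N}$ for large $N$, then weave. Hence $\{u_1,\dots,u_k\}$ is entangled and $\text{ent}(\mathcal{A}) \ge k$.

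For the direction ($\Rightarrow$): suppose $\text{ent}(\mathcal{A}) \ge k$, so there is an entangled set $Q' = \{u_1,\dots,u_k\}$ with a monotone witness $(\alpha_i)_{i\in\mathbb{N}}$. I want to extract from this infinite monotone sequence a single repeated cycle label $\gamma$ common to all $k$ states. The natural tool is a pigeonhole/Ramsey-type argument on the minimum DFA: since the sequence is infinite and monotone and the automaton is finite, consider pairs of prefixes $\alpha_i \prec \alpha_j$ both reaching the same state $u_\ell$; the ``difference'' is a string that labels a cycle at $u_\ell$. I would argue that, because $(\alpha_i)$ is co-lex monotone and unbounded in length (prefixes reaching a fixed state infinitely often, with the automaton having cycles), one can find, for each $\ell$, longer and longer prefixes reaching $u_\ell$ sharing a common suffix structure; then use a simultaneous pigeonhole over all $k$ states and over bounded-length cycle labels to find one $\gamma$ that labels a cycle at every $u_j$, together with prefixes $\mu_j$ reaching $u_j$ such that $\mu_j\gamma$ also reaches $u_j$. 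Conditions (1) and (2) are then immediate. Condition (3) (all $\mu_j$ on the same side of $\gamma$) should follow from the monotonicity of the witness, after possibly passing to a subsequence / enlarging $\mu_j$ by prepending copies of $\gamma$ so that $\mu_j$ and $\gamma$ become comparable in a uniform way. Condition (4) ($\gamma$ not a suffix of $\mu_j$) can be arranged by padding: if $\gamma$ happened to be a suffix of $\mu_j$, replace $\mu_j$ by $\mu_j\gamma$ or absorb the offending occurrence, exploiting that the witness visits $u_j$ infinitely often so there is freedom in choosing $\mu_j$; alternatively, note that if $\gamma$ were a suffix of $\mu_j$ for some $j$, the entanglement contribution would collapse (the corresponding subsequence would be eventually constant in co-lex position), contradicting that $u_j$ genuinely participates in a $k$-element entangled set.

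The main obstacle I expect is the ($\Rightarrow$) direction: turning an abstract monotone witness $(\alpha_i)$ — whose terms may be arbitrary, unstructured prefixes — into a \emph{single} common cycle label $\gamma$ valid at all $k$ states simultaneously, while maintaining the side condition (3) and the non-suffix condition (4). This requires a careful interleaving of a pigeonhole argument on states, a pigeonhole on bounded-length cycle labels (to make $\gamma$ the same for all $j$), and a subsequence extraction to preserve monotonicity — and one must check these can all be done at once without circular dependencies. The ($\Leftarrow$) direction is more routine, the only delicate point being the explicit verification that condition (4) yields strict co-lex monotonicity of $\mu_j\gamma^n$ in $n$ and that condition (3) synchronizes the directions across different $j$.
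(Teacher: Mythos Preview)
Your $(\Leftarrow)$ direction is essentially the paper's argument: order the $\mu_j$'s, and build the monotone sequence $\mu_1,\dots,\mu_k,\mu_1\gamma,\dots,\mu_k\gamma,\mu_1\gamma^2,\dots$ (or its reverse), using condition~(4) to get strictness across the block boundaries (e.g.\ $\mu_k \prec \mu_1\gamma$ when $\mu_k \prec \gamma$ and $\gamma$ is not a suffix of $\mu_k$). That part is fine.

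The $(\Rightarrow)$ direction, however, has real gaps. Your plan is to look at two prefixes $\alpha_i,\alpha_j$ reaching the same $u_\ell$ and call the ``difference'' a cycle at $u_\ell$; but in a co-lex monotone sequence there is no reason for $\alpha_i$ to be a \emph{prefix} of $\alpha_j$, so no such ``difference'' string exists in general, and separate pigeonholing at each $u_\ell$ will typically produce $k$ unrelated cycle labels rather than one common $\gamma$. The paper's key step, which your plan is missing, is to first force a long \emph{common suffix}: pass to a tail of the sequence so that every $\alpha_n$ ends with the same word $\theta$ of length $|Q|^k$; then along $\theta$ one has, for each position $0\le s\le |Q|^k$, a $k$-tuple of states $(x^1_s,\dots,x^k_s)$ (the state reached after the prefix of $\theta$ of length $s$, on the $j$-th run). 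By pigeonhole two of these $k$-tuples coincide, giving a single substring $\gamma'$ of $\theta$ that labels a cycle at \emph{every} $u_j$ simultaneously. Your ``simultaneous pigeonhole over all $k$ states and over bounded-length cycle labels'' is gesturing at this, but without the common-suffix reduction it does not go through.

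Your proposed fixes for condition~(4) are both incorrect. Replacing $\mu_j$ by $\mu_j\gamma$ does nothing: if $\gamma$ is a suffix of $\mu_j$ then it is still a suffix of $\mu_j\gamma$. ``Absorbing'' the suffix, i.e.\ replacing $\mu_j=\nu\gamma$ by $\nu$, fails because $\delta(s,\nu)$ need not be $u_j$ (the map $q\mapsto\delta(q,\gamma)$ is not injective in general). And the ``collapse'' argument is wrong: $\mu_j\gamma^n$ are all distinct strings and there is no reason the co-lex positions stagnate. The paper's fix is much simpler and acts on $\gamma$, not on the $\mu_j$'s: replace $\gamma'$ by a power $\gamma=(\gamma')^r$ with $|\gamma|$ exceeding every $|\mu_j|$, so $\gamma$ cannot be a suffix of any $\mu_j$ for length reasons. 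For condition~(3), ``prepending copies of $\gamma$'' to $\mu_j$ is not an option (paths must start at $s$); the paper instead extracts $2k-1$ ordered elements $\delta_1\prec\dots\prec\delta_{2k-1}$ from the monotone witness, hitting $u_1,\dots,u_k,u_1,\dots,u_{k-1}$ in order, and then case-splits on whether $\delta_k\prec\gamma$ or $\gamma\prec\delta_k$ to pick either $\delta_1,\dots,\delta_k$ or $\delta_k,\dots,\delta_{2k-1}$ as the $\mu_j$'s.
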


In Appendix \ref{app:computing} we prove further results to bound the lengths of $ \mu_1, \dots, \mu_k $ and $ \gamma $, so making it possible to bound the running time of a dynamic programming algorithm based on Theorem \ref{thm:conditions width}. We can then conclude:

  \begin{theorem}\label{thm:dyn prog}
 Let $\mathcal L$ be  a regular language, given as input by means of any DFA $\mathcal A = (Q, s, \delta,F)$ recognizing $ \mathcal{L}$. 
Then, $p = \text{width}(\mathcal L)$ is  computable in time $|Q|^{O(p)}$.  
  \end{theorem}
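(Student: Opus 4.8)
The plan is to combine the graph-theoretic characterization of Theorem~\ref{thm:conditions width} with suitable length bounds on the witnessing strings $\mu_1,\dots,\mu_k$ and $\gamma$, and then to run a dynamic programming algorithm that searches, for each candidate value $k$, for $k$ pairwise distinct states together with a common cycle label $\gamma$ and distinct path labels $\mu_j$ satisfying conditions (1)--(4). First I would reduce to the minimum DFA: given the input DFA $\mathcal A$, minimize it in time polynomial in $|Q|$ to obtain the minimum DFA $\mathcal A_{\min}$ with state set $Q_{\min}$, $|Q_{\min}|\le|Q|$. By Theorem~\ref{thm:hasse} and Theorem~\ref{thm:conditions width}, $\mathrm{width}(\mathcal L)$ is exactly the largest $k$ (or $1$, handled separately as the Wheeler case) for which the stated configuration exists in $\mathcal A_{\min}$. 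So it suffices to decide, for $k=2,3,\dots$, whether such a configuration exists; the first $k$ for which it fails (minus one) is the width, and since $\mathrm{width}(\mathcal L)\le|Q_{\min}|$ this loop terminates.

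Next I would invoke the length bounds proved in Appendix~\ref{app:computing}: these guarantee that if a valid configuration exists at all, then one exists in which $|\gamma|$ and each $|\mu_j|$ are bounded by a polynomial in $|Q_{\min}|$ (indeed, a cycle label can be taken of length $\le|Q_{\min}|$, and the path labels can be taken of length $O(|Q_{\min}|)$, with the co-lexicographic comparison in condition (3) and the non-suffix condition (4) both being preserved under the pumping/shortening arguments there). With these bounds in hand, the search space is polynomial in $|Q_{\min}|$ for each fixed $k$. The dynamic program then proceeds on tuples of states: I maintain, for each $k$-tuple $(v_1,\dots,v_k)\in Q_{\min}^k$ and each relevant ``comparison state'' of the co-lexicographic order (tracking, for the prefixes read so far, their relative co-lex rank — this is the same device used in \cite{alanko2020wheeler} for Wheeler languages, lifted from pairs to $k$-tuples), whether there is a common string of a given residual length reaching all of $v_1,\dots,v_k$ simultaneously while respecting the ordering constraint needed to eventually certify condition (3); a separate table handles the cyclic part (a common $\gamma$ looping at all $u_j$ simultaneously) by iterating the product-of-$k$-copies transition relation up to the length bound. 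Combining a reachable $k$-tuple $(u_1,\dots,u_k)$ of pairwise distinct states admitting a common self-loop label $\gamma$ with $k$ path labels $\mu_j$ from $s$ to $u_j$ all on the correct side of $\gamma$ and none having $\gamma$ as a suffix decides whether $\mathrm{width}(\mathcal L)\ge k$.

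For the running time: the product automaton on $k$-tuples has $|Q_{\min}|^{O(k)}$ states, each DP table entry is computed from a constant (alphabet-size, hence $O(1)$) number of predecessors, and we iterate over string lengths up to the polynomial bound; the co-lex comparison bookkeeping multiplies the state space by at most a polynomial factor, which is absorbed into $|Q_{\min}|^{O(k)}$. Since we only ever test $k\le p+1$ where $p=\mathrm{width}(\mathcal L)$, and the test for $k=p+1$ fails while the test for $k=p$ succeeds, the total time is $|Q_{\min}|^{O(p)}=|Q|^{O(p)}$, as claimed; minimization and the final comparison steps are negligible in comparison. The main obstacle I anticipate is bookkeeping the co-lexicographic order inside the $k$-fold product: condition (3) demands that \emph{all} the $\mu_j$ lie on the same side of the single string $\gamma$, and conditions (3)--(4) are global constraints on full strings rather than local transition constraints, so the DP state must carry enough information — essentially the co-lex order type of the partial strings together with, near the end, the suffix-of-$\gamma$ status — to certify them; getting this state description both correct and polynomially bounded (so that the exponent stays $O(p)$) is the delicate point, and it is precisely where the auxiliary length and normal-form lemmas of Appendix~\ref{app:computing} do the real work.
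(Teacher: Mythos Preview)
Your high-level plan matches the paper's: reduce to the minimum DFA, use the graph-theoretic characterization of Theorem~\ref{thm:conditions width}, bound the witness lengths, and run a dynamic program over $k$-tuples of states for increasing~$k$. Two points, however, are off.

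First, the length bounds you quote are wrong. You write that ``a cycle label can be taken of length $\le|Q_{\min}|$'' and ``the path labels can be taken of length $O(|Q_{\min}|)$''. A single cycle at a single state can of course be shortened to length at most $|Q_{\min}|$, but here $\gamma$ must label a cycle at \emph{all} of $u_1,\dots,u_k$ simultaneously; the pigeonhole therefore applies to $k$-tuples of states, giving only $|\gamma'|\le|Q_{\min}|^k$ for a primitive cycle, and after the powering needed to dominate the $\mu_j$'s the bound becomes $|Q_{\min}|^{O(k)}$. Similarly, Lemma~\ref{lem:bound} bounds the $\mu_j$'s by $\sum_{t=1}^{2k-1}|Q_{\min}|^t$, not by a linear function. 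The correct statement is Corollary~\ref{cor:conditions width}: $|\mu_j|<|\gamma|\le 2k-3+|Q_{\min}|^k+\sum_{t=1}^{2k-1}|Q_{\min}|^t$. Your overall $|Q|^{O(p)}$ bound survives because these lengths are still $|Q|^{O(k)}$, but the argument as written does not establish it.

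Second, your DP is vaguer than it needs to be, and the ``co-lex order type / suffix-of-$\gamma$ status'' bookkeeping you flag as the delicate point is avoided entirely in the paper. Corollary~\ref{cor:conditions width} replaces condition~(4) by the purely metric condition $|\mu_j|<|\gamma|$, so no suffix tracking is required. And rather than carrying a comparison state through the product automaton, the paper computes two extremal objects independently: for each state $u$ and length $\ell$, the co-lex \emph{smallest} string of length $\ell$ labeling a path $s\rightsquigarrow u$; and for each $k$-tuple $(u_1,\dots,u_k)$ and length $\ell$, the co-lex \emph{largest} string of length $\ell$ labeling cycles at all $u_i$ simultaneously. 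One then simply tests, over all $k$-tuples and lengths $\ell'<\ell$, whether the $k$ smallest paths are all co-lex below the largest common cycle (and symmetrically for the other side of condition~(3)). This decouples the $\mu_j$-side from the $\gamma$-side and removes the need for any joint order-type state.
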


\section{The convex Myhill-Nerode theorem}\label{sec:minimal}

Let $ \mathcal{L} \subseteq \Sigma^* $ be a language. The \emph{Myhill-Nerode equivalence} for $ \mathcal{L} $ is the right-invariant equivalence relation $ \equiv_\mathcal{L} $ on $ \pf L $ such that for every $ \alpha, \beta \in \pf L $ it holds:
\begin{equation*}
    \alpha \equiv_\mathcal{L} \beta \iff \{\gamma \in \Sigma^* | \alpha \gamma \in \mathcal{L}  \} = \{\gamma \in \Sigma^* | \beta \gamma \in \mathcal{L}  \}.
\end{equation*}

In this paper, we have described the hierarchy of regular languages by means of their widths, thus the natural question is whether a corresponding Myhill-Nerode theorem can be provided. For instance, given a regular languages $ \mathcal{L} $, if we consider all DFAs recognizing $ \mathcal{L} $ and having width equal to $ width (\mathcal{L}) $, is there a unique such DFA having the minimum number of states? In general, the answer is "no", as showed in Example \ref{ex:no_minimal}.

The non-uniqueness can be explained as follows. If a DFA of width $ p $ recognizes $ \mathcal{L} $, then $ \pf L $ can be partitioned into $ p $ sets, each of which consists of the (disjoint) union of some pairwise comparable $ I_q $'s.  However, in general the partition into $ p $ sets is not unique, so it may happen that two distinct partitions lead to two non-isomorphic minimal DFAs with the same number of states.  For example, in Figure \ref{fig:2minimumDFA}, the chain partition $ \{ \{0,1,4\}, \{2,3, 5, 3'\} \} $ of the DFA in the center induces the partition $ \{ac^*  \cup \{\epsilon, e, h \}, bc^* \cup ac^*d \cup \{gd, ee, he,f,k,g \} \} $ of $ Pref (\mathcal{L}) $, whereas the chain partition $ \{ \{0,1,3\}, \{2,4, 5, 4'\}\} $ of the DFA on the right induces the partition $\{ac^* \cup ac^*d \cup \{\epsilon, gd,  ee,he, f,k \}, bc^* \cup \{e, h, g\}\}$ of $ Pref (\mathcal{L})$.

As a consequence, no uniqueness results can be ensured as long as partitions are not fixed. But what happens if we fix a partition? As we will prove in this section, once a partition is fixed, it is possible to prove a full Myhill-Nerode theorem, so providing a DFA-free characterization of convex languages and a minimum DFA.

More formally, let $ \mathcal{A} = (Q, s, \delta, F) $ be a DFA, and let $ \{Q_i \}_{i = 1}^p $ be a $ \le_\mathcal{A} $-chain partition of $ Q $. For every $ i \in \{1, \dots, p \} $, define:
\begin{equation*}
    Pref(\mathcal{L(A)})^i = \{\alpha \in Pref(\mathcal{L(A)}) | \delta(s, \alpha) \in Q_i  \}.
\end{equation*}
Then $ \{Pref(\mathcal{L(A)} \}_{i = 1}^p $ is a partition of $ Pref (\mathcal{L(A)}) $, and from now on we will think of such a partition as fixed. We now consider the class of all DFAs accepting $ \mathcal{L} $ and inducing the considered partition.

\begin{definition}\label{def:psortableNFA}
Let $ \mathcal{A} = (Q, s, \delta, F) $ be an DFA, and let $ \mathcal{P} = \{U_1, \dots, U_p \} $ be a partition of $ Pref (\mathcal{L(A)}) $. We say that $ \mathcal{A} $ is $ \mathcal{P} $-sortable if there exists a $ \le_\mathcal{A} $-chain partition $ \{Q_i \}_{i = 1}^p $ such that for every $ i \in \{1, \dots, p \} $:
\begin{equation*}
    Pref(\mathcal{L(A)})^i = U_i.
\end{equation*}
\end{definition}

We wish to give a DFA-free characterization of languages $ \mathcal{L} $ and the partitions $ \mathcal{P} $ of $ \pf L $ for which there exists a $ \mathcal{P} $-sortable DFA. As in the Myhill-Nerode theorem, we aim to determine which properties an equivalence relation $ \sim $ should satisfy to ensure that a canonical construction provides a $ \mathcal{P} $-sortable DFA. First, $ \mathcal{L} $ must be regular, so $ \sim $ is expected to be right-invariant. In order to develop some intuition on the required properties, let us consider an equivalence relation which plays a key role in the classical Myhill-Nerode theorem. Let $ \mathcal{A} = (Q, s, \delta, F) $ be a $ \mathcal{P} $-sortable DFA, and let $ \sim_\mathcal{A} $ the equivalence relation on $ Pref (\mathcal{L(A)})$ whose equivalence classes are $ \{I_q | q \in Q \} $.   Notice that equivalent string  end up in the same state and so in the same element of   $ \mathcal{P}$ (\emph{$ \mathcal{P} $-consistency}), and since all states in each $ \preceq_\mathcal{A} $-chain $ Q_i $ are comparable, then each $ I_q $ must be convex in the corresponding element of $ \mathcal{P} $ (\emph{$ \mathcal{P} $-convexity}). Formally:
 
\begin{definition}\label{def:PconsistentPconvex}
Let $ \mathcal{L} \subseteq \Sigma^* $ be a language, and let $ \sim $ be an equivalence relation on $ \pf L $. Let $ \mathcal{P} = \{U_1, \dots, U_p \} $ be a partition of $ \pf L $.
\begin{enumerate}
\item For every $ \alpha \in \pf L $, let $ U_\alpha $ be the unique element $ U_i $ of $ \mathcal{P} $ such that $ \alpha \in U_i $.
\item We say that $ \sim $ is \emph{$ \mathcal{P} $-consistent} if for every $ \alpha, \beta \in \pf L $, if $ \alpha \sim \beta $, then $ U_\alpha = U_\beta $.
\item Assume that $ \sim $ is $ \mathcal{P} $-consistent. We say that $ \sim $ is \emph{$ \mathcal{P} $-convex} if for every $ \alpha \in \pf L $ we have that $ [\alpha]_{\sim} $ is a convex set in $ (U_\alpha, \preceq) $.
\end{enumerate}
\end{definition}

As we will prove, these are exactly the required property for a DFA-free characterization. Lastly, we wish to define an equivalence relation with the above properties inducing a (unique) minimum $ \mathcal{P} $-sortable DFA. We expect such relation to be related with the Myhill-Nerode equivalence, so a sensible choice is $ \equiv_\mathcal{L}^\mathcal{P} $, which is the coarsest $ \mathcal{P} $-consistent, $ \mathcal{P} $-convex and right-invariant equivalence relation refining $ \equiv_\mathcal{L} $. The proof that such a coarsest equivalence relation exists is technical and it is provided in the Appendix \ref{app:nerode} (see Corollary \ref{cor:coarsest}). We now have all the required definitions to state our Myhill-Nerode theorem, which generalizes the one for Wheeler languages \cite{DBLP:conf/soda/AlankoDPP20}.
\begin{theorem}[Convex Myhill-Nerode theorem]\label{thm:MN}
    Let $ \mathcal{L} $ be a language. Let $ \mathcal{P} $ be a partition of $ \pf L $. The following are equivalent:
    \begin{enumerate}
        \item $ \mathcal{L} $ is recognized by a $ \mathcal{P} $-sortable DFA.
        \item $ \equiv_\mathcal{L}^\mathcal{P} $ has finite index.
        \item $ \mathcal{L} $ is the union of some classes of a $ \mathcal{P} $-consistent, $ \mathcal{P} $-convex, right invariant equivalence relation on $ \pf L $ of finite index.
    \end{enumerate}
Moreover, if one of the above statements is true (and so all the above statements are true), then there exists a unique minimum $ \mathcal{P} $-sortable DFA (that is, two $ \mathcal{P} $-sortable DFA having the minimum number of states must be isomorphic).
\end{theorem}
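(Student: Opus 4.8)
The plan is to mimic the classical Myhill-Nerode argument, carrying the partition $\mathcal{P}$ along as extra structure, and to organize everything as a cycle of implications $(1)\Rightarrow(3)\Rightarrow(2)\Rightarrow(1)$ plus the uniqueness statement. For $(1)\Rightarrow(3)$: given a $\mathcal{P}$-sortable DFA $\mathcal{A}=(Q,s,\delta,F)$ with witnessing $\preceq_{\mathcal{A}}$-chain partition $\{Q_i\}_{i=1}^p$, take the equivalence relation $\sim_{\mathcal{A}}$ on $\pf L$ whose classes are $\{I_q \mid q\in Q\}$. It has finite index and is right-invariant (standard). It is $\mathcal{P}$-consistent because strings in the same $I_q$ reach the same state, hence land in the same $Q_i$, hence in the same $U_i$. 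It is $\mathcal{P}$-convex because, within a fixed chain $Q_i$, the states are pairwise $\preceq_{\mathcal{A}}$-comparable; if $\alpha\prec\beta\prec\alpha'$ with $\alpha,\alpha'\in I_q$ and $\beta\in I_{q'}$ with $q'\in Q_i$, then $q'$ is comparable to $q$, but $I_q\prec I_{q'}$ and $I_{q'}\prec I_q$ are both contradicted by the existence of $\alpha,\beta,\alpha'$, so $q'=q$; thus $[\alpha]_{\sim_{\mathcal{A}}}=I_q$ is convex in $(U_\alpha,\preceq)$. Finally $\mathcal{L}(\mathcal{A})$ is the union of the classes $I_q$ with $q\in F$, giving $(3)$.

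For $(3)\Rightarrow(2)$: if $\sim$ is any $\mathcal{P}$-consistent, $\mathcal{P}$-convex, right-invariant, finite-index equivalence relation on $\pf L$ of which $\mathcal{L}$ is a union of classes, then $\sim$ refines $\equiv_{\mathcal{L}}$ (the standard Myhill-Nerode argument: right-invariance plus $\mathcal{L}$ being a union of $\sim$-classes forces $\alpha\sim\beta$ to imply $\alpha\gamma\in\mathcal{L}\iff\beta\gamma\in\mathcal{L}$ for all $\gamma$). So $\sim$ is one of the relations over which $\equiv_{\mathcal{L}}^{\mathcal{P}}$ is the coarsest, hence $\equiv_{\mathcal{L}}^{\mathcal{P}}$ is coarser than $\sim$ and therefore also has finite index. (Here I invoke Corollary~\ref{cor:coarsest} from the appendix for the existence of the coarsest such relation $\equiv_{\mathcal{L}}^{\mathcal{P}}$.)

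For $(2)\Rightarrow(1)$: run the canonical quotient construction on $\equiv_{\mathcal{L}}^{\mathcal{P}}$. Take states $Q=\pf L/{\equiv_{\mathcal{L}}^{\mathcal{P}}}$, initial state $[\epsilon]$, transition $\delta([\alpha],a)=[\alpha a]$ (well-defined by right-invariance, and defined precisely when $\alpha a\in\pf L$), and final states the classes contained in $\mathcal{L}$ (well-defined since $\mathcal{L}$ is a union of classes, as $\equiv_{\mathcal{L}}^{\mathcal{P}}$ refines $\equiv_{\mathcal{L}}$). This DFA $\mathcal{A}_{\equiv}$ recognizes $\mathcal{L}$; note $I_{[\alpha]}=[\alpha]_{\equiv_{\mathcal{L}}^{\mathcal{P}}}$. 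Define $Q_i=\{[\alpha] \mid \alpha\in U_i\}$; this is well-defined by $\mathcal{P}$-consistency, and it is a partition of $Q$. Each $Q_i$ is a $\preceq_{\mathcal{A}_\equiv}$-chain: given two distinct classes $[\alpha],[\beta]$ with $\alpha,\beta\in U_i$, $\mathcal{P}$-convexity says both classes are convex subsets of the linearly ordered set $(U_i,\preceq)$, and two disjoint convex subsets of a linear order are $\preceq$-comparable as sets; hence $I_{[\alpha]}\prec I_{[\beta]}$ or $I_{[\beta]}\prec I_{[\alpha]}$, i.e. the two states are $\preceq_{\mathcal{A}_\equiv}$-comparable. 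Finally $\pf{L(\mathcal{A}_\equiv)}^i=\{\alpha \mid \delta(s,\alpha)\in Q_i\}=U_i$, so $\mathcal{A}_\equiv$ is $\mathcal{P}$-sortable.

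For the uniqueness (minimum $\mathcal{P}$-sortable DFA): I will show any $\mathcal{P}$-sortable DFA $\mathcal{B}$ has at least as many states as $\mathcal{A}_\equiv$, with equality forcing isomorphism. From $(1)\Rightarrow(3)$ applied to $\mathcal{B}$, $\sim_{\mathcal{B}}$ is $\mathcal{P}$-consistent, $\mathcal{P}$-convex, right-invariant of finite index with $\mathcal{L}$ a union of classes, so (by the coarsest property) $\sim_{\mathcal{B}}$ refines $\equiv_{\mathcal{L}}^{\mathcal{P}}$; hence $|Q_{\mathcal{B}}|=\mathrm{index}(\sim_{\mathcal{B}})\ge\mathrm{index}(\equiv_{\mathcal{L}}^{\mathcal{P}})=|Q_{\mathcal{A}_\equiv}|$. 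If equality holds then $\sim_{\mathcal{B}}={\equiv_{\mathcal{L}}^{\mathcal{P}}}$, and the map $[\alpha]_{\equiv_{\mathcal{L}}^{\mathcal{P}}}\mapsto\delta_{\mathcal{B}}(s_{\mathcal{B}},\alpha)$ is a well-defined bijection commuting with transitions and preserving initial and final states, i.e. an isomorphism of DFAs. I expect the main obstacle to be the clean deployment of the coarsest-relation machinery $\equiv_{\mathcal{L}}^{\mathcal{P}}$: one must know that the family of $\mathcal{P}$-consistent, $\mathcal{P}$-convex, right-invariant equivalence relations refining $\equiv_{\mathcal{L}}$ is closed under the appropriate join so that a coarsest element exists — this is exactly what Corollary~\ref{cor:coarsest} supplies, and the body of the proof is just threading that fact through the three implications and the counting argument; the other delicate point is the elementary but easy-to-botch fact that two disjoint convex subsets of a linear order are comparable in the lifted order $\preceq$ on sets, which is what converts $\mathcal{P}$-convexity into the chain condition.
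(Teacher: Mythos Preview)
Your proposal is correct and follows essentially the same approach as the paper: both rely on the equivalence $\sim_{\mathcal{A}}$ attached to a $\mathcal{P}$-sortable DFA (the paper's Lemma~\ref{simA}), the quotient construction (the paper's Lemma~\ref{from_equiv_to_DFA}), and Corollary~\ref{cor:coarsest} for the existence of $\equiv_{\mathcal{L}}^{\mathcal{P}}$. The only cosmetic difference is the direction of the cycle---the paper proves $(1)\Rightarrow(2)\Rightarrow(3)\Rightarrow(1)$ whereas you prove $(1)\Rightarrow(3)\Rightarrow(2)\Rightarrow(1)$---but the content of the steps is the same, and your uniqueness argument is identical to the paper's.
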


Notice that for a language $ \mathcal{L}$  it holds $ width(\mathcal{L}) = p $ if and only if there exists a partition $ \mathcal{P} $ of $ \pf L $ that satisfies any of the statements in Theorem \ref{thm:MN}.

Given a $ \mathcal{P} $-sortable DFA recognizing $ \mathcal{L} $, it is possible to build the minimum $ \mathcal{P} $-sortable DFA recognizing $ \mathcal{L} $ in polynomial time by generalizing the algorithm for the Wheeler case \cite{DBLP:conf/soda/AlankoDPP20}. More details on the algorithm and its complexity will follow in a companion paper.  
\section{Conclusions}

The two topics discussed in this paper are \emph{regular languages} and \emph{order}. The link between these two important notions is built upon a technique to cast the co-lexicographic order of (sets of) strings over a (partial) order of the states of a finite state automaton accepting a regular language.
Finite state automata are among the most basic models of computation and ordering is among the most basic data-structuring mechanisms.  Hence, it comes as no surprise that the most powerful algorithmic techniques for solving pattern matching obtained over the last decades exploit precisely this notion of ordering.

Proceeding along the same line, the main theorem that we proved here is the existence of a canonical, partially-ordered automaton realising the minimum size of an anti-chain (the ``width") of its states. The partial order of this canonical automaton is built from a convex decomposition of the co-lexicographically ordered collection of the prefixes of the language. This suggests an \emph{absolute} measure of complexity for regular languages: the width of its canonical automaton. Such a measure turns out to be somehow orthogonal to (and sometimes conflicting with) the mere counting of the number of states of the automaton. 

 At the same time, the width has a clear algorithmic interpretation because it measures the disposition of the language to be ``index-able'' \cite{NicNic2021}. 

Several intriguing questions can be raised at this point: how much of our findings can be extended to other models of computation---starting from non-deterministic finite automata? What is the exact role played by the specific (co-lex) ordering that we used? Are there alternatives? Is it possible to give effective or approximate constructions of the Hasse automaton? A positive answer to this last question would go a long way towards building a minimum-size automaton among those of minimum width accepting a regular language.

Finally, it would be remarkable to determine a regular-expression-like description of the hierarchy of regular languages based on the width, which would imply both theoretical results (a Kleene-like theorem) and concrete applications (regex pattern matching). 

 \appendix 
 \section{Proofs of Section \ref{sec:ent}}\label{appA}
 
Let us provide more details about Example \ref{ex:no_minimal}.

\paragraph*{Example \ref{ex:no_minimal}.}
 Consider the DFA $ \mathcal{A}_1 $ on the left of Figure  \ref{fig:2minimumDFA} and let $ \mathcal{L} = \mathcal{L}(\mathcal{A}_1) $. Then, $ \mathcal{A}_1 $ is a minimum DFA with states $0, \ldots,5$ and $I_0=\{\epsilon\}$, $I_1=\{ac^*\}$, $I_2=\{bc^*\}$, $I_3=\{ac^*d, gd,  ee,he, f,k \}$, $I_4=\{e,h\}$, $I_5=\{g\}$. States $1 $ and $ 2 $ are $\preceq_{\mt A}$-incomparable because $a\in I_1, b\in I_2, ac\in I_1$ and $a\prec b\prec ac$.
Similarly one checks that states $3,4,5$ are pairwise  $\preceq_{\mt A}$-incomparable. On the other hand, $0$ is the minimum and states $1,2$ precede states $3,4,5 $ in the order $\preceq_{\mt A}$. We conclude:
\begin{equation*}
\preceq_{\mt A}=\{(0,1),(0,2), (0,3), (0,4), (0,5),(1,3),(1,4),(1,5),(2,3),(2,4),(2,5)\} \cup \{(i, i) \;|\; 0 \le i \le 5 \}\}.
\end{equation*}
The width of the DFA is $ 3 $  because $ \{3, 4, 5 \} $ is the largest $ \preceq_\mathcal{A} $-antichain. A $ \preceq_\mathcal{A} $-chain partition of cardinality $ 3 $ is, for example, $ \{\{0, 1, 3 \}, \{2, 4 \}, \{5 \} \} $.

Let us prove that $ width(\mathcal{L}) \ge 2 $. Suppose by contradiction that there exists a   DFA $\mt B$ of width $1$    recognizing  $\mt L $. Then, the order $\preceq_{\mt B}$ is  total. Moreover,  there   exists a state $q$ such that two words of the infinite set 
   $ac^*\in \pf L$, say $ac^i, ac^j$  with $i<j$,    belong to $I_q$.
   Since $ac^i\prec bc^i\prec ac^{i+1}$ with 
   $bc^i\in \pf L$ and 
   $bc^i\not \equiv_{\mt L} ac^{i+1}$ it follows that $bc^i\not \in I_q$. If $q'$ is such that $bc^i\in I_{q'}$ we have
   that $q$ and $q'$ are $\preceq_{\mt B}$-incomparable, a contradiction.

Finally, let $ \mathcal{A}_2 $ be the DFA in the center of Figure  \ref{fig:2minimumDFA} and let $ \mathcal{A}_3 $ be the DFA on the right of Figure  \ref{fig:2minimumDFA}. Notice that $ \mathcal{L}(\mathcal{A}_2) = \mathcal{L}(\mathcal{A}_3) = \mathcal{L} $, and $ \mathcal{A}_2 $ and $ \mathcal{A}_3 $ have just one more state than $ \mathcal{A}_1 $ and are non-isomorphic. We know that $ \mathcal{A}_2 $ and $ \mathcal{A}_3 $ cannot have width equal to 1. On the other hand, they both have width 2, as witnessed by $\{ \{0,1,4\}, \{2,3 ,5, 3' \} \}$ (for $ \mathcal{A}_2 $) and $ \{ \{0,1,3\}, \{2,4, 5, 4' \}\} $ (for $ \mathcal{A}_3 $).

\paragraph*{}

Here is the proof of Lemma \ref{lem:minent}.

\paragraph*{\textbf{Statement of Lemma \ref{lem:minent}.}}
If  $\mathcal A $ is the minimum   DFA recognizing $\mathcal L$ then
 $\text{ent}(\mathcal A)= \text{ent}(\mathcal L)$.

\paragraph*{\textbf{Proof}}
We prove that $\text{ent}(\mathcal A)\leq \text{ent}(\mathcal B)$, for any  $\mathcal B$ equivalent to $\mathcal A $. Suppose $q_1, \ldots, q_k$ are pairwise distinct states which are    entangled in $\mathcal A $, witnessed by the monotone sequence $(\alpha_i)_{i\in \mathbb N}$. Since $\mathcal A$ is minimum, each  $I_{q_j}$ is a union of a finite number of  $I_u$, with $u\in Q_{\mt B}$. 
 The monotone sequence $(\alpha_i)_{i\in \mathbb N}$
 goes through $q_j$ infinitely often,  so there must be a state $u_j\in Q_{\mt B}$ such that $I_{u_j}\subseteq I_{q_j}$ and  $(\alpha_i)_{i\in \mathbb N}$
 goes through $u_j$ infinitely often.
Then  $(\alpha_i)_{i\in \mathbb N}$ goes through the pairwise distinct states $u_1,\ldots, u_k$ infinitely often  and $u_1, \ldots, u_k$  are entangled in  $\mathcal B$. \qed

\subsection{The minimum-size entangled convex  decomposition}\label{app:minimal_convex}

This section is devoted to the proof of Theorem \ref{thm:fdt}, which will follow from a general result valid for an arbitrary linear order. From now on, we fix a linear order $ (Z, \le) $ and a finite partition $ \mathcal P = \{ P_{1}, \ldots , P_{m}\} $ of $ Z $. Theorem \ref{thm:fdt} will follow by letting $ (Z, \le) = (Pref(\mathcal{L(B)}, \preceq) $ and $ \mathcal{P} = \{I_q | q \in Q \} $.

It is convenient to think of  $\mathcal P$ as an alphabet from which we can generate finite or infinite strings. A finite string $P_{ 1}\ldots P_{ k} \in \mathcal P^{*}$  is said to be {\em generated} by  $X \subseteq Z $, if there exists a  sequence $x_{1}\leq\ldots  \leq x_{k}$  of elements in $X$   such that 
$x_{j}\in P_{j}$, for all $j=1, \ldots,k$.   
In these hypotheses, we also say that $P_{ 1}\ldots P_{ k} $ {\em occurs} in $ X $  at  $x_{1}, \ldots, x_{k}$.
Similarly,  an infinite string  $P_{1}\ldots P_{k} \ldots   \in \mathcal P^{\omega}$ is  generated by  $X \subseteq Z $ if there exists  a monotone  sequence  $(x_i)_{i\in \mathbb N}$  of elements in $X$   such that 
$x_{j}\in P_{j}$, for all  $j\in \mathbb N$.

We can now generalize Definition \ref{def:ent}.

\begin{definition}\label{def:entangled}
Let $ (Z, \le) $ be a total order, and let $ \mathcal{P} $ be a partition of $ Z $. Let $ X \subseteq Z $.
\begin{enumerate}
    \item We define $\mathcal P_{X}=\{P\in \mathcal P : P \cap X\neq\emptyset\}$.
    \item If $\mathcal P'=\{P_1, \ldots, P_m\}\subseteq \mathcal P$, we say that $\mathcal P'$ is  {\em entangled} in $X$   if the infinite string $(P_1, \ldots, P_m)^\omega$ is generated by $X$.
    \item We say that $ X $ is {\em entangled} if $ \mathcal P_{X}$ is entangled in $ X $.
\end{enumerate}
\end{definition}

The property of being entangled is captured by an infinite string $(P_1, \ldots, P_m)^\omega$. Let us prove that, in fact, arbitrary long strings express the same property.

\begin{lemma}\label{lem:m_implies_wm}
Let $ (Z, \le) $ be a total order, and let $ \mathcal{P} $ be a partition of $ Z $. Let $ X \subseteq Z $, and let $ \mathcal P'=\{P_1, \ldots, P_m\}\subseteq \mathcal P$. The following are equivalent:
\begin{enumerate}
    \item For every $k\in \mathbb N$, the string $ (P_1, \ldots, P_m)^k$ is generated by $X$.
    \item $ (P_1, \ldots, P_m)^\omega $ is generated by $ X $
\end{enumerate}
\end{lemma}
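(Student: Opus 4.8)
The implication $(2)\Rightarrow(1)$ is immediate: a monotone sequence $(x_i)_{i\in\mathbb N}$ witnessing $(P_1,\dots,P_m)^\omega$ contains, as an initial segment, a witness for $(P_1,\dots,P_m)^k$ for every $k$, since the first $mk$ terms already realize the pattern. So the whole content is in $(1)\Rightarrow(2)$, and the natural tool is a compactness / König's lemma argument. The plan is to build the required infinite monotone sequence one block at a time, extracting a "direction" (increasing or decreasing) that recurs infinitely often.

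\textbf{Key steps.} First I would observe that without loss of generality we may assume the witnessing sequences for the finite strings are \emph{monotone} — but actually we cannot assume a single direction a priori, so the argument must split. For each $k$, fix a monotone witness $x_1^{(k)}\le\dots\le x_{mk}^{(k)}$ (or $\ge$) for $(P_1,\dots,P_m)^k$; by the pigeonhole principle infinitely many of these are non-decreasing (the non-increasing case is symmetric, handled by reversing the order), so assume all witnesses are non-decreasing. Now consider the first block: for each $k$, the tuple $(x_1^{(k)},\dots,x_m^{(k)})\in P_1\times\dots\times P_m$ is a non-decreasing sequence, and moreover $x_m^{(k)}$ is $\le$ all of $x_{m+1}^{(k)},\dots,x_{mk}^{(k)}$, so the \emph{remainder} $x_{m+1}^{(k)}\le\dots\le x_{mk}^{(k)}$ is itself a non-decreasing witness for $(P_1,\dots,P_m)^{k-1}$ sitting entirely above $x_m^{(k)}$. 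The crucial combinatorial point is therefore: \emph{for every $z\in Z$ and every $k$, if there is a non-decreasing witness for $(P_1,\dots,P_m)^{k}$ all of whose elements are $\ge z$, then either there is such a witness for \emph{every} $k$ with elements $\ge z$, or there is some $z'\ge z$ beyond which no such witness exists.} Iterating, I would define inductively an increasing sequence of "base points" $z_0\le z_1\le z_2\le\dots$ and, at each stage, a fresh copy of the block $P_1,\dots,P_m$ realized above $z_{j}$ and below $z_{j+1}$, provided that at every stage witnesses for arbitrarily long strings remain available above the current base point.

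\textbf{The main obstacle.} The delicate part is ensuring that the recursion never gets stuck, i.e. that after peeling off finitely many blocks there are still non-decreasing witnesses for arbitrarily long strings above the current base point. This is exactly where a König-type / infinitary pigeonhole argument is needed: consider the "tree" whose nodes at depth $j$ are the possible values of the endpoint of the $j$-th block, ordered by the prefix relation on witnesses; finitely much branching is not guaranteed, so instead I would argue directly. Suppose for contradiction that after extracting $j$ blocks below some point $z$ we can no longer extend: then for each $z$ there is a bound $N_z$ on the length of strings witnessable above $z$. But for each $k$, splitting the length-$k$ witness after its first $j$ blocks gives a witness of length $k-j$ above a point which is one of finitely many "types" (its block-position in $\mathcal P$); pushing this through, unboundedness of $k$ forces unboundedness of witnessable lengths above \emph{some} fixed point, contradicting $N_z$. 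Formalizing this contradiction cleanly — choosing the right quantity to do induction on, and handling the interaction between the direction (increasing vs.\ decreasing) and the base-point bookkeeping — is the technical heart of the lemma; everything else is routine bookkeeping about concatenating blocks into a single monotone $\omega$-sequence.
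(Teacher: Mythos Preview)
Your argument has a genuine gap, and it is exactly at the point you flag as ``the main obstacle.''

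First, a minor point: by the paper's definition a \emph{finite} string is generated only by a non-decreasing sequence, so your pigeonhole step on the direction of the finite witnesses is vacuous --- every finite witness is already non-decreasing. This is harmless in itself, but it means your case split never reaches the ``non-increasing'' branch, and you are committed to building a \emph{non-decreasing} $\omega$-witness.

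That commitment is fatal. Take $Z=\{1/n:n\ge 1\}$ with the usual order, $X=Z$, $P_1=\{1/n:n\text{ odd}\}$, $P_2=\{1/n:n\text{ even}\}$. For every $k$ the non-decreasing sequence $1/(2k{+}1)<1/(2k)<\dots<1/2$ witnesses $(P_1P_2)^k$, so hypothesis (1) holds. But above any point $1/n$ there are only finitely many elements of $Z$, so for \emph{every} choice of first block $b_1\le b_2$ your recursion stalls immediately: no point has arbitrarily long witnesses above it, and your proposed contradiction (``unboundedness of $k$ forces unboundedness of witnessable lengths above some fixed point'') simply does not fire. Nevertheless $(P_1P_2)^\omega$ \emph{is} generated --- by the \emph{decreasing} sequence $1,\tfrac12,\tfrac13,\dots$.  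So the direction of the $\omega$-witness need not match the direction of the finite witnesses, and a K\"onig-style block-peeling argument in the increasing direction cannot succeed in general.

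The paper's proof makes the dichotomy at a different place. Fixing one non-decreasing witness $(x^k_1,\dots,x^k_{mk})$ for each $k$, it asks: is there a position $i_0$ beyond which, for every $x^r_i$, some other witness has its $(i{+}1)$-th element strictly below $x^r_i$? If yes, one hops from witness to witness, decreasing at each step, and obtains a \emph{decreasing} $\omega$-witness (this is what happens in the example above). If no, then there are infinitely many positions $i$ carrying a ``barrier'' element $x^{r}_i$ that lies strictly below every $(i{+}1)$-th element of every witness; splicing along these barriers produces an \emph{increasing} $\omega$-witness. The point is that the correct case split is not on the direction of the given finite witnesses, but on this cross-witness comparison, which is precisely what your proposal is missing.
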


\begin{proof}
The nontrivial implication is $ (1) \rightarrow (2) $. If $ m = 1 $ the conclusion is immediate, so we can assume $ m \ge 2 $. Let  $ \alpha_k = (x^k_1, \dots, x^k_{mk}) $ be an increasing sequence witnessing that $ (P_{1} \cdots P_{m})^{k} $ occurs in $ X $. 
We distinguish two cases:
\begin{enumerate}
    \item There exists an integer $ i_0 $ such that for every integer $ i > i_0 $ and for every integer $ r $ such that $ x_i^r $ is defined there exists an integer $ h(i, r) $ such that $ x^{h(i, r)}_{i + 1} $ is defined and $ x^{h(i, r)}_{i + 1} < x^{r}_{i} $. Let $ i > i_0 $ be such that $ x_i^r \in P_1 $ for every $ r $ such that $ x_i^r $ is defined, fix any such $ r $, and define $ r_{i} := r $. Now, for $ j > i $, define recursively $ r_{j + 1} := h(j, r_j) $. Then, the sequence $ x_{i}^{r_i} $, $ x_{i + 1}^{r_{i + 1}}$, $ x_{i + 2}^{r_{i + 2}} $, $ \dots $ is a decreasing sequence in $ X $ witnessing that $ (P_{1} \cdots P_{m})^{\omega} $ is generated by $ X $.
    \item For every integer $ i_0 $ there exists an integer $ i > i_0 $ and there exists an integer $ r(i_0) $ such that $ x^{r(i_0)}_{i} $ is defined and for every integer $ h $ for which $ x^{h}_{i + 1} $ is defined it holds $ x^{h}_{i + 1} > x^{r(i_0)}_{i} $ (equality cannot hold because $ m \ge 2$ ). Now, fix $ i_0 = 1 $ and call $ i_1 $ the correspondent $ i > i_0 $. Next, consider $ i_1 $ and call $ i_2 $ the correspondent $ i > i_1 $, and so on. Hence $ i_1 < i_2 < i_3 < \dots $. Consider the sequence $ x_1^{r(i_0)} $, $ x_2^{r(i_0)} $, $ \dots $, $ x_{i_1 - 1}^{r(i_0)} $, $ x_{i_1}^{r(i_0)} $, $ x_{i_1 + 1}^{r(i_1)} $, $ x_{i_1 + 2}^{r(i_1)} $, $ \dots $, $ x_{i_2 -1 }^{r(i_1)} $, $ x_{i_2}^{r(i_1)} $, $ x_{i_2 + 1}^{r(i_2)} $, $ x_{i_2 + 2}^{r(i_2)} $, $ \dots $, $ x_{i_3 -1 }^{r(i_2)} $, $ x_{i_3}^{r(i_2)} $, $ x_{i_3 + 1}^{r(i_3)} $, $ \dots $. We claim that this is an increasing sequence in $ X $ witnessing that $ (P_{1} \cdots P_{m})^{\omega} $ is generated by $ X $. First, notice that all $ x_i^k $'s in the sequence are defined (that is, it actually holds $ i \le mk $ ) because we know that $ x_{i_1}^{r(i_0)} $, $ x_{i_2}^{r(i_1)} $, $ \dots $ are defined. Moreover, since the subscripts of the sequence are $ 1 $, $ 2 $, $ 3 $, $ \dots $, then the sequence witnesses that $ (P_1, \dots, P_m)^\omega $ occurs in $ X $, if we prove that it is monotone. Let us prove that this sequence is increasing. The subsequence $ x_1^{r(i_0)} $, $ x_2^{r(i_0)} $, $ \dots $, $ x_{i_1 - 1}^{r(i_0)} $, $ x_{i_1}^{r(i_0)} $ is increasing because it is contained in $ \alpha_{r(i_0)} $, the subsequence $ x_{i_1 + 1}^{r(i_1)} $, $ x_{i_1 + 2}^{r(i_1)} $, $ \dots $, $ x_{i_2 -1 }^{r(i_1)} $, $ x_{i_2}^{r(i_1)} $ is increasing because it is contained in $ \alpha_{r(i_1)} $, and so on. Finally, $ x_{i_1}^{r(i_0)} < x_{i_1 + 1}^{r(i_1)} $, $ x_{i_2}^{r(i_1)} < x_{i_2 + 1}^{r(i_2)} $ and so on by the definition of $ i_1 $, $ i_2 $, $ \dots $. \qed
\end{enumerate}
\end{proof}

Let us generalize the definition of e.c. decomposition given in Section \ref{sec:ent} for DFAs.

\begin{definition}
Let $ (Z, \le) $ be a total order, and let $ \mathcal{P} $ a partition of $ Z $. We say that a partition  $ \mathcal V  $ of $Z$ is an {\em e.c. decomposition} of $ \mathcal P$ in $(Z,\leq)$ if
all elements of $ \mathcal V  $ are entangled and convex in $ (Z, \le) $.
\end{definition}

 \begin{example}
 Consider $ (\mathbb{Z}, \le) $, where $\mathbb Z$ is the set of all integers and $ \le $ is the usual order on $\mathbb Z$. Let $\mathcal P=\{P_1,P_2,P_3\}$ the partition of $ \mathbb{Z} $ defined as follows:
 
   \begin{align*}
& P_1=\{n \leq 0 : n \text{~ is odd}\} \cup \{n >  0 : n \equiv 1 ~\text{mod } 3\}&\\
&  P_2=\{n\leq 0 : n \text{~ is even}\}  \cup \{n >  0 : n \equiv 2 ~\text{mod } 3\} , &\\
 &P_3=\{n >  0 : n \equiv 0 ~\text{mod } 3\} &
\end{align*}

The partition $\mathcal P$ generate the following \emph{trace}  over $\mathbb Z$:
\[\ldots P_1P_2P_1P_2 \ldots P_1P_2 P_1P_2P_3P_1P_2P_3\ldots\]

Now define $\mt V=\{V_1,V_2\}$, where 
$V_1=\{n\in \mathcal Z : n\leq 0\}$, $V_2=\{n\in \mathcal Z : n>0\}$. It is immediate to check that $ \mathcal{V} $ is an e.c. decomposition of $ \mathcal{P} $ in $ (\mathbb{Z}, \le) $. More trivially, even $\mt V'=\{\mathbb Z\}$ is an e.c. decomposition of $\mathcal P$ in $ (\mathbb{Z}, \le) $.
 \end{example}

Let $ (Z, \le) $ be a total order, and let $ \mathcal{P} $ a partition of $ Z $. Notice that $ \mathcal{V} = \{\{z\} | z \in Z \} $ is an e.c. decomposition of $ \mathcal{P} $ in $ (\mathbb{Z}, \le) $. However, if $ Z $ is an infinite set, then $ \mathcal{V} $ is an infinite e.c. decomposition. The main result of this section is that if $ \mathcal{P} $ is a finite partition, then there exists a \emph{finite} e.c. decomposition of $ \mathcal{P} $ in $ (\mathbb{Z}, \le) $ even when $ Z $ is an infinite set.

\begin{theorem} \label{thm:generalfdt}
    Let $ (Z, \le) $ be a total order, and let $ \mathcal{P} = \{P_1, \dots, P_m \} $ be a finite partition of $ Z $. Then, $ \mathcal{P} $ admits a finite e.c. decomposition in $ (Z, \le) $.
\end{theorem}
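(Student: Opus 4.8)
The plan is to argue by induction on $m$, the number of classes in the partition $\mathcal{P}$. The base case $m=1$ is trivial: $\mathcal{V}=\{Z\}$ works, since $Z$ generates $P_1^\omega$ (any infinite monotone sequence — and one exists, or else $Z$ is finite and we take singletons). For the inductive step, assume the result holds for every total order equipped with a partition into fewer than $m$ classes. The key dichotomy is whether the full alphabet $\mathcal{P}$ itself is entangled in $Z$, i.e., whether $(P_1\cdots P_m)^\omega$ is generated by $Z$. If it is, then $\mathcal{V}=\{Z\}$ is already a finite e.c. decomposition and we are done. So assume $\mathcal{P}$ is \emph{not} entangled in $Z$; by Lemma \ref{lem:m_implies_wm} this means there is some finite bound $k_0$ such that the string $(P_1\cdots P_m)^{k_0}$ is \emph{not} generated by $Z$.

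The heart of the argument is then to use this failure to cut $Z$ into finitely many convex pieces, on each of which the \emph{effective} alphabet is smaller, so that the induction hypothesis applies. Concretely, I would look at the maximal length $\ell$ of a power $(P_1\cdots P_m)$-pattern — or more robustly, of \emph{any} word in $\mathcal{P}^*$ of a suitable form — that can occur, and use a "first occurrence" / greedy sweep along $(Z,\le)$ to delimit blocks: starting from the bottom, let the first convex block be a maximal initial segment in which some fixed class, say, is "missing enough" that the block cannot itself contain arbitrarily long copies of the full cyclic pattern. More carefully, the clean way is: since $(P_1\cdots P_m)^{k_0}$ does not occur in $Z$, one can partition $Z$ into convex intervals $Z_1 < Z_2 < \dots < Z_t$ (with $t$ bounded in terms of $k_0$ and $m$) such that within each $Z_j$ at least one class of $\mathcal{P}$ does not appear — equivalently $\mathcal{P}_{Z_j} \subsetneq \mathcal{P}$. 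This is the step I expect to be the main obstacle: one must show both that such a convex partition exists and that it is \emph{finite}, extracting the interval boundaries from the positions where a "new cycle" of the pattern is forced to start, and bounding their number. Once this is done, each $(Z_j,\le)$ carries the induced partition $\mathcal{P}_{Z_j}$ of size $< m$; by the induction hypothesis each $Z_j$ admits a finite e.c. decomposition $\mathcal{V}_j$ in $(Z_j,\le)$, where "entangled" and "convex" are computed inside $Z_j$.

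Finally I would glue the pieces: set $\mathcal{V} = \bigcup_{j=1}^t \mathcal{V}_j$. This is a finite partition of $Z$. Convexity in $(Z,\le)$ is inherited because each $V\in\mathcal{V}_j$ is convex in $Z_j$ and $Z_j$ is itself convex in $Z$, and distinct $Z_j$'s are separated. Entanglement is the point to double-check: a set $V\subseteq Z_j$ that is entangled in $(Z_j,\le)$ — meaning it generates $(\mathcal{P}_{Z_j})_V^\omega$ via a monotone sequence inside $Z_j$ — is entangled in $(Z,\le)$ as well, since $\mathcal{P}_V$ (computed in $Z$) equals $(\mathcal{P}_{Z_j})_V$ (the classes of $\mathcal{P}$ meeting $V$ are exactly the classes of $\mathcal{P}_{Z_j}$ meeting $V$, as $V\subseteq Z_j$) and the witnessing monotone sequence lives in $Z_j\subseteq Z$. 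Hence $\mathcal{V}$ is the desired finite e.c. decomposition, completing the induction. I would also remark, for Theorem \ref{thm:fdt} and its Remark \ref{rem:special}, that specializing $(Z,\le)=(\pf{L(\mathcal B)},\preceq)$ and $\mathcal{P}=\{I_q \mid q\in Q\}$ recovers the statement for DFAs, since a class $I_q$ meeting a convex entangled $V$ corresponds to state $q$ lying on the monotone witness sequence infinitely often.
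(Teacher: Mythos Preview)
Your outer induction on $m$, the entangled/not-entangled dichotomy, and the gluing step are all correct and match the paper's scaffolding. The gap is precisely where you flag it: the assertion that, once $(P_1\cdots P_m)^{k_0}$ fails, one can split $Z$ into finitely many convex blocks $Z_1<\cdots<Z_t$ with $\mathcal P_{Z_j}\subsetneq\mathcal P$. This is not a routine greedy-sweep lemma. In fact, modulo the induction hypothesis on $m$, your claim is \emph{equivalent} to the theorem at level $m$: given any finite e.c.\ decomposition $\mathcal V$ of a non-entangled $Z$, every $V\in\mathcal V$ already satisfies $\mathcal P_V\subsetneq\mathcal P$ (otherwise the monotone witness in $V$ would entangle all of $\mathcal P$ in $Z$), so the partition you are after is just $\mathcal V$ itself. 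Your ``main obstacle'' is therefore not a stepping stone toward the theorem---it \emph{is} the theorem, restated.

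The natural direct attack (some $P_i$ is bounded above, some $P_j$ bounded below, peel off the two tails, recurse on the middle interval) does not terminate: the middle can again contain all $m$ classes and again fail to be entangled, with no well-founded quantity forced to drop. The paper supplies exactly the missing mechanism, and it is genuinely different from what you sketch. It does \emph{not} try to produce pieces missing a class. Instead it first cuts $Z$ into finitely many convex pieces in which $(P_{\pi(1)}\cdots P_{\pi(m)})^2$ fails for \emph{every} permutation $\pi$ (a finite greedy splitting, bounded by the $s_\pi$'s coming from Lemma~\ref{lem:m_implies_wm}); these pieces may still meet all $m$ classes. Then, on each such piece, it runs a \emph{second} induction on $k=$ the number of permutations $\pi$ for which the single block $P_{\pi(1)}\cdots P_{\pi(m)}$ occurs. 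For $k=1$ the classes are convexly layered and one finishes by hand; for $k\ge 2$ one splits at the last position where a fixed pattern $\pi_0$ can still start, and the no-square property forces $\pi_0$ to disappear on both sides, so $k$ strictly decreases. That nested induction on $(m,k)$ is the real content of the proof, and it is what your proposal does not provide.
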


\begin{proof}
We proceed by induction on $ m = |\mathcal{P} | $. If $ m = 1 $, then $ \mathcal{P} = \{Z \} $, so $ \{Z \} $ is an e.c. decomposition of $ \mathcal{P} $ in $ (Z, \le) $. Now assume $ m \ge 2 $. If $ (P_1, \dots, P_m)^\omega $ is generated by $ Z $, then again $ \{Z \} $ is an e.c. decomposition of $ \mathcal{P} $ in $ (Z, \le) $. Otherwise, let $ \pi $ any permutation of the set $ \{1, \dots, m \} $. Since $ (P_1, \dots, P_m)^\omega $ is not generated by $ Z $, then $ (P_{\pi (1)}, \dots, P_{\pi (m)})^\omega $ is not generated by $ Z $, hence by Lemma \ref{lem:m_implies_wm} there exists an integer $ s_\pi $ such that $ (P_{\pi (1)}, \dots, P_{\pi (m)})^{s_\pi}$ is not generated by $ Z $. Now, consider the following procedure:
\begin{algorithmic}[1]
    \STATE{$t \leftarrow 1; Z_{1} \leftarrow Z; \mathcal V \leftarrow \{Z_{1}\}$;}
    \COMMENT{initialise the partition}
    \FOR{ $ \pi $ being a permutation of $ \{1, \dots, m \}$}
          \WHILE{ $\exists Z_{i} \in \mathcal V$ generating $ (P_{\pi(1)}\cdots P_{\pi(m)})^{2} $ }
            \STATE{let $ \alpha_{1}  < \cdots < \alpha_{m}  < \alpha_{1}' < \cdots < \alpha_{m}' $ in $ Z_{i} $ be such that $\alpha_{j}, \alpha_{j}' \in P_{\pi(j)}$;   }
            \STATE{$   Z_{i} \leftarrow \{ \alpha \in Z_{i}\ | \ \alpha \leq \alpha_{m}\}$;}
            \STATE{$   Z_{t+1} \leftarrow \{ \alpha \in Z_{i}\ | \ \alpha > \alpha_{m}\}$;}
            \STATE{$ \mathcal V \leftarrow \mathcal V \cup \{Z_{t+1}\} $; }
            \STATE{$ t \leftarrow t+1 $; }
          \ENDWHILE
    \ENDFOR
            \RETURN{$\mathcal V$}
\end{algorithmic}
The procedure starts from $ \mathcal{V} = \{Z \} $, and recursively partitions one element from $ \mathcal{V} $ into two nonempty convex subsets as long as $ (P_{\pi (1)}, \dots, P_{\pi (m)})^2 $ occurs in the considered element. Notice the procedure ends after at most $ \sum_{\pi} s_\pi $ iterations, returning a finite partition $ \mathcal{V} $ of $ Z $ into convex sets such that every $ V \in \mathcal{V} $ has the property that for every permutation $ \pi $ the string $ (P_{\pi (1)}, \dots, P_{\pi (m)})^2 $ does not occur in $ V $. It will suffice to prove that for every $ V \in \mathcal{V} $ the partition $ \{P \cap V |P \in \mathcal{P}_V \} $ of $ V $ (whose cardinality is at most $ m $) admits a finite e.c. decomposition in $ (V, \le) $, because then we will obtain a finite e.c. decomposition of $ \mathcal{P} $ in $ (Z, \le) $ by merging the decompositions obtained for each $ V $.

Fix $ V \in \mathcal{V} $. If $ |\{P \cap V |P \in \mathcal{P}_V \}| < m $, the conclusion follows by the inductive hypothesis. Now assume $ |\{P \cap V |P \in \mathcal{P}_V \}| = m $. It will suffice to prove the following: if $ (Z, \le) $ is a total order, and $ \mathcal{P} = \{P_1, \dots, P_m \} $ is a finite partition of $ Z $ such that for every permutation $ \pi $ of $ \{1, \dots, m \} $ the string $ (P_{\pi (1)}, \dots, P_{\pi (m)})^2 $ does not occur in $ Z $, then $ \mathcal{P} $ admits a finite e.c. decomposition in $ (Z, \le) $.

Let $ k \ge 1 $ be the number of distinct permutations $ \pi $ of $ \{1, \dots, m \} $ such that $ (P_{\pi (1)}, \dots, P_{\pi (m)}) $ occurs in $ Z $. We proceed by induction on $ k $. If $ k = 1 $, then each set $ Z \cap P_j $ is convex and trivially entangled, so $ \{Z \cap P_j | 1 \le j \le m \} $ is an e.c. decomposition of $ \mathcal{P} $ in $ (Z, \le) $. Now assume $ k \ge 2 $. Let $ \pi_0 $ be a permutation such that $ (P_{\pi_0 (1)}, \dots, P_{\pi_0 (m)}) $ occurs in $ Z $. Define:
\begin{equation*}
    Z_1 = \{\alpha \in Z | \text{$ \exists \alpha_1, \dots, \alpha_m $ with $ \alpha \le \alpha_1 < \dots \alpha_m $ and $ \alpha_i \in P_{\pi_0 (i)}$ \}}
\end{equation*}
and $ Z_2 = Z \setminus Z_1 $. Notice that $ Z_1 $ is nonempty because $ (P_{\pi_0 (1)}, \dots, P_{\pi_0 (m)}) $ occurs in $ Z $. Let us prove that $ (P_{\pi_0 (1)}, \dots, P_{\pi_0 (m)})) $ does not occur in $ Z_1 $ nor in $ Z_2 $, and that $ Z_2 $ is nonempty. Just observe that if $ \alpha_1, \dots, \dots \alpha_m $ is a witness for $ (P_{\pi_0 (1)}, \dots, P_{\pi_0 (m)}) $ in $ Z $, then $ \alpha_1 \in Z_1 $, but $ \alpha_m \in Z_2 $, otherwise $ (P_{\pi_0 (1)}, \dots, P_{\pi_0 (m)})^2 $ would occur in $ Z $. Moreover $ Z_1 $ and $ Z_2 $ are convex. It will suffice to prove that $ \{P \cap Z_i | P \in \mathcal{P}_{Z_i} \} $ admits a finite e.c. decomposition in $ (Z_i, \le) $, for $ i = 1, 2 $, because then (once again) we will only have to merge the two partitions. Fix $ i $. If $ |\{P \cap Z_i | P \in \mathcal{P}_{Z_i} \}| < m $, we conclude by the inductive hypothesis on $ m $. If $ |\{P \cap Z_i | P \in \mathcal{P}_{Z_i} \}| = m $, we conclude by the inductive hypothesis on $ k $: if $ k_i $ is the number of distinct permutations $ \pi $ of $ \{1, \dots, m \} $ such that $ (P_{\pi (1)}, \dots, P_{\pi (m)}) $ occurs in $ Z $, then $ k_i < k $, because $ (P_{\pi_0 (1)}, \dots, P_{\pi_0(m)})) $ occurs in $ Z $ but not in $ Z_i $. \qed
\end{proof}

\begin{remark}\label{rem:special}
Theorem \ref{thm:fdt} follows by picking $ (Z, \le) = (Pref (\mathcal{L(B)}), \preceq) $ and $ \mathcal{P} = \{I_q | q \in Q \} $.
\end{remark}

We say that an e.c. decomposition of $ \mathcal{P} $ in $ (Z, \le) $ is a \emph{minimum-size} e.c. decomposition if it has minimum cardinality among all e.c. decompositions of $ \mathcal{P} $ in $ (Z, \le) $. We will be interested in minimum-size e.c. decompositions because they ensure  additional properties.

 \begin{remark}\label{rem: minimality}
Let $ (Z, \le) $ be a total order, let $ \mathcal{P} = \{P_{1}, \ldots, P_{m} \} $ be a finite partition of $ Z $, and let $ \mathcal V= \{V_{1}, \ldots , V_{r} \}$ be a minimum-size e.c. decomposition of $ \mathcal{P} $ in $ (Z, \le) $.
Then, for every $1\leq i<r$, we have $\mathcal P_{V_i}\not \subseteq  \mathcal P_{V_{i+1}}$, because otherwise $ \mathcal V'= \{V_{1}, \ldots , V_{i-1}, V_i\cup V_{i+1}, \ldots V_{r} \}$ would be an e.c. decomposition of $ \mathcal{P} $ in $ (Z, \le) $ having smaller cardinality. Similarly, for every $ 1< i\leq r $, it must be $\mathcal P_{V_i}\not \subseteq  \mathcal P_{V_{i-1}}$. In other words, for every $ i = 1, \dots, r $ there exist $ P_i \in \mathcal{P}_{V_i} \setminus \mathcal{P}_{V_{i + 1}}$ and $ P'_i \in \mathcal{P}_{V_i} \setminus \mathcal{P}_{V_{i - 1}} $, where we assume $ V_0 = V_{r + 1} = \emptyset $.

In the special case of Remark \ref{rem:special} we conclude that for every $ i = 1, \dots, r $ there exists a state $ q_i $ that occurs in $ V_i $ but not in $ V_{i + 1} $ and a state $ q'_i $ that occurs in $ V_i $ but not in $ V_{i - 1} $.
\end{remark}

In general, a minimum-size e.c. decomposition is not unique.

\begin{example}
Let $ Z $ be the disjoint union of some infinite sets:
\begin{align*}
    P_1 & = \{\alpha, \alpha_1, \alpha_2, \dots, \}; \\
    P_2 & = \{\beta, \beta_1, \beta_2, \dots, \beta'_1, \beta'_2, \dots, \}; \\
    P_3 & = \{\gamma_1, \gamma_2, \dots \}. \\
\end{align*}
Let $ \le $ be the total order on $ Z $ such that:
\begin{equation*}
\alpha_1 < \beta_1 < \alpha_2 < \beta_2 < \dots < \alpha < \beta < \gamma_1 < \beta'_1 < \gamma_2 < \beta'_2 < \dots
\end{equation*}
Let $ \mathcal{P} = \{P_1, P_2, P_3 \} $. Then, two distinct minimum-size e.c. decompositions of $ \mathcal{P} $ in $ (Z, \le) $ are:
\begin{enumerate}
    \item $ \mathcal{V} = \{\{\alpha_1,  \beta_1,\alpha_2, \beta_2, \dots, \alpha, \beta \}, \{\gamma_1, \beta'_1, \gamma_2, \beta'_2, \dots\}\} $;
    \item $ \mathcal{V'} = \{\{\alpha_1,  \beta_1, \alpha_2, \beta_2, \dots, \alpha \}, \{\beta, \gamma_1, \beta'_1, \gamma_2, \beta'_2, \dots\}\} $.
\end{enumerate}
\end{example}

\begin{example}\label{ex:no_uniqueness}
Let us show that even in the special case of Remark \ref{rem:special} minimum-size e.c. decompositions need not be unique.
Consider the DFA $ \mathcal{B} $ in figure \ref{fig:2minent}. Notice that in every e.c. decomposition of $ \mathcal{B} $ one element is $ \{\epsilon \} $, because $ I_0 = \{\epsilon \} $. Moreover, every e.c. decomposition of $ \mathcal{B} $ must have cardinality at least three, because $ I_1 \prec I_3 $. It is easy to check that:
\begin{equation*}
    \mathcal{V} = \{\{\epsilon \}, \{ac^*\cup bc^*\}, \{[b(c+d)^*\setminus bc^*]\cup f(c+d)^*\cup gd^*\}   \}
\end{equation*}
and:
\begin{equation*}
    \mathcal{V}' = \{\{\epsilon \}, \{ac^*\cup b(c+d)^*\cup [f(c+d)^*\setminus fd^*]\}, \{fd^*\cup gd^*\}\}
\end{equation*}
are two distinct minimum-size e.c. decompositions of $ \mathcal{B} $.

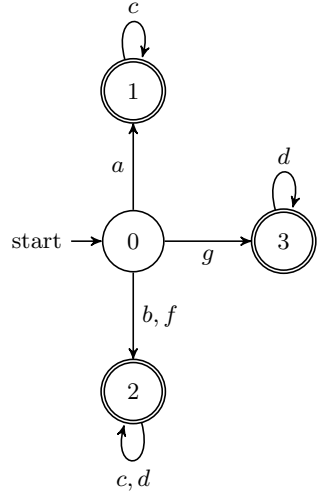
\begin{figure}[h!] 
 \begin{center}
\begin{tikzpicture}[->,>=stealth', semithick, auto, scale=1]
\node[state, initial] (0)    at (0,0)	{$0$};
\node[state, accepting,label=above:{}] (1)    at (0,2)	{$1$};
\node[state,accepting, label=above:{}] (2)    at (0,-2)	{$2$};
\node[state, accepting,label=above:{}] (3)    at (2,0)	{$3$};
\draw (0) edge [] node [] {$a$} (1);
\draw (0) edge [] node [] {$b,f$} (2);
\draw (0) edge [] node [below] {$g$} (3);
 \draw (1) edge  [loop above] node {$c$} (1);
 \draw (2) edge  [loop below] node {$c,d$} (2);
  \draw (3) edge  [loop above] node {$d$} (3);
\end{tikzpicture}
 \end{center}
 	\caption{An automaton $\mt B$ admitting two distinct minimum-size e.c. decompositions.}\label{fig:2minent}
\end{figure}
\end{example}

\subsection{On Some Properties of the Equivalence  $\sim$}\label{app:sim_properties}

\paragraph*{\textbf{Statement of Lemma \ref{lem:sim_finite}.}}
Let $ \mathcal B$ be a DFA and let $\mt V$ be a minimum-size e.c. decomposition of $ Pref (\mathcal{L(B)})$. Then, $\sim_{\mt V}$ has a finite number of classes on  $\pf {L(\mt B)}$ and 
$\mt L(\mt B)$ is equal to the union of some $\sim_\mathcal{V}$-classes.

\paragraph*{\textbf{Proof}} The relation $\sim_{\mt V}$ has finite index because each class is the union of some sets of the type $ V \cap I_q $, with $ V \in \mathcal{V} $ and $ q \in Q $. From the definition of $\sim_\mathcal{V}$ it follows that $\alpha\sim_\mathcal{V}\beta$ implies $\delta(s,\alpha)=\delta(s,\beta)$. Hence,   $\alpha\in \mt {L(\mt B)}$ implies  $\beta\in \mt {L(\mt B)}$, proving  that  $\mt L(\mt B)$ is equal to the union of some $\sim_\mathcal{V}$-classes. \qed

\paragraph*{}
We now prove some general results that will lead us to the proof of Lemma \ref{lem:easy_sim}. The difficult implication is $ (\leftarrow) $. To this end  we will prove that without loss of generality we can assume $ C_1 \prec \dots \prec C_n $. This will follow from some results for convex and entangled sets which hold true in the general setting outlined in Definition \ref{def:entangled}.

\begin{lemma}\label{lem:union}
Let $ (Z, \le) $ be a total order and let $ \mathcal{P} $ be a partition of $ Z $. Let $ C $ be a convex, entangled set. Assume that $ C_1 $ and $ C_2 $ are convex sets such that $C = C_1 \cup C_2 $. Then, there exists $ i \in \{1, 2 \} $ such that $ C_i $ is entangled and $ \mathcal{P}_{ C_i} = \mathcal{P}_{ C} $. 
\end{lemma}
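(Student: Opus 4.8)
The plan is to argue by contradiction, reducing everything to the finite-string characterisation of entanglement supplied by Lemma \ref{lem:m_implies_wm}. Write $\mathcal{P}_C = \{P_1, \dots, P_m\}$, so that $C$ being entangled means exactly that $(P_1 \cdots P_m)^k$ is generated by $C$ for every $k \in \mathbb{N}$. We may assume $C_1, C_2 \neq \emptyset$ (otherwise one of them equals $C$ and there is nothing to prove), and we suppose towards a contradiction that for each $i \in \{1,2\}$ either $\mathcal{P}_{C_i} \neq \mathcal{P}_C$ or $C_i$ is not entangled.

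First I would rule out the possibility that $\mathcal{P}_{C_1}$ and $\mathcal{P}_{C_2}$ are \emph{both} proper subsets of $\mathcal{P}_C = \mathcal{P}_{C_1} \cup \mathcal{P}_{C_2}$. In that case there are $P_a \in \mathcal{P}_{C_1} \setminus \mathcal{P}_{C_2}$ and $P_b \in \mathcal{P}_{C_2} \setminus \mathcal{P}_{C_1}$, so inside $C$ every element of $P_a$ lies in $C_1 \setminus C_2$ and every element of $P_b$ lies in $C_2 \setminus C_1$. Picking a witness for $(P_1 \cdots P_m)^2$ in $C$, the two $P_a$-labelled entries of the two consecutive blocks have, strictly between them in the order, an entry labelled $P_b$; since the two $P_a$-entries lie in $C_1$ and $C_1$ is convex in $(Z, \le)$, that $P_b$-entry would also lie in $C_1$, contradicting $P_b \cap C_1 = \emptyset$. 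Hence, after possibly swapping $C_1$ and $C_2$, we may assume $\mathcal{P}_{C_1} = \mathcal{P}_C$; the contradiction hypothesis then forces $C_1$ not to be entangled, so by Lemma \ref{lem:m_implies_wm} there is $k_1$ such that $(P_1 \cdots P_m)^{k_1}$ is not generated by $C_1$.

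The key tool I would then establish is a purely order-theoretic splitting property of convex covers: for every increasing finite sequence $x_1 \le \dots \le x_n$ with all $x_j \in C_1 \cup C_2$ there is an index $\ell$ such that either $x_1, \dots, x_\ell \in C_1$ and $x_{\ell+1}, \dots, x_n \in C_2$, or $x_1, \dots, x_\ell \in C_2$ and $x_{\ell+1}, \dots, x_n \in C_1$. This holds because the set $\{j : x_j \notin C_1\}$ cannot contain an index strictly between two indices of $C_1$-elements (convexity of $C_1$), hence it is an initial or a final segment of $\{1, \dots, n\}$, and likewise for $C_2$. With this in hand I would conclude by a case split on $\mathcal{P}_{C_2}$. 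If $\mathcal{P}_{C_2} = \mathcal{P}_C$, then by hypothesis $C_2$ is not entangled either, giving $k_2$ with $(P_1 \cdots P_m)^{k_2}$ not generated by $C_2$; applying the splitting property to a witness of $(P_1 \cdots P_m)^{k_1 + k_2}$ in $C$, the block on the $C_1$-side has fewer than $m k_1$ entries (otherwise its first $m k_1$ entries would witness $(P_1 \cdots P_m)^{k_1}$ in $C_1$), so the block on the $C_2$-side has more than $m k_2$ entries and therefore contains the last $m k_2$ entries of the sequence, which form $(P_1 \cdots P_m)^{k_2}$ inside $C_2$ — a contradiction (and symmetrically if the roles of $C_1, C_2$ are exchanged in the split). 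If instead $\mathcal{P}_{C_2} \subsetneq \mathcal{P}_C$, fix $P_b \in \mathcal{P}_C \setminus \mathcal{P}_{C_2}$, so every $P_b$-element of $C$ lies in $C_1 \setminus C_2$; applying the splitting property to a witness of $(P_1 \cdots P_m)^{k_1+1}$ in $C$, all of its $P_b$-labelled entries fall on the $C_1$-side, and since $P_b$ occurs both among the first $m$ positions and past position $m k_1$ of $(P_1 \cdots P_m)^{k_1+1}$, the $C_1$-side must contain at least $m k_1$ consecutive entries forming a copy of $(P_1 \cdots P_m)^{k_1}$ in $C_1$ — again a contradiction.

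I expect the main obstacle to be the bookkeeping around the splitting property: $C_1$ and $C_2$ need not be disjoint, so the prefix/suffix split is valid only up to deciding which of $C_1, C_2$ comes first, and in each branch one must keep track of on which side the distinguished entries (the $P_b$-labelled ones, or the block-delimiting ones) land and check that the length bounds $m k_1$, $m k_2$, $m(k_1+1)$ are actually met. The remaining ingredients — the two-block squeeze argument and the reduction through Lemma \ref{lem:m_implies_wm} — are routine.
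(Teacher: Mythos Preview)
Your argument is correct, but it takes a considerably longer and more combinatorial route than the paper's proof. The paper argues directly from the infinite witnessing sequence: if $(z_i)_{i\in\mathbb N}$ is a monotone sequence in $C$ witnessing that $C$ is entangled, then one of $C_1,C_2$ contains infinitely many of its terms; convexity of that $C_i$ together with monotonicity then forces an entire tail $(z_j)_{j\ge j_0}$ to lie in $C_i$, and this tail simultaneously shows that $C_i$ is entangled and that $\mathcal P_{C_i}=\mathcal P_C$. That is essentially a one-step observation: a monotone sequence in a union of two convex sets is eventually contained in one of them.

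By contrast, you pass through Lemma~\ref{lem:m_implies_wm} to work with finite witnesses, set up a contradiction, prove a prefix/suffix ``splitting property'' for finite increasing sequences in $C_1\cup C_2$, and then run a case analysis on $\mathcal P_{C_2}$. All of this is sound (your justification of the splitting property is a touch informal---$\{j:x_j\notin C_1\}$ could a priori be the union of an initial and a final segment, but then convexity of $C_2$ forces everything into $C_2$ and the split is trivial---so the conclusion stands). The finitary detour does not buy anything here: it reproduces, at the level of arbitrarily long finite sequences, exactly the ``eventually in one piece'' phenomenon that the paper extracts in one line from the infinite sequence. If you simply take the infinite monotone witness for $C$ and apply your own splitting idea to its tails, you recover the paper's short proof.
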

\begin{proof}
Let $ (z_i)_{i \in \mathbb{N}} $ be a monotone sequence witnessing that $ C $ is entangled. Then, at least one of the following is true:
\begin{enumerate}
    \item $ (\forall i \in \mathbb{N})(\exists j \in \mathbb{N})(j > i \land z_j \in C_1) $
    \item $ (\forall i \in \mathbb{N})(\exists j \in \mathbb{N})(j > i \land z_j \in C_2) $.
\end{enumerate}
 In the first case $ C_1 $ is entangled:   by the convexity of $C_1$, if    $ j_0 > 1 $ is such that $ z_{j_0} \in C_1 $ then  the subsequence $ (z_j)_{j \ge j_0} $ is contained in $ C_1 $ an d clearly we have $\mathcal{P}_{ C_1} = \mathcal{P}_{ C}$.  Analogously, in the second case $C_2$ is entangled and  $\mathcal{P}_{ C_2} = \mathcal{P}_{ C}$. \qed
\end{proof}

\begin{lemma}\label{lem:entangled}
Let $ (Z, \le) $ be a total order and let $ \mathcal{P} $ be a partition of $ Z $. Let $ C_1, C_2 $ be convex, entangled sets. Then, at least one the following holds true:
\begin{enumerate}
    \item $ C_1 \setminus C_2 $ is a convex, entangled set and $ \mathcal{P}_{ C_1 \setminus C_2} = \mathcal{P}_{ C_1} $;
    \item $ C_2 \setminus C_1 $ is a convex, entangled set and $ \mathcal{P}_{ C_2 \setminus C_1} = \mathcal{P}_{ C_2} $;
    \item $ C_1 \cup C_2 $ is a convex, entangled set and   $ \mathcal{P}_{ C_1 \cup C_2} = \mathcal{P}_{C_1} = \mathcal{P}_{ C_2}$.
\end{enumerate}
\end{lemma}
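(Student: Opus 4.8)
The plan is to do a case analysis based on how $C_1$ and $C_2$ sit relative to each other in the linear order, exploiting convexity heavily. First I would record the following elementary fact about convex sets in a total order: if $C_1, C_2$ are convex and $C_1 \cap C_2 \neq \emptyset$, then $C_1 \cup C_2$ is convex, and moreover $C_1 \setminus C_2$ and $C_2 \setminus C_1$ are each convex (the intersection being convex and ``sandwiched'' forces each difference to be an ``initial'' or ``final'' convex chunk). If instead $C_1 \cap C_2 = \emptyset$, then one of $C_1, C_2$ entirely precedes the other, say $C_1 \prec C_2$; then $C_1 \setminus C_2 = C_1$ and $C_2 \setminus C_1 = C_2$ are trivially convex and entangled, and conclusion (1) (equivalently (2)) holds. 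So the only real work is the case $C_1 \cap C_2 \neq \emptyset$.

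Assume $C_1 \cap C_2 \neq \emptyset$. I would write $C_1 \cup C_2$ as a disjoint union of three convex pieces: $L = C_1 \setminus C_2$, $M = C_1 \cap C_2$, $R = C_2 \setminus C_1$, arranged (after possibly swapping names) so that $L \prec M \prec R$ — this ordering is forced by convexity together with $C_1 = L \cup M$ and $C_2 = M \cup R$ both being convex. Now apply Lemma \ref{lem:union} to the decomposition $C_1 = L \cup M$: either $L$ is entangled with $\mathcal{P}_L = \mathcal{P}_{C_1}$, in which case conclusion (1) holds (since $L = C_1 \setminus C_2$); or $M$ is entangled with $\mathcal{P}_M = \mathcal{P}_{C_1}$. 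Symmetrically, apply Lemma \ref{lem:union} to $C_2 = M \cup R$: either $R = C_2 \setminus C_1$ is entangled with $\mathcal{P}_R = \mathcal{P}_{C_2}$, giving conclusion (2); or $M$ is entangled with $\mathcal{P}_M = \mathcal{P}_{C_2}$. The remaining subcase is when both applications land on the middle piece: $M$ is entangled, $\mathcal{P}_M = \mathcal{P}_{C_1}$ and $\mathcal{P}_M = \mathcal{P}_{C_2}$. Then $\mathcal{P}_{C_1} = \mathcal{P}_{C_2} = \mathcal{P}_M$, hence $\mathcal{P}_{C_1 \cup C_2} = \mathcal{P}_{C_1} \cup \mathcal{P}_{C_2} = \mathcal{P}_M$; since $C_1 \cup C_2$ is convex and a witnessing monotone sequence for $M$ already visits every $P \in \mathcal{P}_M = \mathcal{P}_{C_1 \cup C_2}$ infinitely often and lies inside $C_1 \cup C_2$, the set $C_1 \cup C_2$ is entangled, which is conclusion (3).

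The main obstacle I anticipate is purely bookkeeping: being careful that the three-way split $L \prec M \prec R$ is genuinely forced (one must rule out configurations like $M$ splitting $C_1$ into a left and a right part, which convexity of $M = C_1 \cap C_2$ and of $C_1$ do rule out), and tracking the $\mathcal{P}_{(\cdot)}$ equalities through each invocation of Lemma \ref{lem:union} so that the final equalities $\mathcal{P}_{C_1 \cup C_2} = \mathcal{P}_{C_1} = \mathcal{P}_{C_2}$ in case (3) are legitimate. No deep idea is needed beyond Lemma \ref{lem:union}; the statement is essentially a ``Helly-type'' stability property of the class of convex entangled sets under the Boolean operations $\setminus$ and $\cup$.
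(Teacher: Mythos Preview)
Your proposal is correct and follows essentially the same approach as the paper: after disposing of the degenerate cases, apply Lemma~\ref{lem:union} to the two decompositions $C_1=(C_1\setminus C_2)\cup(C_1\cap C_2)$ and $C_2=(C_2\setminus C_1)\cup(C_1\cap C_2)$, and if neither difference works, both applications force $\mathcal P_{C_1\cap C_2}=\mathcal P_{C_1}=\mathcal P_{C_2}$, whence $C_1\cup C_2$ is entangled. The paper handles the degenerate cases (disjointness and containment) in one line by noting that one of the three sets then equals $C_1$ or $C_2$, whereas you spell out the disjoint case and defer containment to ``bookkeeping''; otherwise the arguments coincide.
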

\begin{proof}
We can assume $ C_2 \not \subseteq C_1 $, $ C_1 \not \subseteq C_2 $ and $ C_1 \cap C_2 \not = \emptyset $, otherwise the conclusion is immediate (at least one of three sets is equal to $ C_1 $ or $ C_2 $). 
Under these assumptions, it is easy to check that $ C_1 \setminus C_2 $, $ C_2 \setminus C_1 $, $ C_1 \cap C_2 $ and $ C_1 \cup C_2 $ are convex (but note that for example $ C_1 \setminus C_2 $ need not be convex if $ C_2 \subseteq C_1 $). Since $ C_1 = (C_1 \setminus C_2) \cup (C_1 \cap C_2) $ and $ C_2 = (C_2 \setminus C_1) \cup (C_1 \cap C_2) $, by Lemma \ref{lem:union} we conclude that at least one the following holds true:
\begin{enumerate}
    \item $ C_1 \setminus C_2 $ is a convex, entangled set and $ \mathcal{P}_{C_1 \setminus C_2} = \mathcal{P}_{C_1} $;
    \item $ C_2 \setminus C_1 $ is a convex, entangled set and $ \mathcal{P}_{C_2 \setminus C_1} = \mathcal{P}_{C_2} $;
    \item $ C_1 \cap C_2 $ is a convex, entangled set and $ \mathcal{P}_{C_1 \cap C_2} = \mathcal{P}_{C_1} = \mathcal{P}_{C_2}$.
\end{enumerate}
In the third case, we have $ \mathcal{P}_{C_1 \cup C_2} = \mathcal{P}_{C_1} \cup \mathcal{P}_{C_2} = \mathcal{P}_{C_1} = \mathcal{P}_{C_2} = \mathcal{P}_{C_1 \cap C_2} $. Since $C_1\cap C_2 \subseteq C_1\cup C_2$  and $C_1\cap C_2$ is entangled,   we conclude that $ C_1 \cup C_2 $ is entangled. \qed
\end{proof}

We can now state the result that, for $ (Z, \le) = (Pref (\mathcal{L(B)}), \preceq) $ and $ \mathcal{P} = \{I_q | q \in Q \} $, will imply that in Lemma \ref{lem:easy_sim} we can assume $ C_1 \prec \dots \prec C_n $.

\begin{lemma} \label{lem:convex_in_order}
Let $ (Z, \le) $ be a total order and let $ \mathcal{P} $ be a partition of $ Z $. Let $ C_1, \dots, C_n $ be convex,  entangled set. Then, there exist convex, entangled sets $ C'_1, \dots, C'_m $, with $ m \le n $, such that $ C'_1 < \dots < C'_m $  and  $ \cup_{i = 1}^n C_i = \cup_{i = 1}^m C'_i $. Moreover, for every $P\in \mathcal P$ it holds $P\cap C_i\neq \emptyset$ for all $i=1, \ldots, n$ if and only if $P\cap C_i'\neq \emptyset$, for all $i=1, \ldots, m$. 
\end{lemma}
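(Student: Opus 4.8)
The plan is to argue by induction on the number $N$ of unordered pairs $\{i,j\}$ with $i\ne j$ and $C_i\cap C_j\ne\emptyset$, using Lemma~\ref{lem:entangled} to resolve one overlapping pair at a time. First I would record the elementary fact that pairwise disjoint convex subsets of a total order are pairwise $<$-comparable: if $A,B$ are disjoint convex sets with $a\in A$, $b\in B$ and, say, $a<b$, then whenever $a'\in A$ and $b'\in B$ a short case analysis (comparing $b'$ with $a$, and $a'$ with $b$) places an element of one set strictly between two elements of the other, contradicting convexity; hence $a'<b'$ always, i.e.\ $A<B$. Consequently $<$ is a genuine total order on any finite family of pairwise disjoint convex sets, so such a family can always be reindexed as a $<$-increasing chain. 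This handles the base case $N=0$: the $C_i$ are pairwise disjoint, hence form a $<$-chain, and reindexing them in increasing order yields $C_1'<\dots<C_m'$ with $m=n$, the same union, and the ``moreover'' holds trivially since $\{C_i'\}=\{C_i\}$ as sets.

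For the inductive step I would fix $i\ne j$ with $C_i\cap C_j\ne\emptyset$ and apply Lemma~\ref{lem:entangled} to $C_i,C_j$. In case~(1) I replace the pair $C_i,C_j$ in the family by the two convex entangled sets $C_i\setminus C_j$ and $C_j$ (which are disjoint, hence $<$-comparable), using $\mathcal P_{C_i\setminus C_j}=\mathcal P_{C_i}$; in case~(2), symmetrically, by $C_i$ and $C_j\setminus C_i$; in case~(3), by the single convex entangled set $C_i\cup C_j$, using $\mathcal P_{C_i\cup C_j}=\mathcal P_{C_i}=\mathcal P_{C_j}$. Call the resulting family $\mathcal C'$. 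In all cases $|\mathcal C'|\le n$ (strictly smaller in case~(3), equal in cases~(1),(2)), the union is unchanged, and every member of $\mathcal C'$ is convex and entangled. The key point is that $N$ strictly decreases: the two sets replacing $C_i,C_j$ are contained in $C_i\cup C_j$, so any $C_k$ ($k\ne i,j$) meeting a new set already met $C_i$ or $C_j$; thus the number of overlapping pairs involving some $C_k$ does not increase, while the overlapping pair $\{C_i,C_j\}$ is replaced either by nothing (case~(3)) or by a \emph{disjoint} pair (cases~(1),(2)). Hence $N$ drops by at least one, and the inductive hypothesis applied to $\mathcal C'$ produces the desired $C_1'<\dots<C_m'$ with $m\le|\mathcal C'|\le n$ and the same union.

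It remains to propagate the ``moreover'' statement. For a single replacement it is immediate from the $\mathcal P$-equalities handed over by Lemma~\ref{lem:entangled}: in case~(1), $P$ meets both $C_i$ and $C_j$ iff $P\in\mathcal P_{C_i}\cap\mathcal P_{C_j}=\mathcal P_{C_i\setminus C_j}\cap\mathcal P_{C_j}$ iff $P$ meets both $C_i\setminus C_j$ and $C_j$; case~(2) is symmetric; in case~(3), $P$ meets $C_i\cup C_j$ iff $P\in\mathcal P_{C_i\cup C_j}=\mathcal P_{C_i}=\mathcal P_{C_j}$ iff $P$ meets both $C_i$ and $C_j$. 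Since the remaining members $C_k$ ($k\ne i,j$) are untouched, the predicate ``$P$ meets every member of the family'' is exactly preserved by each replacement; being preserved at each step, it holds for the final chain $C_1',\dots,C_m'$ iff it held for $C_1,\dots,C_n$, which is precisely the ``moreover'' claim.

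The one delicate point in the argument is the choice of the termination measure. Counting sets is \emph{not} enough, because cases~(1) and~(2) of Lemma~\ref{lem:entangled} replace two sets by two sets; what saves the day is that those two cases convert an overlapping pair into a disjoint pair \emph{without creating any new overlap} (the new sets lie inside $C_i\cup C_j$), so the number of overlapping pairs is a legitimate non-negative integer that strictly decreases at every step. Everything else — convexity of set differences of overlapping convex sets, the bookkeeping with the $\mathcal P_X$'s, and the reindexing at the end — is routine, and can also lean on Lemma~\ref{lem:union} where convenient.
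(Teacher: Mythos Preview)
Your proposal is correct and follows essentially the same approach as the paper's own proof: induction on the number of overlapping pairs, with the inductive step invoking Lemma~\ref{lem:entangled} to either merge an overlapping pair or replace one member by a set difference, in each case checking that the count of overlapping pairs strictly drops and that the ``moreover'' predicate is preserved via the $\mathcal P$-equalities supplied by Lemma~\ref{lem:entangled}. Your write-up is in fact slightly more careful than the paper's about spelling out why the termination measure decreases in the set-difference cases.
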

\begin{proof}  We suppose, without loss of generality,  that the $C_i$'s are non-empty.
We proceed by induction on $ r = |\{(i, j)| 1 \le i < j \le n | C_i \cap C_j \not = \emptyset \}| $. If $ r = 0 $, then every two distinct $ C_i $'s are disjoint,  and since they are convex, it follows that  they are comparable;   let $ C'_1, \dots, C'_n $ be the permutation of the $ C_i $'s such that $ C'_1< \dots< C'_n $. Now assume $ r \ge 1 $. Without loss of generality, we can assume that  $ C_1 \cap C_2 \not = \emptyset $. By Lemma \ref{lem:entangled} we know that at least one among $ C_1 \setminus C_2 $, $ C_2 \setminus C_1 $ and $ C_1 \cup C_2 $ is convex and entangled. If $ C_1 \cup C_2 $ is convex and entangled, then consider $ C_1 \cup C_2, C_3, \dots, C_n $ and   notice that the number of intersections between pairs of this sets is smaller (because if $ (C_1 \cup C_2) \cap C_i \not = \emptyset $ for some $ i \ge 3 $, then either $ C_1 \cap C_i \not = \emptyset $ or $ C_2 \cap C_i \not = \emptyset $, so either $ C_1 $  or  $ C_2 $ intersects $C_i$ in the original collection $ C_1, C_2, \dots, C_n $). Now assume that $ C_1 \setminus C_2 $ is entangled (the case $ C_2 \setminus C_1 $ analogous). Consider $ C_1 \setminus C_2, C_2, C_3, \dots, C_n $. Even in this case the number of intersections is smaller, because $ C_1 \setminus C_2 \subseteq C_1 $ and $ (C_1 \setminus C_2) \cap C_2 = \emptyset $.  In both cases Lemma \ref{lem:entangled} implies that a $P \in \mathcal{P}$ occurs in all $C_i$'s if and only if it occurs in all elements of the new family and  we can conclude by the inductive hypothesis. \qed
\end{proof}

We can now tackle the proof of Lemma \ref{lem:easy_sim}.

\paragraph*{\textbf{Statement of Lemma \ref{lem:easy_sim}.}}
 Let $\mt B$ be a DFA and  let $\mt V$ be a minimum-size e.c. decomposition of $\mt B$. 
Then  $\alpha \sim_{\mt V} \alpha'$
    if and only if $\delta(s,\alpha)= \delta(s, \alpha')$ and   the interval $[\alpha, \alpha']^\pm$ is contained in a  finite union of convex, entangled sets  $C_1, \ldots,C_n$ in $\pf {L( \mt B)}$, with $C_i\cap I_{\delta(s,\alpha)}\neq \emptyset$ for all $i=1, \ldots,n$. In particular, $ \sim_{\mt V} $ is independent of the choice of $ \mathcal{V} $.

\paragraph*{\textbf{Proof}} $ (\rightarrow) $ follows by considering all sets $ V \in \mathcal{V} $ such that $ V \cap [\alpha, \alpha']^\pm \not = \emptyset $.

Let us prove $ (\leftarrow)$. Pick $ \alpha, \alpha' \in Pref (\mathcal{L(B)}) $ such that $\delta(s,\alpha)= \delta(s, \alpha') $ and the interval $[\alpha, \alpha']^\pm$ is contained in a  finite union of convex, entangled sets $C_1, \ldots,C_n$, with $C_i\cap I_q\neq \emptyset$ for every $i=1, \ldots,n$, where $ q = \delta(s,\alpha)= \delta(s, \alpha') $. By Lemma \ref{lem:convex_in_order} we can assume that $C_1\prec C_2 \prec \ldots \prec C_n$. Assume without loss of generality that $ \alpha \prec \alpha' $, and consider $ V \in \mathcal{V} $ such that $ \alpha \prec V \prec \alpha' $. We must prove that $ V\cap I_q\neq \emptyset $. It will suffice to prove that if $V\in \mt V$ is such that $V\subseteq   \bigcup_{i=1}^n C_i$, then $ q $ occurs in $ V $.

Let $ \mathcal{V} = \{V_1, \dots, V_r \} $, with $ V_1 \prec \dots \prec V_r $. Since $ \mathcal{V} $ is minimum-size, by Remark \ref{rem: minimality} for every $ i = 1, \dots, r $ there exist two states $ q_i $ and $ q'_i $ such that $ q_i $ occurs in $ V_i $ but not in $ V_{i + 1} $ and $ q'_i $ occurs in $ V_i $ but not in $ V_{i - 1} $ (we assume $ V_0 = V_{r + 1} = \emptyset $).

Assume that $ V = V_s $. If $ C_i \subseteq V_s $ for some $ i $, then the conclusion follows because $C_i\cap I_q\neq \emptyset$. Otherwise, since $C_1\prec C_2 \prec \ldots \prec C_n$, it must be $ V_s \subseteq C_i \cup C_{i + 1} $ for some $ i $. We distinguish three cases.
\begin{enumerate}
    \item $ V_s \cap  C_{i+1}=\emptyset $. In this case, it must be $ V_s \subseteq C_i $. Let $ V_{s - h}, V_{s - h +1},  \ldots, V_s, \ldots, V_{s+k -1}, V_{s+k} $ ($ h, k \ge 0 $) be all elements of $ \mathcal{V} $ contained in $ C_i $. Since $ V_1 \prec \dots \prec V_r $, we conclude:
    \begin{equation*}
        V_{s - h }\cup \ldots  \cup V_{s}\cup \ldots \cup V_{s+k}\subseteq C_i  \subseteq V_{s-h-1 }\cup V_{s-h}\cup \ldots\cup V_{s}\cup \ldots \cup V_{s+k}\cup V_{s+k+1 }
    \end{equation*}
    where as usual $ V_0 = V_{r + 1} = \emptyset $. We know that $q'_{s - h}, \ldots,  q'_{s},q_{s}, \ldots q_{s+k}$ occur in $V_{s-h}\cup \ldots \cup V_{s+k}$, so they also occur in $C_i$. Moreover, we also know that $ q $ occurs in $ C_i $. Since $ C_i $ is entangled, in particular there exists a sequence $ \alpha_{s - h} \preceq \dots \preceq \alpha_{s} \preceq \beta \preceq \gamma_{s} \preceq \dots \preceq \gamma_{s + k} $ witnessing that the sequence of states $ q'_{s - h}, \ldots,  q'_{s}, q, q_{s}, \ldots q_{s+k} $ occurs in $ C_i $, and so also in $ V_{s-h -1 }\cup V_{s-h}\cup \ldots\cup V_{s}\cup \ldots \cup V_{s+k}\cup V_{s+k+1 } $. Since $ q'_{s - h} $ does not occur in $ V_{s - h - 1} $, then the sequence $ q'_{s - h + 1}, \ldots,  q'_{s}, q, q_{s}, \ldots q_{s+k} $ occurs in this order in $ V_{s - h}\cup \ldots\cup V_{s}\cup \ldots \cup V_{s+k}\cup V_{s+k+1 } $. Now, $ q'_{s - h + 1} $ does not occur in $ V_{s - h} $, so the sequence $ q'_{s - h + 2}, \ldots,  q'_{s}, q, q_{s}, \ldots q_{s+k} $ occurs in this order in $ V_{s - h + 1}\cup \ldots\cup V_{s}\cup \ldots \cup V_{s+k}\cup V_{s+k+1 } $. Proceeding like that, we obtain that the sequence $ q, q_{s}, \ldots q_{s+k} $ occurs in this order in $ V_{s} \cup \ldots\cup V_{s+k}\cup V_{s+k+1} $. Now suppose by contradiction that $ q $ does not occur in $ V_s $. Then, as before we obtain that $ q_{s}, \ldots q_{s+k} $ occurs in this order in $ V_{s + 1}\cup \ldots\cup V_{s+k}\cup V_{s+k+1 } $, then $ q_{s+ 1}, \ldots q_{s+k} $ occurs in this order in $ V_{s + 2}\cup \ldots\cup V_{s+k}\cup V_{s+k+1 } $, and we finally conclude that $ q_{s + k} $ occurs in $ V_{s+k+1 } $, a contradiction.
    \item $ V_s \cap  C_{i}=\emptyset $. In this case, it must be $ V \subseteq C_{i + 1} $ and one concludes as in the previous case.
    \item $ V_s \cap  C_{i} \not = \emptyset $ and $ V\cap  C_{i + 1} \not = \emptyset $. In this case, let $ V_{s - h}, \dots, V_{s - 1} $ ($ h \ge 0 $) be all elements of $ \mathcal{V} $ contained in $ C_i $, and let $ V_{s + 1}, \dots, V_{s + k} $ ($ k \ge 0 $) be all elements of $ \mathcal{V} $ contained in $ C_{i + 1} $. As before:
    \begin{equation*}
    V_{s - h} \cup \dots \cup V_{s - 1} \subseteq C_i \subseteq V_{s -h - 1} \cup V_{s - h} \cup \dots \cup V_{s - 1} \cup V_s
    \end{equation*}
    and:
    \begin{equation*}
    V_{s + 1} \cup \dots \cup V_{s + k} \subseteq C_i \subseteq V_s \cup V_{s + 1} \cup \dots \cup V_{s + k} \cup V_{s + k + 1}.
    \end{equation*}
    Now, assume by contradiction that $ q $ does not occur in $ V_s $. First, let us prove that $ q'_s $ does not occur in $ C_i $. Suppose by contradiction that $ q'_s $ occurs in $ C_i $. We know that $ q'_{s - h}, \dots, q'_{s - 1} $ occurs in $ C_i $, and we also know that $ q $ occurs in $ C_i $. Since $ C_i $ is entangled, then $ q'_{s - h}, \dots, q'_{s - 1}, q'_s, q $ should occur in this order in $ C_i $ and so also in $ V_{s -h - 1} \cup V_{s - h} \cup \dots \cup V_{s - 1} \cup V_s $, which as in case 1 would imply that $ q $ occurs in $ V_s $, a contradiction. Analogously, one shows that $ q_s $ does not occur in $ C_{i + 1} $.
    
    Since $ q_s $ and $ q'_s $ occur in $ V_s $, then there exists a monotone sequence in $ V_s $ whose trace consists of alternating values of $ q_s $ and $ q'_s $. But $ V_s \subseteq C_i \cup C_{i + 1} $ and $ C_i \prec C_{i + 1} $, so the monotone sequence is definitely contained in $ C_i $ or $ C_{i + 1} $. In the first case we would obtain that $ q'_s $ occurs in $ C_i $, and in the second case we would obtain that $ q_s $ occurs in $ C_{i + 1} $, so in both cases we reach a contradiction. \qed
\end{enumerate}

The $ \mathcal{V} $-free characterization of $ \sim_{\mt V} $ allows us to easily deduce right-invariance.

\paragraph*{\textbf{Statement of Lemma \ref{lem:simequivalence}.}} Let $\mt B$ be a DFA. Then, the equivalence relation $\sim$ is right-invariant. 

\paragraph*{\textbf{Proof}} Assume that $\alpha \sim \alpha' $ and $a\in \Sigma$ is such that $\alpha a \in \pf{L }$. We must prove that $\alpha' a\in  \pf{L }$ and  $\alpha a\sim \alpha' a$. By Lemma \ref{lem:easy_sim}, we know that $\delta(s,\alpha)= \delta(s, \alpha')$ and there exist convex, entangled sets $ C_1, \dots, C_n $ such that $[\alpha, \alpha']^\pm \subseteq C_1 \cup \dots \cup C_n $ and $ C_i\cap I_{\delta(s,\alpha)}\neq \emptyset$ for all $i=1, \ldots,n$. We must prove that $\alpha' a\in  \pf{L }$, $\delta(s,\alpha a)= \delta(s, \alpha' a)$ and there exist convex, entangled sets $ C'_1, \dots, C'_{n'} $ such that $[\alpha a, \alpha' a]^\pm \subseteq C'_1 \cup \dots \cup C'_{n'} $ and $ C'_i\cap I_{\delta(s,\alpha a)}\neq \emptyset$ for all $i=1, \ldots,n'$.

From $\delta(s,\alpha)= \delta(s, \alpha')$ and $\alpha a \in \pf{L }$ we immediately obtain $\alpha' a \in \pf{L }$ and $\delta(s,\alpha a)= \delta(s, \alpha' a)$. Moreover, from $[\alpha, \alpha']^\pm \subseteq C_1 \cup \dots \cup C_n $ we obtain $[\alpha a, \alpha' a]^\pm \subseteq C_1a \cup \dots \cup C_na $, and from $ C_i\cap I_{\delta(s,\alpha)}\neq \emptyset$ we obtain $ C_i a \cap I_{\delta(s,\alpha a)}\neq \emptyset $, so we only have to prove that every $ C_i a $ is convex and entangled. As for convexity, let $ \alpha, \beta, \gamma \in Pref (\mathcal{L(B)}) $ such that $ \alpha \prec \beta \prec \gamma $ and $ \alpha, \gamma \in C_i a $. We must prove that $ \beta \in C_i a $. Since $ \alpha, \gamma \in C_i a $, we can write $ \alpha = \alpha' a $ and $ \gamma = \gamma' a $, with $ \alpha', \gamma' \in C_i$. From $ \alpha' a \prec \beta \prec \gamma' a $ we obtain $ \beta = \beta' a $ for some $ \beta \in Pref(\mathcal{L(B)}) $. Since $ \alpha' \prec \beta' \prec \gamma' $ and $ \alpha', \gamma' \in C_i $ then $ \beta' \in C_i $ by convexity and so $ \beta \in C_i a $. Finally, if $ (\alpha_i)_{i \in \mathbb{N}} $ is a monotone sequence witnessing that $ C_i $ is entangled, then $ (\alpha_i a)_{i \in \mathbb{N}} $ is a monotone sequence witnessing that $ C_i a $ is entangled. \qed

\subsection{The Hasse automaton}

Let us start with a fairly intuitive lemma which will be used in the proof of Theorem \ref{thm:equivalentholeproof}.

\begin{lemma}\label{lem:con-int}
Let $ (Z, \le) $ be a total order. If $ C_1, \ldots , C_{n} $ are convex sets such that $ C_{i}\cap C_{j} \neq \emptyset $ for all pairwise distinct $i, j\in \{ 1 , \ldots , n\} $, then $ \bigcap_{i=1}^{n} C_{i}\neq \emptyset $.     
\end{lemma}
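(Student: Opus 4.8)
The statement is the one-dimensional case of Helly's theorem, and the plan is to prove it directly by exhibiting an explicit common point, being careful to manipulate only \emph{finite} sets: since $(Z,\le)$ is a bare total order, infima and suprema of infinite sets need not exist, so the usual ``$\max$ of the left endpoints $\le$ $\min$ of the right endpoints'' argument has to be phrased using finitely many chosen representatives. We may assume $n\ge 2$, the case $n\le 1$ being trivial (and for $n\ge 2$ the pairwise hypothesis already forces every $C_i$ to be nonempty).

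First I would fix, for each unordered pair $\{i,j\}$ with $i\neq j$, a representative $x_{ij}=x_{ji}\in C_i\cap C_j$, which exists by hypothesis. For each $k\in\{1,\dots,n\}$ the set $S_k=\{x_{mk} : m\neq k\}$ is finite, nonempty, and contained in $C_k$, hence it has a least element $L_k$ and a greatest element $R_k$, both lying in $C_k$; by convexity of $C_k$, every $z\in Z$ with $L_k\le z\le R_k$ belongs to $C_k$. The crux of the proof is then the inequality
\[
\max_{1\le k\le n} L_k \ \le\ \min_{1\le l\le n} R_l .
\]
Granting this, set $z_0 := \max_{1\le k\le n} L_k$ (a well-defined element of $Z$, being a maximum over a finite set). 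Then for every $k$ we have $L_k\le z_0\le R_k$, so $z_0\in C_k$; hence $z_0\in\bigcap_{i=1}^{n} C_i$, which is therefore nonempty.

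The main obstacle — really the only place where anything has to be verified — is the displayed inequality, i.e.\ that $L_k\le R_l$ for all $k,l$. For $k=l$ this is immediate, since $L_k$ and $R_k$ are the minimum and maximum of the same finite set. For $k\neq l$ I would exploit the single representative $x_{kl}$: on one hand $x_{kl}\in S_k$, so $L_k\le x_{kl}$; on the other hand $x_{kl}=x_{lk}\in S_l$, so $x_{kl}\le R_l$; chaining the two gives $L_k\le R_l$. Beyond this, the remaining points (that a subset of $S_k$ lies in $C_k$, that $L_k,R_k$ exist because $S_k$ is finite and nonempty) are routine. If one preferred to avoid the endpoint bookkeeping, an equivalent route is induction on $n$: first settle $n=3$ by a median argument on the three representatives, and then for $n\ge4$ replace $C_{n-1},C_n$ by the convex set $C_{n-1}\cap C_n$, using the $n=3$ case to check the new family is still pairwise intersecting; but the direct argument above seems the shortest.
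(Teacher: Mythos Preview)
Your argument is correct. The key inequality $L_k\le R_l$ for $k\neq l$ follows exactly as you say from the single shared representative $x_{kl}$, and then $z_0=\max_k L_k$ lies in every $C_k$ by convexity. The only cosmetic point is that for $n=1$ the pairwise hypothesis is vacuous and does not force $C_1\neq\emptyset$; the paper glosses over this too, and in the application the sets are nonempty anyway.

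Your route is genuinely different from the paper's. The paper argues by induction on $n$: for $n\ge 3$ it uses the inductive hypothesis to pick, for each $i$, a point $d_i\in\bigcap_{k\neq i}C_k$, orders these as $d_1<\cdots<d_n$ (if they are distinct), and observes that any middle $d_j$ already lies in $C_j$ because $d_1,d_n\in C_j$ and $C_j$ is convex. This is a Radon-type reduction: it exploits that the $(n-1)$-fold intersections are already known to be nonempty. Your proof is a direct Helly-type argument that never invokes induction, working only with the pairwise witnesses $x_{ij}$ and finite $\min/\max$. Your approach is more self-contained and makes explicit why only finitely many elements of $Z$ are ever compared (so no completeness of the order is needed); the paper's approach is shorter once the inductive machinery is in place and produces the common point with less bookkeeping. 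The inductive alternative you sketch at the end is essentially the paper's proof.
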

\begin{proof}
We proceed by induction on $ n $. Cases $ n=1, 2 $ are trivial, so assume $ n \ge 3 $. For every $ i \in \{1, \ldots , n\} $, the set:
\[\bigcap_{\substack{k \in \{1, \ldots , n\} \\ k \neq i}} C_{k}
\]
is nonempty by the inductive hypothesis, so we can pick an element $ d_i $. If for some distinct $ i $ and $ j $ we have $ d_{i}=d_{j} $, then such an element witnesses that $ \bigcap_{i=1}^{n} C_{i}\neq \emptyset $. Otherwise, assume without loss of generality that $ d_{1} < \dots < d_{n} $. Fix any integer $ j $ such that  $ 1 < j < n $, and let us prove that $ d_j $ witnesses that $ \bigcap_{i=1}^{n} C_{i}\neq \emptyset $. We only have to prove that $ d_j \in C_j $. This follows from $ d_1, d_n \in C_j $ and the convexity of $ C_j $. \qed
\end{proof}

We can now prove Theorem \ref{thm:equivalentholeproof}.

\paragraph*{\textbf{Statement of Theorem \ref{thm:equivalentholeproof}.}}
Let  $\mt B $ be  a DFA. Then, there exists a DFA $\mt B'$ such that $ \mathcal{L(B')} = \mathcal{L(B)} $ and $\text{ent}(\mt B)=\text{ent}(\mt B')= \text{width}(\mt B')$.

\paragraph*{\textbf{Proof}} Let $ \sim $ be the equivalence relation $ \sim_\mathcal{B} $ considered in this section, and define $ \mathcal B' =(Q', s', \delta', F') $ by:
\begin{itemize}
	\item $ Q' = \{  [\alpha]_{\sim}: \alpha \in \pf L\}$;
	\item $ \delta'([\alpha]_{\sim},a) = [\alpha a]_{\sim} $ for every $ \alpha \in Pref (\mathcal{L(B)}) $ and for every $ a \in \Sigma $ such that $ \alpha a \in Pref (\mathcal{L(B)}) $;
	\item $ s' = [ \epsilon ]_\sim $, where $ \epsilon $ is the empty string;
	\item $ F'= \{ [\alpha]_\sim:\alpha \in \mathcal L\} $.
\end{itemize}
Since $ \sim $ is right-invariant, it has finite index and $ \mathcal{L} $ is the union of some $ \sim $-classes (Lemma \ref{lem:simequivalence} and Lemma \ref{lem:sim_finite}), then $ \mathcal{B'} $ is a well-defined DFA. Moreover, it is easy to check that it holds:
\begin{equation}\label{eq30}
\alpha \in [\beta]_\sim \iff \delta' (s', \alpha) = [\beta]_\sim
\end{equation}
which implies that for every $ \alpha \in Pref(\mathcal{L}) $ it holds:
\begin{equation}\label{eq31}
I_{[\alpha]_\sim} = [\alpha]_\sim   
\end{equation}
and so $ \mathcal{L}(\mathcal{B}') = \mathcal{L(\mathcal{B}}) $.

Let us prove that $\text{ent}(\mt B')\leq \text{ent}(\mt B)$. Let $ \{[\alpha_1]_\sim, \dots, [\alpha_{\text{ent}(\mt B')}]_\sim \} $ be an entangled set of (pairwise distinct) states in $ \mathcal{B'} $ having maximum cardinality, as witnessed by some monotone sequence $ (\beta_i)_{i \in \mathbb{N}} $. Let $ \mathcal{V} $ be a minimum-size e.c. decomposition of $\pf {L(B)}$. Since all elements of $ \mathcal{V} $ are convex, then there exists $ V \in \mathcal{V} $ such that $ (\beta_i)_{i \in \mathbb{N}} $ is definitely contained in $ V $, so in particular there exist integers $ i_1, \dots, i_{\text{ent}(\mt B')} $ such that $ \beta_{i_k} \in V $ and $ \delta'(s, \beta_{i_k}) = [\alpha_k]_\sim $ (or equivalently $ [\alpha_k]_\sim = [\beta_{i_k}]_\sim $ by equation \ref{eq30}), for every $ k = 1, \dots, \text{ent}(\mt B') $. Now, define $ q_i = \delta (s, \beta_{i_k}) $. Notice that $ q_1, \dots, q_{\text{ent}(\mt B')} $ are pairwise distinct: if for some distinct integers $ r, s $ it were $ q_r = q_s $, then we would conclude $ \beta_{i_r} \sim \beta_{i_s} $ (because $ \delta(s, \beta_{i_r}) = q_r = q_s = \delta(s, \beta_{i_s}) $ and there is no element of $ \mathcal{V} $ between $\beta_{i_r} $ and $\beta_{i_s} $, being $\beta_{i_r}, \beta_{i_r} \in V $), or equivalently, $ [\alpha_r]_\sim = [\alpha_s]_\sim $, a contradiction because $ \{[\alpha_1]_\sim, \dots, [\alpha_{\text{ent}(\mt B')}]_\sim \} $ consists of pairwise distinct states. Moreover, $ \{q_1, \dots, q_{\text{ent}(\mt B')}\} $ is an entangled set in $ \mathcal{B} $, because all these states occur  in $ V $ (as witnessed by $ \beta_{i_1}, \dots, \beta_{i_k} $) and $ V $ is an element of an e.c. decomposition. In particular, $ \{q_1, \dots, q_{\text{ent}(\mt B')}\} $ witnesses that $\text{ent}(\mt B')\leq \text{ent}(\mt B)$.

Let us prove that $\text{ent}(\mt B)\leq \text{ent}(\mt B')$. Let $ \{q_1, \dots, q_{\text{ent}(\mt B)} \} $ be an entangled set of (pairwise distinct) states in $ \mathcal{B} $ having maximum cardinality, as witnessed by some monotone sequence $ (\alpha_i)_{i \in \mathbb{N}} $. Notice that every $ I_{q_k} $ is equal to a (finite) union of some $ I_{q'} $, with $ q' \in Q' $: if $ \beta\in I_{q_k} $ and $\beta, \beta'\in I_{q'}$, where $q'=[\alpha]_\sim$, then by equation \ref{eq31} it holds $ \beta \sim \beta' $, from which  $ q_k=\delta(s, \beta) = \delta(s, \beta') $ and  $ \beta' \in I_{q_k} $ follow. Since $ (\alpha_i)_{i \in \mathbb{N}} $ goes through $ q_k $ infinitely many times, then there exist $ q'_k \in Q' $ such that $ I_{q'_k} \subseteq I_{q_k} $ and $ (\alpha_i)_{i \in \mathbb{N}} $ goes through $ q'_k $ infinitely many times. We conclude that $ q'_1, \dots, q'_k $ are pairwise distinct and $ \{q'_1, \dots, q'_k \} $ is an entangled set of states in $ \mathcal{B'} $, which implies $\text{ent}(\mt B)\leq \text{ent}(\mt B')$.

    Let us prove that $ \text{width}(\mt B')\leq \text{ent}(\mt B)$. By Dilworth's theorem, there exist states: \[q_1'=[\alpha_1]_\sim, \dots, q_{\text{width}(\mt B')}'=[\alpha_{\text{width}(\mt B')}]_\sim  \] being pairwise not $ \le_\mathcal{B'}$-comparable. Define $ q_i = \delta(s, \alpha_i) $. From Definition  \ref{def:sim} it follows that for all $i=1,\ldots, k$   there exist two integers $ n_i $, $ m_i $, with $n_i\leq m_i$, such that 
\[I_{q_i'}\subseteq V_{n_i}\cup V_{n_i+1} \cup \ldots \cup V_{m_i} \text{ and for every $ n_i \le j \le m_i $ it holds $ V_j \cap I_{q_i}\neq \emptyset $}.\]
Let $W_i=V_{n_i}\cup V_{n_i+1} \cup \ldots \cup V_{m_i}$.  Since, for all $i$,   $W_i$ is a  union of consecutive elements of $\mt V$  and   $I_{q_i}$  is a  subset  of   $W_i$,  
 the incomparability of $q_i', q_j'$ implies $W_i\cap W_j\neq \emptyset$, for all $i\neq j$.  Since all the $ W_{i} $'s are convex, by Lemma \ref{lem:con-int} it follows that $ \bigcap_{i=1}^{m}W_{i} \neq \emptyset $ and, since the $W_i$'s are unions of consecutive elements of the partition $\mathcal V$ containing $q_i$, it follows  that    there exists an element $V$  in $ \mathcal V $ which is a subset of all $ W_{i} $'s. 
But then $V\cap I_{q_i}\neq \emptyset$, for all $i=1, \ldots, \text{width}(\mt B')$, and since $V$ is entangled it follows  that $q_1, \ldots, q_{\text{width}(\mt B')}$ are entangled in $\mt B$.

Summarizing, we proved that    $ \text{width}(\mt B')\leq \text{ent}(\mt B)= \text{ent}(\mt B')$, so that by   Lemma \ref{lem:ent_leq_width}  we obtain  $\text{ent}(\mt B)=\text{ent}(\mt B')= \text{width}(\mt B')$. \qed

\paragraph*{}

Here is the proof of our main theorem. 

\paragraph*{\textbf{Statement of Theorem \ref{thm:hasse}.}}
If $\mt A$ is the minimum automaton of the regular language $\mt L$, then:
\[\text{width}(\mt L)= \text{width}(\mt H_{\mt L})= \text{ent}(\mt A)=\text{ent}(\mt L).\]

\paragraph*{\textbf{Proof}}
   By  Lemma \ref{lem:minent} we have     
   $\text{ent}(\mt A)= \text{ent}(\mt L)$. Since $\text{ent}(\mt B)\leq \text{width}(\mt B)$ for all DFAs (Lemma \ref{lem:ent_leq_width}), we obtain $\text{ent}(\mt L)\leq \text{width}(\mt L), $ and from   Theorem \ref{thm:equivalentholeproof} it follows $ \text{width}(\mt H_{\mt L})= \text{ent}(\mt A) $. We have:
     \[\text{width}(\mt H_{\mt L})= \text{ent}(\mt A)= \text{ent}(\mt L) \leq \text{width}(\mt L)\leq \text{width}(\mt H_{\mt L}) \]
    and the conclusion follows. \qed

\paragraph*{\textbf{Statement of Lemma \ref{lem:minHasse}.}}  Let $ \mathcal{L} $ be a language, and consider the class:
 \begin{equation*}
\mathscr{C} = \{\mathcal{B'} | \text{$ \mathcal{B} $ is a DFA and $ \mathcal{L(B)} = \mathcal{L} $} \}.
 \end{equation*}
 Then, there exists exactly one DFA being in $ \mathscr{C}$ and having the minimum number of states, namely, the Hasse automaton  $\mt H_{\mt L}$. In other words, $\mt H_{\mt L}$ is the minimum DFA of $ \mathscr{C} $.

\paragraph*{\textbf{Proof}}
 Let $ \mt A $ the minimum DFA recognizing $ \mathcal{L} $ and let $ \mathcal{B} $ any DFA recognizing $ \mathcal{L} $. Let us prove that $ \sim_\mathcal{B} $ is a refinement of $ \sim_\mathcal{A} $. If $ \alpha \sim_\mathcal{B} \alpha' $, then by Lemma \ref{lem:easy_sim} we have $\delta(s_\mt B,\alpha)= \delta(s_\mt B, \alpha')$ and   the interval $[\alpha, \alpha']^\pm$ is contained in a  finite union of convex, entangled (with respect to $ \mathcal{B} $) sets  $C_1, \ldots,C_n$ in $\pf {L}$, with $C_i\cap I_{\delta(s_\mt B,\alpha)}\neq \emptyset$  for all $i=1, \ldots,n$. Since $ \mathcal{A} $ is the minimum DFA (and so for every $ q \in Q_\mathcal{A} $ we have that $ I_q $ is the union of some sets $ I_{q'} $, with $ q' \in Q_\mathcal{B} $), then we also have $\delta(s_\mt A,\alpha)= \delta(s_\mt A, \alpha') $, every $ C_i $ is also entangled with respect to $ \mathcal{A} $ and $ C_i\cap I_{\delta(s_\mt A,\alpha)}\neq \emptyset$, so again by Lemma \ref{lem:easy_sim} we conclude $ \alpha \sim_\mathcal{A} \alpha'$. By definition the number of classes of $ \mathcal{B'} $ is equal to index of $ \sim_\mathcal{B} $ and the number of classes of $ \mt H_{\mt L} $ is equal to index of $ \sim_\mathcal{A} $, so $ \mt H_{\mt L}$ is a minimum DFA of $ \mathscr{C} $, being $ \mathcal{B} $ arbitrary. Conversely, if $ \mathcal{B}' $ is a minimum DFA of $ \mathscr{C} $, then $ \sim_\mathcal{A} $ and $ \sim_\mathcal{B} $ are the same equivalence relation and so by construction $ \mathcal{B'} $ and  $ \mt H_{\mt L}$ are the same DFA. \qed

\section{Proofs of Section \ref{sec:computing}}\label{app:computing}

Let us start with a lemma which will be used to provide an upper bound to our dynamic programming algorithm.
  
\begin{lemma} \label{lem:bound}
Let $ \mathcal{A} $ be an NFA  with set of states $ Q $, and let $ q_1, \dots, q_h \in Q $. If there exist $ \nu_1, \dotsm \nu_h \in Pref (\mathcal{L(A)}) $ such that:
\begin{enumerate}
    \item $ \nu_i \in I_{q_i} $ for every $ i = 1, \dots, h $;
    \item $ \nu_1 \prec \dots \prec\nu_h $;
\end{enumerate}
then, there exist $ \nu'_1, \dotsm \nu'_h \in Pref (\mathcal{L(A)}) $ such that:
\begin{enumerate}
    \item $ \nu'_i \in I_{q_i} $ for every $ i = 1, \dots, h $; 
    \item $ \nu'_1 \prec \dots \prec \nu'_h $;
    \item $ |v'_i | \le h - 2 + \sum_{t = 1}^h |Q|^t $ for every $ i = 1, \dots, h $.  
\end{enumerate}
\end{lemma}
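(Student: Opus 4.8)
The statement is a pumping-style length-reduction lemma: given strings $\nu_1 \prec \dots \prec \nu_h$ each reaching a prescribed state $q_i$, we want to replace them by shorter strings $\nu'_1 \prec \dots \prec \nu'_h$ still reaching the same states and still in strictly increasing co-lex order, with an explicit length bound. The plan is to process the strings from the largest ($\nu_h$) down to the smallest, repeatedly shortening a string by deleting a pumpable middle block whenever it is ``too long'', while carefully preserving (i) the state it reaches, (ii) its position relative to the strings already fixed above it, and (iii) the fact that it still dominates (in $\prec$) the strings below it.

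First I would handle the easy, standard part: if a single string $\nu$ with $|\nu| \ge |Q|$ labels a path $s \leadsto q$, then some state repeats along that path, so we can write $\nu = \eta \rho \theta$ with $\rho$ nonempty labelling a cycle, and $\eta\theta$ still labels a path $s \leadsto q$ that is shorter. The subtlety is that deleting $\rho$ changes $\nu$ co-lexicographically in an uncontrolled way, which could violate the order constraints. To control this, the key observation is: when we want to keep $\nu$ \emph{above} a fixed set of constraints (it must stay $\succ$ some already-chosen strings and possibly $\prec$ some others), we should only cut cycles that occur sufficiently deep inside $\nu$ — i.e. we keep a suffix of $\nu$ long enough to ``witness'' its co-lex comparisons against all the relevant strings, and only pump within the prefix part. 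Since there are at most $h$ strings to compare against, a suffix of length roughly $h-2$ plus enough room to reach each state suffices; more precisely, by iterating the cycle-removal on the prefix portion we can bring each $|\nu'_i|$ down to $O(h) + \sum_{t=1}^h |Q|^t$. The term $\sum_{t=1}^h|Q|^t$ arises because, processing top-down, the $i$-th string from the top may need to simultaneously reach its own state and have its shortened version remain comparable to up to $i-1$ longer strings already fixed; a product-construction / simultaneous-pumping argument on a tuple of states (of which there are at most $|Q|^t$ for a $t$-tuple) gives this bound, and the additive $h-2$ accounts for the ``tie-breaking'' suffixes needed to separate consecutive $\nu'_i$'s in the order.

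Concretely, the induction I would run is on $h$ (or equivalently, process $\nu_h, \nu_{h-1}, \dots, \nu_1$ in turn). At stage $i$ (handling $\nu_{h-i+1}$, say), I have already produced short strings $\nu'_{h}, \dots$ for the indices above the current one, and I must produce a short $\nu'$ that (a) lies in $I_{q}$ for the current state $q$, (b) satisfies $\nu' \prec \nu'_{j}$ for all already-fixed larger-index $j$, and (c) is still $\succ \nu_k$ for the not-yet-processed smaller strings — but actually it is cleaner to first fix all strings from the top using only the ``$\prec$ upper constraints'' and then observe the lower strings can be chosen afterward, or to phrase the whole thing as: take $\nu_1, \dots, \nu_h$ and shorten them one at a time from the top, each time using the cycle-deletion on a maximal prefix that does not disturb the co-lex comparison with the (at most $h-1$) other current strings. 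Each deletion strictly decreases total length, so the process terminates, and when it halts every string has length at most the stated bound (otherwise some further cycle deep enough to be harmless would exist).

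The main obstacle is bookkeeping the co-lex order under deletion of internal factors: deleting a factor from the middle of a string can make it co-lex-larger or smaller, and the effect depends on the letters just to the left of the cut versus the corresponding letters of the comparison strings. The clean way around this, which I would make precise, is to never cut within the last $\ell$ characters of a string, where $\ell$ is chosen large enough that the co-lex comparisons of $\nu$ with all the other $h-1$ strings are already decided by the last $\ell$ characters (one can take $\ell \le h-1$ by a pigeonhole-type argument on ``which character position settles each of the $\le h-1$ comparisons''), and only pump inside the prefix of length $|\nu| - \ell$; a repeated state there yields a removable cycle, and removing it leaves the final $\ell$ characters — hence all order relations — untouched while still keeping the reached state $q$ fixed (for this last point one pumps a cycle at a state, not merely deletes an arbitrary factor, so the endpoint is preserved). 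Iterating until the prefix has length below $|Q|$ (for a single target) or below $\sum_{t=1}^h|Q|^t$ (when several strings must be shortened compatibly, so that the relevant state-object is a tuple) gives exactly the claimed bound $|\nu'_i| \le h - 2 + \sum_{t=1}^h |Q|^t$.
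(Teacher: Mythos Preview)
Your proposal contains the right core ingredient—simultaneous pumping on tuples of states via a product construction—but the plan as written has a genuine gap. The claim that the suffix length $\ell$ needed to ``witness'' all $h-1$ co-lex comparisons can be taken $\le h-1$ is false: two strings $\nu_i,\nu_j$ may share an arbitrarily long common suffix, so the rightmost position where they differ (which is what determines $\ell$) is unbounded in terms of $h$. Hence ``freeze the last $\ell\le h-1$ characters of $\nu$ and pump only in the prefix'' does not give any useful bound, and the top-down one-string-at-a-time scheme built on it breaks down: shortening one $\nu_i$ in its unshared prefix alone can never shrink the (possibly very long) suffix it shares with other $\nu_j$'s.

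The paper's argument is organised differently and avoids this problem. Instead of processing strings individually, it looks at the common-suffix structure of \emph{all} $h$ strings at once. Reading right-to-left, the $h$ strings share a common suffix $\xi$; along $\xi$ one tracks the $h$-tuple of states visited by the $h$ runs, and any repeated $h$-tuple yields a factor that can be deleted \emph{simultaneously from all $h$ strings}, preserving every reached state and (trivially, since the same factor is removed from each) all pairwise co-lex relations. This bounds $|\xi|\le |Q|^h$. At the first branching position one string splits off via a single distinguishing character, and the argument recurses on the remaining strings with an $(h{-}1)$-tuple, giving a segment of length $\le |Q|^{h-1}$, and so on down to a final simple-path prefix of length $\le |Q|-1$. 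Adding the at most $h-1$ branching characters yields exactly $h-2+\sum_{t=1}^{h}|Q|^t$. So the $\sum_t |Q|^t$ term does not come from keeping a string comparable to ``up to $i-1$ already-fixed strings'' as you suggest; it comes from pumping \emph{shared} segments on tuples of decreasing size. Your product-automaton remark points in the right direction, but it has to be applied to the shared suffixes of groups of strings simultaneously, not to the prefix of a single string while a short suffix is frozen.
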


\begin{proof}

For notational simplicity, we will prove the theorem for $ h = 3 $ (the extension to the general case is straightforward).

Given $ \varphi \in \Sigma^* $, we denote by $ \varphi(k)$ the $k$-th letter of $ \varphi $ from the right (if $ |\varphi| < k $ we write $ \varphi (k) = \epsilon $, where $ \epsilon $ is the empty string). For example, $\varphi(1)$ is the last letter of $\varphi$). 

Let $\nu_1\prec\nu_2\prec \nu_3$ be strings in $ I_{q_1}, I_{q_2}, I_{q_3} $, respectively. Let  $d_{3,2}$ be the first position from the right where $\nu_3$ and $\nu_2$ differ.  Since $\nu_3\succ \nu_2$, we have $|\nu_3|\geq d_{3,2}$.
Similarly,  let $d_{2,1}$ be the first position from the right in which $\nu_2$ and $\nu_1$ differ. Again, since $\nu_2 \succ \nu_1$, we have $|\nu_2|\geq  d_{2,1}$.
We distinguish three cases.
\begin{enumerate}
\item $ d_{3, 2} = d_{2, 1} $.
\item $d_{3,2} < d_{2,1}$ (see Figure \ref{figura-i}).
\item $ d_{2,1}<d_{3,2}$ (see Figure \ref{figura-ii}).
  \end{enumerate}
We will not prove case 3 because it is analogous to case 2 (just consider Figure \ref{figura-ii} rather than Figure \ref{figura-i}). Moreover, as we will see, case 1 is not the bottleneck of the length bound. Hence, in the following we assume $d_{3,2} \le d_{2,1}$. Since $|\nu_3|\geq  d_{3,2}$ and $|\nu_2| \geq  d_{2,1}\geq d_{3,2}$, then $\nu_3, \nu_2$, and $ \nu_1$ end with  the same word $\xi$ with $|\xi|= d_{3,2}-1$. Note that $\nu_2\prec \nu_3$ implies $ \nu_2(d_{3,2})\prec  \nu_3(d_{3,2})$. To sum up, we can write:
\begin{equation*}
\nu_1=\theta_1 \nu_1(d_{3,2}) \xi\prec \nu_2=\theta_2  \nu_2(d_{3,2}) \xi \prec  \nu_3=\theta_3 \nu_3(d_{3,2})\xi
\end{equation*}
for some strings $ \theta_1, \theta_2, \theta_3 $.

\begin{figure}[H]
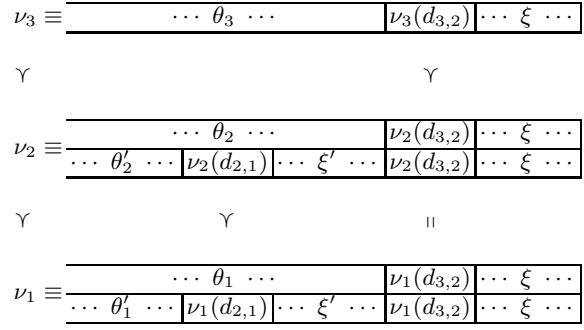

\centering
\begin{tabular}{lc|c|c|c|c|}
\cline{2-6}
$ \nu_3 \equiv $ & \multicolumn{3}{c|}{$ \cdots \  \theta_3 \ \cdots $}  & $ \nu_3(d_{3,2}) $ & $ \cdots \  \xi \  \cdots $ \\ \cline{2-6}
\multicolumn{6}{c}{}\\ 
$ \curlyvee $ &\multicolumn{3}{c}{ }& \multicolumn{1}{c}{$ \curlyvee $} & \multicolumn{1}{c}{}\\ 
\multicolumn{6}{c}{}\\ \cline{2-6}
\multirow{2}{*}{$ \nu_2 \equiv $} & \multicolumn{3}{c|}{$\cdots \  \theta_2 \ \cdots  $} & $ \nu_2(d_{3,2}) $ & $ \cdots \  \xi \  \cdots $ \\ \cline{2-6}
 & $ \cdots \  \theta_{2}'\ \cdots $ & $ \nu_2(d_{2,1}) $ & $ \cdots \ \xi' \  \cdots $ & $ \nu_2(d_{3,2}) $ & $ \cdots \  \xi \  \cdots $ \\ \cline{2-6}
\multicolumn{6}{c}{}\\ 
$ \curlyvee $ &\multicolumn{3}{c}{$ \curlyvee $ }& \multicolumn{1}{c}{$ \shortparallel$} & \multicolumn{1}{c}{}\\ 
\multicolumn{6}{c}{}\\ \cline{2-6}
\multirow{2}{*}{$ \nu_1 \equiv $} & \multicolumn{3}{c|}{$\cdots \  \theta_1 \ \cdots  $} & $ \nu_1(d_{3,2}) $ & $ \cdots \  \xi \  \cdots $ \\ \cline{2-6}
& $ \cdots \  \theta'_{1}\ \cdots $ & $ \nu_1(d_{2,1}) $ & $ \cdots \ \xi' \  \cdots $ & $ \nu_1(d_{3,2}) $ & $ \cdots \  \xi \  \cdots $ \\ \cline{2-6}
\end{tabular}
\caption{Case $ d_{3,2} < d_{2,1} $.  }\label{figura-i}
\end{figure}
 
Without loss of generality, we may assume that   $|\xi|  \leq |\mathcal Q|^3$. Indeed, 
 if $|\xi|>|\mathcal Q|^3$  then when we consider the triples of states visited while reading the last  $| \xi |$ letters in a computation  of  $\nu_1, \nu_2, \nu_3$ we must meet  a repetition, so we could erase a common factor from $\xi$, obtaining a shorter word $\xi_1$ such that 
 $\theta_1 \nu_1(d_{3,2})\xi_1\prec \theta_2 \nu_2(d_{3,2}) \xi_1\prec \theta_3\nu_3(d_{3,2}) \xi_1$, with the three strings still ending in $q_1,q_2$, and $q_3$, respectively.
 
Consider first the case  $d_{3,2}=d_{2,1}$. Let $ s_1, s_2, s_3 $ be the states reached from $ s $ by reading $ \theta_1, \theta_2, \theta_3 $, respectively. Since   $d_{2,1}=d_{3,2} $, then we have $ \nu_1(d_{3, 2}) = \nu_1(d_{2, 1}) \prec \nu_2(d_{2, 1}) = \nu_2 (d_{3, 2}) $ so without loss of generality we may assume that $ \theta_1, \theta_2, \theta_3$   label simple paths from $ s $ to $ s_1, s_2, s_3 $, respectively, and we still have $ \nu_1 \prec \nu_2 \prec \nu_3 $. In other words, we can assume $ | \theta_1|, |\theta_2|, |\theta_3|\leq |\mathcal Q| - 1 $. Hence, in this case we can find $ \nu'_1, \nu'_2, \nu'_3 $ that satisfy the same properties of $ \nu_1, \nu_2, \nu_3 $ and moreover $|\nu_1|, |\nu_2|, |\nu_3|\leq |\mathcal Q| +|\mathcal Q|^3$. 
 
Now consider the case  $d_{3,2}< d_{2,1}$. In this case we have $ \nu_1(d_{3, 2}) = \nu_2 (d_{3, 2}) $. Moreover, $\theta_1 $ and $ \theta_2$ end  with  the same word $\xi'$   with $|\xi'|= d_{2,1}-d_{3,2}-1$, and we can write
\[ \theta_1=\theta_1' \nu_1(d_{2,1})\xi' \prec  \theta_2=\theta'_2\nu_2(d_{2,1})\xi' \]
for some strings $ \theta'_1, \theta'_2 $. Without loss of generality, we can assume that   $|\xi'| \leq |\mathcal Q|^2 $ by arguing as before. Moreover, we have $ \nu_1 (d_{2, 1}) \prec \nu_2 (d_{2, 1})$, so again as before we can assume that $ \theta'_1, \theta'_2 $ and $ \theta_3 $ are simple paths, and so $|\theta'_{1}|,  |\theta'_{2}|, |\theta_{3}|\leq |\mathcal Q| - 1 $.  Hence, in this case we can find $ \nu'_1, \nu'_2, \nu'_3 $ that satisfy the same properties of $ \nu_1, \nu_2, \nu_3 $ and moreover $|\nu_1|, |\nu_2|, |\nu_3|\leq 1 + |\mathcal Q| +|\mathcal{Q}|^2 + |\mathcal Q|^3$. \qed

\begin{figure}[h!]
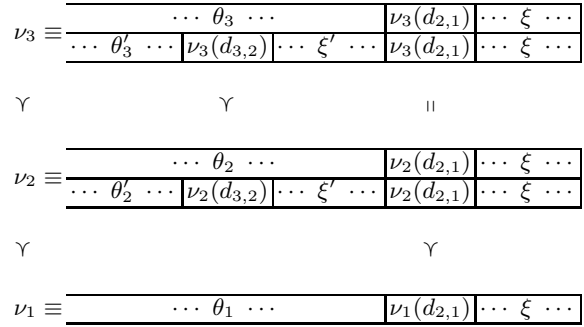

\centering
\begin{tabular}{lc|c|c|c|c|}
\cline{2-6}
\multirow{2}{*}{$ \nu_3 \equiv $} & \multicolumn{3}{c|}{$\cdots \  \theta_3 \ \cdots  $} & $ \nu_3(d_{2,1}) $ & $ \cdots \  \xi \  \cdots $ \\ \cline{2-6}
& $ \cdots \  \theta'_3\ \cdots $ & $  \nu_3(d_{3,2}) $ & $ \cdots \ \xi' \  \cdots $ & $ \nu_3(d_{2,1}) $ & $ \cdots \  \xi \  \cdots $ \\ \cline{2-6}
\multicolumn{6}{c}{}\\ 
$ \curlyvee $ &\multicolumn{3}{c}{$ \curlyvee $ }& \multicolumn{1}{c}{$ \shortparallel $} & \multicolumn{1}{c}{}\\  
\multicolumn{6}{c}{}\\ \cline{2-6}
\multirow{2}{*}{$ \nu_2 \equiv $} & \multicolumn{3}{c|}{$\cdots \  \theta_2 \ \cdots  $} & $ \nu_2(d_{2,1}) $ & $ \cdots \  \xi \  \cdots $ \\ \cline{2-6}
 & $ \cdots \  \theta'_2\ \cdots $ & $ \nu_2(d_{3,2}) $ & $ \cdots \ \xi' \  \cdots $ & $ \nu_2(d_{2,1}) $ & $ \cdots \  \xi \  \cdots $ \\ \cline{2-6}
\multicolumn{6}{c}{}\\ 
$ \curlyvee $ &\multicolumn{3}{c}{ }& \multicolumn{1}{c}{$ \curlyvee $} & \multicolumn{1}{c}{}\\
\multicolumn{6}{c}{}\\ \cline{2-6}
$ \nu_1 \equiv $ & \multicolumn{3}{c|}{$ \cdots \  \theta_1 \ \cdots $}  & $ \nu_1(d_{2,1}) $ & $ \cdots \  \xi \  \cdots $ \\ \cline{2-6}
\multicolumn{6}{c}{}\\ 
\end{tabular}
\caption{Case $ d_{2,1}<d_{3,2}$. }\label{figura-ii}
\end{figure}

\end{proof}
  
We can now prove our graph-theoretical characterization of the width.

\paragraph*{\textbf{Statement of Theorem \ref{thm:conditions width}.}}
    Let $ \mathcal{L} $ be a regular language, and let $ \mathcal{A} $ the minimum DFA of $ \mathcal{L} $, with set of states $ Q $. Let $ k \ge 2 $ be an integer. Then, $ width(\mathcal{L}) \ge k $ if and only if there exist strings $ \mu_1, \dots, \mu_k $ and $ \gamma $ and there exist pairwise distinct $ u_1, \dots, u_k \in Q $ such that:
    \begin{enumerate}
        \item $ \mu_j $ labels a path from the initial state $ s $ to $ u_j $, for every $ j = 1, \dots, k $;
        \item $ \gamma $ labels a cycle starting (and ending) at $ u_j $, for every $ j = 1, \dots, k $;
        \item either all the $ \mu_j $'s are smaller than $ \gamma $ or $ \gamma $ is smaller than all $ \mu_j $'s;
        \item $ \gamma $ is not a suffix of $ \mu_j $, for every $ j = 1, \dots, k $.
    \end{enumerate}

\paragraph*{\textbf{Proof}} First, note that by Theorem \ref{thm:hasse} we have $ width(\mathcal{L}) = ent (\mathcal{A}) $. Let us prove that  if the stated conditions hold true, then $ ent (\mathcal{A}) \ge k $. Notice that for every integer $ i $ we have $ \mu_j \gamma^i \in I_{u_j } $. Moreover, the $ \mu_j $'s are pairwise distinct because the $ u_j $'s are pairwise distinct, so without loss of generality we can assume $ \mu_1 \prec \dots \prec \mu_k $.
\begin{enumerate}
    \item If $ \mu_1 \prec \dots \prec \mu_k \prec \gamma $, consider the increasing sequence:
    \begin{equation*}
        \mu_1 \prec \dots \prec \mu_k \prec \mu_1 \gamma \prec \dots \mu_k \gamma \prec \mu_1 \gamma^2 \prec \dots \mu_k \gamma^2 \prec \mu_1 \gamma^3 \prec \dots \mu_k \gamma^3 \dots
    \end{equation*}
    \item If $ \gamma \prec \mu_1 \prec \dots \prec \mu_k $, consider the decreasing sequence:
        \begin{equation*}
        \mu_k \succ \dots \succ \mu_1 \succ \mu_k \gamma \succ \dots \mu_1 \gamma \succ \mu_k \gamma^2 \succ \dots \mu_1 \gamma^2 \succ \mu_k \gamma^3 \succ \dots \mu_1 \gamma^3 \dots
    \end{equation*}
    where for example $ \mu_k \gamma \prec \mu_1 $ because $ \gamma \prec \mu_1 $ and $ \gamma $ is not a suffix of $ \mu_1 $.
\end{enumerate}
The sequence witnesses that $ \{u_1, \dots, u_k \} $ is an entangled set of distinct states, so $ ent (\mathcal{A}) \ge k $.

Conversely, assume that $ ent (\mathcal{A}) \ge k $. This means that there exist distinct states $ u_1, \dots, u_k $ and there exists a monotone sequence  $ (\alpha_i)_{i \in \mathbb{N}} $ that goes through each state infinitely many times. Since the alphabet $ \Sigma $ is finite, up to removing a finite number of initial elements, we can assume that all $ \alpha_n $'s end with the same $ m = |Q|^k $ letters (in particular, all $ \alpha_n $'s have at least length $ m $), so we can write $ \alpha_n = \alpha'_n \theta $, for some $ \theta \in \Sigma^m $. Since $ |Q| $ is finite, up to taking a subsequence we can assume that the sequence still goes through each state infinitely many times, and all the $ \alpha_i $'s ending in the same state share the sorted sequence of the last $ m + 1 $ states of their path on $ \mathcal{A} $ (that is, the states through which we read $ \theta $). Finally, up to take a subsequence we can assume that $ \alpha_i \in I_{u_j} $ if and only if $ j - i $ is a multiple of $ k $, that is, $ \alpha_1 $, $ \alpha_{k + 1} $, $ \alpha_{2k + 1} $, $ \dots $ are in $ I_{u_1} $, $ \alpha_2 $, $ \alpha_{k + 2} $, $ \alpha_{2k + 2} $, $ \dots $ are in $ I_{u_2} $, and so on.

Let $ x^j_0, x^j_1, \dots, x^j_m $, with $ u_j = x^j_m $ the last $ m + 1 $ states of the path labeled $ \alpha_j $ starting from the initial state, for $ j = 1, \dots, k $. By the properties of the monotone sequence, these are also the last $ m + 1 $ states of the path labeled $ \alpha_i $, for every $ i $ such that $ j - i $ is a multiple of $ k $.

Notice the for every $ 0 \le s \le m $ states $ x^j_s $'s are pairwise distinct, otherwise some $ u_j $'s would be equal (because all $ \alpha_n $'s end with the same $ m $ letters). Moreover, for every $ 0 \le s \le m $ the tuple $ (x^1_s, \dots, x^k_s) $ consists of $ k $ states, and we have $ m + 1 = |Q|^k + 1 $ such tuples, so two tuples must be equal. In other words, there exist $ h, h' $, with $ 0 \le h < h' \le m $, such that $ x^j_h = x^j_{h'} $, for every $ j =1, \dots, k $. This means that between $ x^j_h $ and $ x^j_{h'} $ we read a cycle for every $ j $, and this cycle is the same for every $ j $ (because $ \theta $ is the same for every $ j $). Call this cycle $ \gamma' $, write $ \theta = \phi \gamma' \psi $ for some strings $ \phi $ and $ \psi $, and define $ \beta_i = \alpha'_i \phi $. The sequence $ (\beta_i)_{i \in \mathbb{N}} $ is monotone as well, so from the first $ 2k $ elements of this sequence we can pick $ 2k - 1 $ elements $ \delta_1, \dots, \delta_{2k - 1} $ such that:
\begin{enumerate}
\item $ \delta_1 \prec \dots \prec \delta_{2k - 1} $;
\item $ \delta_i $ and $ \delta_{k + i} $ end in the same state $ u_i $, for $ i = 1, \dots, k - 1 $.
\item $ u_1, \dots, u_{k - 1} $ and the state $ u_k $ where $ \delta_k $ ends are pairwise distinct.
\item For every $ i = 1, \dots, k $ there is a cycle labeled $ \gamma' $ starting at $ u_i $.
\end{enumerate}
Let $ r $ be an integer such that $ |(\gamma')^r| > |\delta_i| $ for every $ i = 1, \dots, 2k - 1 $. Then $ \gamma = (\gamma')^r $ is again a cycle starting at $ u_i $, for every $ i = 1, \dots, k $, and $ \gamma $ is not a suffix of $ \delta_i $, for every $ i = 1, \dots, 2k - 1 $. We distinguish two cases:
\begin{enumerate}
    \item $ \delta_{k} \prec \gamma $. In this case, let $ \mu_1, \dots, \mu_k $ be equal to $ \delta_1, \dots, \delta_k $.
    \item $ \gamma \prec \delta_{k} $. In this case, let $ \mu_1, \dots, \mu_k $ be equal to $ \delta_k, \dots, \delta_{2k - 1} $.
\end{enumerate}
The conclusion follows with these choices for $ \mu_1, \dots, \mu_k $, $ \gamma $, $ u_1, \dots, u_{k} $. \qed

\paragraph*{}

By using Lemma \ref{lem:bound} we can state a computable variant of Theorem \ref{thm:conditions width}.

\begin{corollary}\label{cor:conditions width}
    Let $ \mathcal{L} $ be a regular language, and let $ \mathcal{A} $ the minimum DFA of $ \mathcal{L} $, with set of states $ Q $. Let $ k \ge 2 $ be an integer. Then, $ width(\mathcal{L}) \ge k $ if and only if there exist strings $ \mu_1, \dots, \mu_k $ and $ \gamma $ and there exist pairwise distinct $ u_1, \dots, u_k \in Q $ such that:
    \begin{enumerate}
        \item $ \mu_j $ labels a path from the initial state $ s $ to $ u_j $, for every $ j = 1, \dots, k $;
        \item $ \gamma $ labels a cycle starting (and ending) at $ u_j $, for every $ j = 1, \dots, k $;
        \item either all the $ \mu_j $'s are smaller than $ \gamma $ or $ \gamma $ is smaller than all $ \mu_j $'s;
        \item $ |\mu_1|, \dots, |\mu_k| < |\gamma| \le 2k - 3 + |Q|^k + \sum_{t = 1}^{2k - 1} |Q|^t $.
    \end{enumerate}
\end{corollary}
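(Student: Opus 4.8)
The direction $(\Leftarrow)$ I would dispatch immediately: if conditions 1--4 hold, then condition 4 gives $|\gamma|>|\mu_j|$ for every $j$, so $\gamma$ cannot be a suffix of any $\mu_j$, and hence all four conditions of Theorem~\ref{thm:conditions width} are satisfied; therefore $width(\mathcal{L})\ge k$. The content is in the direction $(\Rightarrow)$, and here Theorem~\ref{thm:conditions width} cannot be used as a black box: it guarantees a common cycle $\gamma$ that is not a suffix of the $\mu_j$'s, but it gives no bound on $|\gamma|$ or the $|\mu_j|$'s. The plan is to re-run the forward direction of the proof of Theorem~\ref{thm:conditions width} and to insert Lemma~\ref{lem:bound} at the precise point where the witnessing strings (and hence the required power of the primitive cycle) would otherwise blow up.

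In more detail: by Theorem~\ref{thm:hasse} we have $width(\mathcal{L})=ent(\mathcal{A})\ge k$. Running the forward direction of the proof of Theorem~\ref{thm:conditions width} produces: (i) pairwise distinct states $u_1,\dots,u_k\in Q$ and a word $\gamma'$ with $1\le|\gamma'|\le|Q|^k$ labeling a cycle based at each $u_i$ (this is the primitive cycle found inside the common terminal block $\theta\in\Sigma^{|Q|^k}$, \emph{before} it is inflated to a power $(\gamma')^r$); and (ii) strings $\delta_1\prec\dots\prec\delta_{2k-1}$ with $\delta_i\in I_{u_{\sigma(i)}}$, where $\sigma(i)=((i-1)\bmod k)+1$, so that the cyclic pattern of target states is $u_1,\dots,u_{k-1},u_k,u_1,\dots,u_{k-1}$; in particular $\{u_{\sigma(1)},\dots,u_{\sigma(k)}\}=\{u_{\sigma(k)},\dots,u_{\sigma(2k-1)}\}=\{u_1,\dots,u_k\}$. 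I would then apply Lemma~\ref{lem:bound} with $h=2k-1$, states $q_i:=u_{\sigma(i)}$, and $\nu_i:=\delta_i$, obtaining $\delta'_1\prec\dots\prec\delta'_{2k-1}$ with $\delta'_i\in I_{u_{\sigma(i)}}$ and $|\delta'_i|\le (2k-1)-2+\sum_{t=1}^{2k-1}|Q|^t=2k-3+\sum_{t=1}^{2k-1}|Q|^t$ for all $i$.

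Finally, let $r\ge 1$ be minimal with $r|\gamma'|>\max_i|\delta'_i|$ and set $\gamma:=(\gamma')^r$. Then $\gamma$ still labels a cycle based at each $u_i$; moreover $|\gamma|>\max_i|\delta'_i|$, and by minimality of $r$ also $|\gamma|\le\max_i|\delta'_i|+|\gamma'|\le 2k-3+\sum_{t=1}^{2k-1}|Q|^t+|Q|^k$, which is exactly the bound of condition 4. Since $|\gamma|>|\delta'_k|$ we have $\gamma\neq\delta'_k$, so either $\delta'_k\prec\gamma$ or $\gamma\prec\delta'_k$. In the first case I take $(\mu_1,\dots,\mu_k):=(\delta'_1,\dots,\delta'_k)$: these end at the distinct states $u_{\sigma(1)},\dots,u_{\sigma(k)}$ and satisfy $\mu_1\prec\dots\prec\mu_k\prec\gamma$. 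In the second case I take $(\mu_1,\dots,\mu_k):=(\delta'_k,\dots,\delta'_{2k-1})$: these end at the distinct states $u_{\sigma(k)},\dots,u_{\sigma(2k-1)}$ and satisfy $\gamma\prec\mu_1\prec\dots\prec\mu_k$. In either case conditions 1--3 are clear, and $|\mu_j|\le\max_i|\delta'_i|<|\gamma|$ gives the left inequality of condition 4. A decreasing witnessing monotone sequence is treated symmetrically.

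\textbf{Main obstacle.} The key difficulty is that Theorem~\ref{thm:conditions width} cannot be invoked verbatim: one must re-enter its proof to extract (a) a common cycle of length at most $|Q|^k$, rather than the a priori unbounded power $(\gamma')^r$ the theorem returns, and (b) the $2k-1$ witnessing strings while they are still ``raw'', so that Lemma~\ref{lem:bound} can shrink them and thereby keep $r$ (hence $|\gamma|$) bounded. After that, the remaining checks --- that $(\gamma')^r$ is still a cycle at each $u_i$, that $r$ is controlled, and that the two-case split always isolates $k$ of the $\delta'_i$'s on one side of $\gamma$ --- are routine; the only delicate point is the index bookkeeping, namely verifying that the cyclic pattern $\sigma$ forces every relevant block of $k$ consecutive $\delta'_i$'s to hit all $k$ distinct states $u_1,\dots,u_k$.
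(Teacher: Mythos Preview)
Your proposal is correct and follows essentially the same approach as the paper: for $(\Leftarrow)$ you use that $|\gamma|>|\mu_j|$ rules out the suffix condition and then invoke Theorem~\ref{thm:conditions width}, and for $(\Rightarrow)$ you re-enter the proof of Theorem~\ref{thm:conditions width}, apply Lemma~\ref{lem:bound} with $h=2k-1$ to the $\delta_i$'s to get the bound $2k-3+\sum_{t=1}^{2k-1}|Q|^t$, and then choose the minimal power $r$ of the primitive cycle $\gamma'$ (of length at most $|Q|^k$) to obtain the stated bound on $|\gamma|$. The paper's proof is precisely this, only stated more tersely.
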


\begin{proof}
$ (\leftarrow) $ follows from Theorem \ref{thm:conditions width} because 4) implies that $ \gamma $ is not a suffix of any $ \mu_j $. As for $ (\rightarrow) $, it suffices to push forward the proof of Theorem \ref{thm:conditions width}. Notice that by Lemma \ref{lem:bound} we can assume $ |\delta_i | \le 2k - 3 + \sum_{t = 1}^{2k - 1} |Q|^t $ for every $ i = 1, \dots, 2k - 1 $ (and so $ |\mu_j| \le 2k - 3 + \sum_{t = 1}^{2k - 1} |Q|^t $ for every $ j = 1, \dots, k $). Moreover, notice that $ |\gamma'| \le |Q|^k $ (because $ |\theta| = |Q|^k $), and if we pick the minimum $ r $ such that $ |(\gamma')^r| > |\delta_i| $ for every $ i = 1, \dots, 2k - 1 $, then we obtain $ |\gamma| \le 2k - 3 + |Q|^k + \sum_{t = 1}^{2k - 1} |Q|^t $. \qed
\end{proof}

\paragraph*{\textbf{Statement of Theorem \ref{thm:dyn prog}.}}
 Let $\mathcal L$ be  a regular language, given as input by means of any DFA $\mathcal A = (Q, s, \delta,F)$ recognizing $ \mathcal{L}$. 
Then, $p = \text{width}(\mathcal L)$ is  computable in time $|Q|^{O(p)}$.
\begin{proof}
 We exhibit a dynamic programming algorithm based on Corollary \ref{cor:conditions width}. The exact value for $\text{width}(\mathcal L)$ is then found by exponential and binary searches on $k$,  testing the conditions of  Corollary \ref{cor:conditions width} for $O(\log (\text{width}(\mathcal L)))$ values of $k$.
 
 Up to minimizing $ \mathcal{A} $ we can assume that $ \mathcal{A} $ is the minimum DFA recognizing $ \mathcal{L} $. Let $N' = 2k - 3 + |Q|^k + \sum_{t = 1}^{2k - 1} |Q|^t$ be the upper-bound to the lengths of the strings $\mu_i$ ($1\leq i \leq k$) and $\gamma$ that need to be considered, and let $N = N'+1$ be the number of states in a path labeled by a string of length $N'$. Asymptotically, note that $N \in O(|Q|^{2k})$.
 The high-level idea of the algorithm is as follows. First, in condition (3) of Corollary \ref{cor:conditions width}, we focus on finding paths $\mu_j$'s smaller than $\gamma$, as the other case (all $\mu_j$'s larger than $\gamma$) can be solved with a symmetric strategy. Then:
 \begin{enumerate}
     \item For each state $u$ and each length $2\leq \ell \leq N$, we compute the co-lexicographically smallest path of length (number of states) $\ell$ connecting $s$ with $u$.
     \item For each $k$-tuple $u_1,\dots, u_k$ and each length $\ell \leq N$, we compute the co-lexicographically largest string $\gamma$ labeling $k$ cycles of length (number of states) $\ell$ originating (respectively, ending) from (respectively, in) all the states $u_1,\dots, u_k$.
 \end{enumerate}
 
Steps (1) and (2) could be naively solved by enumerating the strings $\mu_1, \dots, \mu_k$, and $\gamma$ and trying all possible combinations of states $u_1, \dots, u_k$. Because of the string enumeration step, however, this strategy would be exponential in $N$, i.e. doubly-exponential in $k$. We show that a dynamic programming strategy is exponentially faster. 

\texttt{Step (1)}. This construction is identical to the one  used  in \cite{alanko2020wheeler} for the Wheeler case $ (p = 1) $. For completeness, we report it here. 
Let $\pi_{u,\ell}$, with $u\in Q$ and $2 \leq \ell \leq N$, denote the predecessor of $u$ such that the co-lexicographically smallest path of length (number of states) $\ell$ connecting the source $s$ to $u$ passes through $\pi_{u,\ell}$ as follows: $s \rightsquigarrow \pi_{u,\ell} \rightarrow u$. 
 The node $\pi_{u,\ell}$ coincides with $s$ if $\ell=2$ and $u$ is a successor of $s$; in this case, the path is simply $s \rightarrow u$.
 If there is no path of length $\ell$ connecting $s$ with $u$, then we write $\pi_{u,\ell} = \bot$.
 We show that the set 
 $\{\pi_{u,\ell}\ :\ 2\leq \ell \leq N,\ u\in Q\}$ 
 stores in just polynomial space all co-lexicographically smallest paths of any fixed length $2 \leq \ell \leq N$ from the source to any node $u$. We denote such path --- to be intended as a sequence $u_1 \rightarrow \dots \rightarrow u_\ell$ of states --- with $\alpha_\ell(u)$. 
 The node sequence $\alpha_\ell(u)$ can be obtained recursively (in $O(\ell)$ steps) as $\alpha_\ell(u) = \alpha_{\ell-1}(\pi_{u,\ell}) \rightarrow u$, where $\alpha_{1}(s) = s$ by convention.
 Note also that $\alpha_\ell(u)$ does not fully specify the sequence of edges (and thus labels) connecting those $\ell$ states, since two states may be connected by multiple (differently labeled) edges. However, the corresponding co-lexicographically smallest sequence $\lambda^-(\alpha_\ell(u))$ of $\ell-1$ labels is uniquely defined as follows:
 $$
 \left\{
 \begin{array}{ll}
    \lambda^-(\alpha_\ell(u)) = \mathrm{min}\{a\in\Sigma\ |\ \delta(s,a)=u\} & \mathrm{if}\ \ell=2 ,\\
    \lambda^-(\alpha_\ell(u)) = \lambda^-(\alpha_{\ell-1}(\pi_{u,\ell}) \rightarrow u) = \lambda^-(\alpha_{\ell-1}(\pi_{u,\ell})) \cdot \mathrm{min}\{a\in\Sigma\ |\ \delta(\pi_{u,\ell},a)=u\} & \mathrm{if}\ \ell > 2.
 \end{array}\right.
 $$
 

 It is not hard to see that each $\pi_{u,\ell}$ can be computed in $|Q|^{O(1)}$ time using dynamic programming. First, we set $\pi_{u,2} = s$ for all successors $u$ of $s$. Then, for $\ell = 3, \dots, N$:

$$
\pi_{u,\ell} = \underset{v\in \text{Pred}(u)}{\mathrm{argmin}} \Big(
  \lambda^-( \alpha_{\ell-1}(v) ) \cdot  \mathrm{min}\{a\in\Sigma\ |\ \delta(v,a)=u\} \Big)    
$$

where $\text{Pred}(u)$ is the set of all predecessors of $u$ and the $\mathrm{argmin}$ operator compares strings in co-lexicographic order. In the equation above, if none of the $\alpha_{\ell-1}(v)$ are well-defined (because there is no path of length $\ell-1$ from $s$ to $v$), then $\pi_{u,\ell} = \bot$. 
Since there are $|Q|\times N = |Q|^{O(k)}$ variables $\pi_{u,\ell}$ and each can be computed in time  $|Q|^{O(1)}$, overall Step (1) takes $|Q|^{O(k)}$ times.
This completes the description of Step (1).
 
\texttt{Step (2)}. Fix a $k$-tuple $u_1,\dots, u_k$ and a length $2\leq \ell \leq N$. Our goal is now to show how to compute the co-lexicographically largest string $\gamma$ of length $\ell -1$ labeling $k$ cycles of length (number of states) $\ell$  originating (respectively, ending) from (respectively, in) all the states $u_1,\dots, u_k$. Our final strategy will iterate over all such $k$-tuple of states (in time exponential in $k$) in order to find one satisfying the conditions of Corollary \ref{cor:conditions width}. 

Our goal can again be solved by dynamic programming. Let $u_1,\dots, u_k$ and $u'_1,\dots, u'_k$ be two $k$-tuples of states, and let $2\leq \ell \leq N$. Let moreover $\pi_{u_1,\dots, u_k, u'_1,\dots, u'_k, \ell}$ be the $k$-tuple $\langle u''_1,\dots, u''_k \rangle$ of states such that there exists a string $\gamma$ of length $\ell-1$ with the following properties:
\begin{itemize}
    \item For each $1\leq i \leq k$, there is a path $u_i \rightsquigarrow u''_i \rightarrow u'_i$ of length (number of nodes) $\ell$ labeled with $\gamma$, and
    \item $\gamma$ is the co-lexicographically largest string satisfying the above property. 
\end{itemize}

If such a string $\gamma$ does not exist, then we set $\pi_{u_1,\dots, u_k, u'_1,\dots, u'_k, \ell} = \bot$.

Remember that we fix $u_1,\dots, u_k$.
For $\ell = 2$ and each $k$-tuple $u'_1,\dots, u'_k$, it is easy to compute $\pi_{u_1,\dots, u_k, u'_1,\dots, u'_k, \ell}$:  this $k$-tuple is $\langle u_1,\dots, u_k \rangle$ (all paths have length 2) if and only if there exists $c\in\Sigma$ such that $u'_i \in \delta(u_i,c)$ for all $1\leq i\leq k$. Then, $\delta$ is formed by one character: the largest such $c$.

For $\ell >2$, the $k$-tuple $\pi_{u_1,\dots, u_k, u'_1,\dots, u'_k, \ell}$ can be computed as follows. Assume we have computed those variables for all lengths $\ell'<\ell$.
Note that for each such $\ell'<\ell$ and $k$-tuple $u''_1, \dots, u''_k$, the variables $\pi_{u_1,\dots, u_k, u''_1,\dots, u''_k, \ell'}$ identify $k$ paths $u_i \rightsquigarrow u''_i$ of length (number of nodes) $\ell'$. Let us denote with $\alpha_{\ell'}(u''_i)$ such paths, for $1\leq i\leq k$.

Then, $\pi_{u_1,\dots, u_k, u'_1,\dots, u'_k, \ell}$ is equal to  $\langle u''_1,\dots, u''_k \rangle$ maximizing co-lexicographically the string $\gamma'\cdot c$ defined as follows:

\begin{enumerate}
    \item $u'_i \in \delta(u''_i,c)$ for all $1\leq i\leq k$,  
    \item $\pi_{u_1,\dots, u_k, u''_1,\dots, u''_k, \ell-1} \neq \bot$, and
    \item $\gamma'$ is the co-lexicographically largest string labeling all the paths $\alpha_{\ell-1}(u''_i)$. Note that this string exists by condition (2), and it can be easily built by following those paths in parallel (choosing, at each step, the largest character labeling all the $k$ considered edges of the $k$ paths).
\end{enumerate}
 
If no $c\in \Sigma $ satisfies condition (1), or condition (2) cannot be met, then $\pi_{u_1,\dots, u_k, u'_1,\dots, u'_k, \ell} = \bot$.

Note that $\pi_{u_1,\dots, u_k, u_1,\dots, u_k, \ell}$ allows us to identify (if it exists) the largest string $\gamma$ of length $\ell-1$ labeling $k$ cycles originating and ending in each $u_i$, for $1\leq i \leq k$. 

Each tuple $\pi_{u_1,\dots, u_k, u'_1,\dots, u'_k, \ell}$ can be computed in $|Q|^{O(1)}$ time by dynamic programming (in order of increasing $\ell$), and there are $|Q|^{O(k)}$ such tuples to be computed (there are $|Q|^{O(k)}$ ways of choosing $u_1,\dots, u_k, u'_1,\dots, u'_k$, and $N \in O(|Q|^{2k}$). Overall, also Step (2) can therefore be solved in $|Q|^{O(k)}$ time. 

To sum up, we can check if the conditions of Corollary \ref{cor:conditions width} hold as follows:

\begin{enumerate}
    \item We compute $\pi_{u_i,\ell}$ for each $u\in Q$ and $\ell \leq N$. This identifies a string $\mu_u^\ell$ for each such pair $u\in Q$ and $\ell \leq N$: the co-lexicographically smallest one, of length $\ell$, labeling a path connecting $s$ with $u$.
    \item For each $k$-tuple $u_1, \dots, u_k$ and each $\ell \leq N$, we compute $\pi_{u_1,\dots, u_k, u_1,\dots, u_k, \ell}$. This identifies a string $\gamma_{u_1,\dots, u_k}^\ell$ for each such tuple $u_1,\dots, u_k$ and $\ell \leq N$: the co-lexicographically largest one, of length $\ell$, labeling $k$ cycles  originating and ending in each $u_i$, for $1\leq i \leq k$. 
    \item We identify the $k$-tuple $u_1, \dots, u_k$ and the lengths $\ell' < \ell \leq N$ (if they exist) such that $\mu_{u_i}^{\ell'} \prec \gamma_{u_1,\dots, u_k}^\ell$ for all $1\leq i \leq k$.
\end{enumerate}
 
The conditions of Corollary \ref{cor:conditions width} hold if and only if step 3 above succeeds for at least one $k$-tuple $u_1, \dots, u_k$ and lengths $\ell' < \ell \leq N$. Overall, including the main exponential search on $k$, the algorithm terminates in $|Q|^{O(width(\mathcal L(\mathcal A)))}$ time. \qed
 
\end{proof}

\section{Proofs of Section \ref{sec:minimal}}\label{app:nerode}

Let $\mt L \subseteq \Sigma^* $ be a language, and let $ \sim $ be an equivalence relation on $ \pf L $. We say that $ \sim $ \emph{respects $ \pf L $} if:
\begin{equation*}
    (\forall \alpha, \beta \in \pf L)(\forall \phi \in \Sigma^*)(\alpha \sim \beta \land \alpha \phi \in \pf L \to \beta \phi \in \pf L).
\end{equation*}
Now, let us define the right-invariant, $ \mathcal{P} $-consistent and $ \mathcal{P} $-convex refinements of an equivalence relation $ \sim $.
\begin{enumerate}
    \item Assume that $ \sim $ respects $ \pf  L $. For every $ \alpha, \beta \in \pf  L $, define:
\begin{equation*}
    \alpha \sim^r \beta \iff (\forall \phi \in \Sigma^*)(\alpha \phi \in \pf L \to \alpha \phi \sim \beta \phi).
\end{equation*}
    We say that $ \sim^r $ is the \emph{right-invariant refinement} of $ \sim $.
    \item  Let $ \mathcal{P} = \{U_1, \dots, U_p \} $ be a partition of $ \pf  L $. For every $ \alpha, \beta \in  \pf  L$, define:
\begin{equation*}
    \alpha \sim^{cs} \beta \iff (\alpha \sim \beta) \land (U_\alpha = U_\beta)
\end{equation*}
    We say that $ \sim^{cs} $ is the \emph{$ \mathcal{P} $-consistent refinement} of $ \sim $.
    \item Let $ \mathcal{P} = \{U_1, \dots, U_p \} $ be a partition of $ \pf L $. Assume that $ \sim $ is $ \mathcal{P} $-consistent. For every $ \alpha, \gamma   \in \pf L $, define:
    \begin{equation*}
\begin{split}
    & \alpha \sim^{cv} \gamma \iff (\alpha \sim \gamma) \land \\
    & \land (\forall \beta \in \pf L)(((U_{\alpha} = U_{\beta}) \land (\min \{\alpha, \gamma \} \prec \beta \prec \max \{\alpha, \gamma \}) \to \alpha \sim \beta).
\end{split}
\end{equation*}
    We say that $ \sim^{cv} $ is the \emph{$ \mathcal{P} $-convex refinement} of $ \sim $.
\end{enumerate}
It is easy to check that $ \sim^r $ is the coarsest right-invariant equivalence relation refining $ \sim $, $ \sim^{cs} $ is the coarsest $ \mathcal{P} $-consistent equivalence relation refining $ \sim $ and $ \sim^{cv} $ is the coarsest $ \mathcal{P} $-convex equivalence relation refining $ \sim $.

We wish to prove that any equivalence relation that respects $ Pref (\mathcal{L}) $ admits a coarsest refinement being $ \mathcal{P} $-consistent, $ \mathcal{P} $-convex and right-invariant at once, because then we will be able to define an equivalence relation inducing the minimum ($ \mathcal{P} $-sortable) DFA.
\begin{lemma}\label{lem:refinement}
Let $ \mathcal{L} \subseteq \Sigma^* $ be a language, and let $ \mathcal{P} $ be a partition of $ Pref (\mathcal{L}) $. If $\sim$ is a $ \mathcal{P} $-consistent and right-invariant equivalence relation on $ Pref (\mathcal{L}) $, then the relation $(\sim^{cv})^r$ is $ \mathcal{P} $-consistent, $ \mathcal{P} $-convex and right-invariant.
\end{lemma}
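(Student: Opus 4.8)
The plan is to show that applying the three refinement operators in the order $ \sim \mapsto \sim^{cv} \mapsto (\sim^{cv})^r $ produces a relation that enjoys all three properties simultaneously. The key point is that the operators are not independent: taking a right-invariant refinement could in principle destroy $ \mathcal{P} $-consistency or $ \mathcal{P} $-convexity, so I must verify that starting from a $ \mathcal{P} $-consistent right-invariant relation $ \sim $, neither the $ \mathcal{P} $-convexification nor the subsequent right-invariant refinement spoils the properties already obtained. I would first record the easy structural facts: $ \sim^{cv} $ is by construction the coarsest $ \mathcal{P} $-convex equivalence refining $ \sim $, and since it refines $ \sim $ which is $ \mathcal{P} $-consistent, $ \sim^{cv} $ is itself $ \mathcal{P} $-consistent; and $ (\sim^{cv})^r $ is the coarsest right-invariant equivalence refining $ \sim^{cv} $, hence in particular right-invariant. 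So the only things genuinely left to prove are that $ (\sim^{cv})^r $ is still $ \mathcal{P} $-consistent and still $ \mathcal{P} $-convex.

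$ \mathcal{P} $-consistency of $ (\sim^{cv})^r $ is immediate: $ (\sim^{cv})^r $ refines $ \sim^{cv} $, which is $ \mathcal{P} $-consistent, and any refinement of a $ \mathcal{P} $-consistent relation is $ \mathcal{P} $-consistent (if $ \alpha (\sim^{cv})^r \beta $ then $ \alpha \sim^{cv} \beta $, so $ U_\alpha = U_\beta $). The substantive step is $ \mathcal{P} $-convexity. Here I would take $ \alpha, \gamma \in \pf L $ with $ \alpha (\sim^{cv})^r \gamma $ and $ \beta \in \pf L $ with $ U_\alpha = U_\beta $ and $ \min\{\alpha,\gamma\} \prec \beta \prec \max\{\alpha,\gamma\} $, and show $ \alpha (\sim^{cv})^r \beta $. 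Unwinding the definition of $ (\cdot)^r $, I must show that for every $ \phi \in \Sigma^* $ with $ \alpha\phi \in \pf L $ we have $ \alpha\phi \sim^{cv} \beta\phi $. From $ \alpha (\sim^{cv})^r \gamma $ I get $ \alpha\phi \sim^{cv} \gamma\phi $ for all such $ \phi $. The idea is then to use that $ \sim^{cv} $ is $ \mathcal{P} $-convex: I want to interpose $ \beta\phi $ between $ \alpha\phi $ and $ \gamma\phi $ in co-lex order and in the same block of $ \mathcal{P} $. Since right-multiplication by a fixed $ \phi $ preserves co-lex order ($ x \prec y \Rightarrow x\phi \prec y\phi $), the betweenness $ \min\{\alpha,\gamma\}\prec\beta\prec\max\{\alpha,\gamma\} $ gives $ \min\{\alpha\phi,\gamma\phi\}\prec\beta\phi\prec\max\{\alpha\phi,\gamma\phi\} $. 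It remains to check $ U_{\alpha\phi} = U_{\beta\phi} $: this is where $ \mathcal{P} $-consistency of $ \sim $ (hence of $ \sim^{cv} $) is used, together with the fact that $ \sim^{cv} $ respects $ \pf L $ — from $ U_\alpha = U_\beta $ and $ \alpha,\beta $ lying in the same $ \sim $-block structure I can conclude $ \alpha \sim^{cv} \beta $ only after verifying betweenness already holds, so I need to be careful about circularity and instead argue directly that $ \beta\phi \in \pf L $ (because $ \sim^{cv} $ respects $ \pf L $ and $ \alpha \sim^{cv} \beta $, which itself follows from $ \mathcal{P} $-convexity of $ \sim^{cv} $ applied to $ \alpha,\gamma,\beta $) and that it lands in the same block as $ \alpha\phi $.

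The main obstacle I anticipate is precisely this bookkeeping around respecting $ \pf L $ and the order in which betweenness, block-equality, and $ \sim^{cv} $-equivalence are established — one must first apply $ \mathcal{P} $-convexity of $ \sim^{cv} $ to the triple $ \alpha, \gamma, \beta $ to get $ \alpha \sim^{cv} \beta $, then use that $ \sim^{cv} $ respects $ \pf L $ to transport $ \phi $ across, then re-apply $ \mathcal{P} $-convexity of $ \sim^{cv} $ to the triple $ \alpha\phi, \gamma\phi, \beta\phi $ (whose betweenness and block-equality we have just checked, using order-preservation of right multiplication and $ \mathcal{P} $-consistency). I would isolate the statement "$ \sim^{cv} $ respects $ \pf L $ and $ x \prec y \Rightarrow x\phi \prec y\phi $" as a preliminary observation so the main argument reads cleanly, and then the chain $ \alpha\phi \sim^{cv} \gamma\phi $, betweenness of $ \beta\phi $, block-equality, $ \mathcal{P} $-convexity of $ \sim^{cv} $ $ \Rightarrow $ $ \alpha\phi \sim^{cv} \beta\phi $ closes the proof, giving $ \alpha (\sim^{cv})^r \beta $ and hence $ \mathcal{P} $-convexity of $ (\sim^{cv})^r $.
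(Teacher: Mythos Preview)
Your proposal is correct and follows essentially the same route as the paper's proof. The one step you leave slightly implicit --- deriving $U_{\alpha\phi} = U_{\beta\phi}$ --- is handled in the paper exactly along the lines you sketch: from $\alpha \sim^{cv} \beta$ one drops to $\alpha \sim \beta$, then uses the \emph{right-invariance of $\sim$} (this is where the hypothesis on $\sim$ itself, not merely that $\sim^{cv}$ respects $\pf L$, is invoked) to get $\alpha\phi \sim \beta\phi$, whence $\mathcal{P}$-consistency of $\sim$ yields the block equality.
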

\begin{proof}
By definition $(\sim^{cv})^r$ is a right-invariant refinement. Moreover, $ \sim^{cv}$ and $(\sim^{cv})^r$ are $ \mathcal{P} $-consistent because they are refinements of the $ \mathcal{P} $-consistent equivalence relation $ \sim $. Let us prove that $ (\sim^{cv})^r $ is $\mathcal{P} $-convex. Assume that $ \alpha (\sim^{cv})^r \gamma $ and  $ \alpha \prec \beta \prec \gamma $ are such that $ U_\alpha  = U_\beta  $.  Being $(\sim^{cv})^r$ a $\mt P$-consistent relation, we have $ U_\alpha  = U_\beta =U_\gamma $. We must prove that $ \alpha (\sim^{cv})^r \beta $. Fix $ \phi \in \Sigma^* $ such that $ \alpha \phi \in \pf L $. We must prove that $ \alpha \phi \sim^{cv} \beta \phi $. Now, $ \alpha (\sim^{cv})^r \gamma $ implies $ \alpha \sim^{cv} \gamma $. Since $ \alpha \prec \beta \prec \gamma $ and $ U_\alpha = U_\beta = U_\gamma $, then the $ \mathcal{P} $-convexity of $ \sim^{cv} $ implies $ \alpha \sim^{cv} \beta $. In particular, $ \alpha \sim\beta $. Since $\sim$ is   right-invariant   we have  $ \alpha \phi \sim  \beta \phi $, and from the $\mt P$-consistency of $\sim$   we obtain    $ U_{\alpha \phi} = U_{\beta \phi} $. Moreover, $ \alpha (\sim ^{cv})^r \gamma $ implies $ \alpha \phi (\sim ^{cv})^r \gamma \phi $ by right-invariance, so $ \alpha \phi \sim ^{cv} \gamma \phi $. By $ \mathcal{P} $-convexity, from $ \alpha \phi \sim ^{cv} \gamma \phi $, $ U_{\alpha \phi} = U_{\beta \phi} $ and $ \alpha \phi \prec \beta \phi \prec \gamma \phi $ (since $ \alpha \prec \beta \prec \phi) $ we conclude $ \alpha \phi \sim ^{cv} \beta \phi $. \qed
\end{proof}

\begin{corollary}\label{cor:coarsest}
Let $ \mathcal{L} \subseteq \Sigma^* $ be a nonempty language, and let $ \mathcal{P} $ be a partition of $ Pref (\mathcal{L}) $. Let $ \sim $ be an equivalence relation that refines $ Pref (\mathcal{L}) $. Then, there exists a (unique) coarsest $ \mathcal{P} $-consistent, $ \mathcal{P} $-convex and right-invariant equivalence relation refining $ \sim $.
\end{corollary}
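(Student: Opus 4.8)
The plan is to construct the desired relation by applying the three refinement operators $(\cdot)^{cs}$, $(\cdot)^{cv}$, $(\cdot)^r$ introduced above in a carefully chosen order, and to check that coarsest-ness is preserved at every step. Throughout, $\sim$ respects $Pref(\mathcal{L})$, so that all operators used below are defined; I will also repeatedly use the elementary facts that if an equivalence relation $\approx$ refines an equivalence relation $\sim$, then $\approx$ respects $Pref(\mathcal{L})$ whenever $\sim$ does, and $\approx$ is $\mathcal{P}$-consistent whenever $\sim$ is (both follow immediately from $\alpha\approx\beta \Rightarrow \alpha\sim\beta$).

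First I would pass to $\sim^{cs}$, the coarsest $\mathcal{P}$-consistent equivalence refining $\sim$, and then to $\sim' := (\sim^{cs})^r$, the coarsest right-invariant equivalence refining $\sim^{cs}$ (well defined since $\sim^{cs}$ refines $\sim$ and hence respects $Pref(\mathcal{L})$). Since $\sim'$ refines the $\mathcal{P}$-consistent relation $\sim^{cs}$, it is itself $\mathcal{P}$-consistent; thus $\sim'$ is $\mathcal{P}$-consistent and right-invariant and refines $\sim$. Moreover $\sim'$ is the coarsest relation with these three properties: if $\approx$ is $\mathcal{P}$-consistent, right-invariant and refines $\sim$, then $\approx$ refines $\sim^{cs}$ (being $\mathcal{P}$-consistent and refining $\sim$, and $\sim^{cs}$ being the coarsest such), and then $\approx$ refines $(\sim^{cs})^r = \sim'$ (being right-invariant and refining $\sim^{cs}$).

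Next I would set $E := (\sim'^{cv})^r$, which is legitimate because $\sim'$ is $\mathcal{P}$-consistent (so $\sim'^{cv}$ is defined) and $\sim'^{cv}$ refines $\sim$ (so it respects $Pref(\mathcal{L})$). By Lemma \ref{lem:refinement} applied to the $\mathcal{P}$-consistent, right-invariant relation $\sim'$, the relation $E$ is $\mathcal{P}$-consistent, $\mathcal{P}$-convex and right-invariant, and it refines $\sim' $, hence refines $\sim$. It remains to see that $E$ is the coarsest such relation. Let $\approx$ be any $\mathcal{P}$-consistent, $\mathcal{P}$-convex, right-invariant equivalence refining $\sim$. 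By the previous paragraph $\approx$ refines $\sim'$; being $\mathcal{P}$-convex and refining $\sim'$, it refines $\sim'^{cv}$ (the coarsest $\mathcal{P}$-convex equivalence refining $\sim'$); being right-invariant and refining $\sim'^{cv}$, it refines $(\sim'^{cv})^r = E$. So $E$ is coarsest, and uniqueness of the coarsest element is automatic, since two coarsest elements refine each other and hence coincide.

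The main obstacle is precisely that a $\mathcal{P}$-convexification followed by a right-invariant closure could, a priori, break $\mathcal{P}$-convexity, forcing one to iterate the three operators transfinitely; the fact that a single pass $\sim' \mapsto (\sim'^{cv})^r$ already lands in all three classes is exactly the content of Lemma \ref{lem:refinement}, which we are entitled to invoke. Given that lemma, the remaining work is the bookkeeping above: checking well-definedness along the chain $\sim \to \sim^{cs} \to \sim' \to \sim'^{cv} \to E$ (the preconditions ``$\mathcal{P}$-consistent'' for $(\cdot)^{cv}$ and ``respects $Pref(\mathcal{L})$'' for $(\cdot)^r$ both propagate down refinements from $\sim$), and chaining the coarsest-ness characterizations of $\sim^{cs}$, $\sim^r$, $\sim^{cv}$ stated above.
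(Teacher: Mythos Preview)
Your proposal is correct and follows essentially the same approach as the paper: you build the very same relation $(((\sim^{cs})^r)^{cv})^r$, invoke Lemma~\ref{lem:refinement} on the $\mathcal{P}$-consistent right-invariant relation $(\sim^{cs})^r$, and then chain the coarsest-ness characterizations of $(\cdot)^{cs}$, $(\cdot)^r$, $(\cdot)^{cv}$, $(\cdot)^r$ in that order. The only difference is that you make the coarsest-ness bookkeeping explicit step by step, whereas the paper compresses it into a single sentence.
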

\begin{proof}
The equivalence relation $ (\sim^{cs})^r $ is $ \mathcal{P} $-consistent (because it is a refinement of the $ \mathcal{P} $-consistent equivalence relation $ \sim^{cs} $) and right-invariant (by definition it is a right-invariant refinement), so by Lemma \ref{lem:refinement} the  equivalence relation $(((\sim^{cs})^r)^{cv})^r$ is $ \mathcal{P} $-consistent, $ \mathcal{P} $-convex and right-invariant. Moreover, every $ \mathcal{P} $-consistent, $ \mathcal{P} $-convex and right-invariant equivalence relation refining $ \sim $ must also refine $(((\sim^{cs})^r)^{cv})^r$, so $(((\sim^{cs})^r)^{cv})^r$ is the coarsest $ \mathcal{P} $-consistent, $ \mathcal{P} $-convex and right-invariant equivalence relation refining $ \sim $. \qed
\end{proof}

Corollary \ref{cor:coarsest} allows us to give the following definition.

\begin{definition}
Let $ \mt L \subseteq \Sigma^* $ be a language, and let $ \mathcal{P} = \{U_1, \dots, U_p \} $ be a partition of $ \pf L $. Denote by $ \equiv_\mathcal{L}^\mathcal{P} $ the coarsest $ \mathcal{P} $-consistent, $ \mathcal{P} $-convex and right-invariant equivalence relation refining the Myhill-Nerode equivalence $ \equiv_\mathcal{L} $.
\end{definition}

Recall that, given a DFA $ \mathcal{A} = (Q, s, \delta, F) $, the equivalence relation $ \sim_\mathcal{A} $ on $ Pref (\mathcal{L(A)} $ is the one such that:
\begin{equation*}
    \alpha \sim_\mathcal{A} \beta  \iff \delta (s, \alpha) = \delta (s, \beta).
\end{equation*}

Here are the key properties of $ \sim_\mathcal{A} $.

\begin{lemma}\label{simA}
Let $ \mathcal{A} = (Q, s, \delta, F) $ be a $ \mathcal{P} $-sortable DFA, where $ \mathcal{P} = \{U_1, \dots, U_p \} $ is a partition of $ Pref (\mathcal{L(A)}) $. Then, $ \sim_\mathcal{A} $ refines $ \equiv_\mathcal{L} $, it respects $ \pf L $, it is $ \mathcal{P} $-consistent,$ \mathcal{P} $-convex, right-invariant, it has finite index and $ \mathcal{L(A)} $ is the union of some $ \sim_\mathcal{A} $-equivalence classes. In particular, $ \equiv_\mathcal{L}^\mathcal{P} $ is refined by $ \sim_\mathcal{A} $, it has finite index and $ \mathcal{L(A)} $ is the union of some $ \equiv_\mathcal{L}^\mathcal{P} $-equivalence classes.
\end{lemma}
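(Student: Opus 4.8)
The plan is to verify each listed property of $\sim_\mathcal{A}$ in turn, exploiting the hypothesis that $\mathcal{A}$ is $\mathcal{P}$-sortable via some $\le_\mathcal{A}$-chain partition $\{Q_i\}_{i=1}^p$ with $Pref(\mathcal{L(A)})^i = U_i$. Most of these verifications are routine and classical: $\sim_\mathcal{A}$ refines $\equiv_\mathcal{L}$ because two strings reaching the same state of any DFA accepting $\mathcal{L}$ have the same set of continuations into $\mathcal{L}$; $\sim_\mathcal{A}$ respects $\pf L$ and is right-invariant because $\delta(s,\alpha)=\delta(s,\beta)$ forces $\delta(s,\alpha\phi)=\delta(s,\beta\phi)$ (so in particular $\alpha\phi\in\pf L \iff \beta\phi\in\pf L$); $\sim_\mathcal{A}$ has finite index since its classes are exactly the sets $I_q$, $q\in Q$, and $Q$ is finite; and $\mathcal{L(A)}$ is a union of $\sim_\mathcal{A}$-classes, namely $\bigcup_{q\in F} I_q$, because membership in $F$ depends only on the state reached.

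The two properties that actually use $\mathcal{P}$-sortability are $\mathcal{P}$-consistency and $\mathcal{P}$-convexity. For $\mathcal{P}$-consistency: if $\alpha\sim_\mathcal{A}\beta$ then $\delta(s,\alpha)=\delta(s,\beta)=:q$, and $q$ lies in exactly one chain $Q_i$, hence both $\alpha$ and $\beta$ lie in $Pref(\mathcal{L(A)})^i = U_i$, so $U_\alpha = U_\beta$. For $\mathcal{P}$-convexity: fix $\alpha$ and suppose $\delta(s,\alpha)=q\in Q_i$; we must show $I_q = [\alpha]_{\sim_\mathcal{A}}$ is convex in $(U_i,\preceq)$. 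Take $\beta\in U_i$ with $\gamma_1\preceq\beta\preceq\gamma_2$ for some $\gamma_1,\gamma_2\in I_q$; let $q'=\delta(s,\beta)\in Q_i$. Since $Q_i$ is a $\le_\mathcal{A}$-chain, $q$ and $q'$ are $\le_\mathcal{A}$-comparable, i.e. either $I_q\prec I_{q'}$ or $I_{q'}\prec I_q$ or $q=q'$; but $\gamma_1\preceq\beta\preceq\gamma_2$ with $\gamma_1,\gamma_2\in I_q$ and $\beta\in I_{q'}$ rules out both strict alternatives (recall the definition $X\prec Y$ requires \emph{all} elements of $X$ below \emph{all} of $Y$), so $q=q'$ and $\beta\in I_q$. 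Hence $I_q$ is convex in $U_i$.

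Finally, the ``in particular'' clause follows by comparing $\sim_\mathcal{A}$ with $\equiv_\mathcal{L}^\mathcal{P}$. Since $\equiv_\mathcal{L}^\mathcal{P}$ is, by Corollary \ref{cor:coarsest}, the \emph{coarsest} $\mathcal{P}$-consistent, $\mathcal{P}$-convex, right-invariant equivalence relation refining $\equiv_\mathcal{L}$, and since $\sim_\mathcal{A}$ is a $\mathcal{P}$-consistent, $\mathcal{P}$-convex, right-invariant equivalence relation refining $\equiv_\mathcal{L}$ (just established), $\sim_\mathcal{A}$ must refine $\equiv_\mathcal{L}^\mathcal{P}$. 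Consequently $\equiv_\mathcal{L}^\mathcal{P}$ has at most as many classes as $\sim_\mathcal{A}$, so it has finite index; and since $\mathcal{L(A)}$ is a union of $\sim_\mathcal{A}$-classes and each $\equiv_\mathcal{L}^\mathcal{P}$-class is a union of $\sim_\mathcal{A}$-classes, $\mathcal{L(A)}$ is also a union of $\equiv_\mathcal{L}^\mathcal{P}$-classes. The only mild subtlety to watch is the convexity argument — making sure the strict-inequality convention on sets $X\prec Y$ is invoked correctly to force $q=q'$ — but this is exactly where chain-partition membership does the work, so it should go through cleanly.
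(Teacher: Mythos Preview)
Your argument is correct and follows the same route as the paper's proof, with your convexity step spelled out a bit more explicitly than theirs. One small logical slip at the very end: from ``$\mathcal{L(A)}$ is a union of $\sim_\mathcal{A}$-classes'' and ``each $\equiv_\mathcal{L}^\mathcal{P}$-class is a union of $\sim_\mathcal{A}$-classes'' it does \emph{not} follow that $\mathcal{L(A)}$ is a union of $\equiv_\mathcal{L}^\mathcal{P}$-classes (the refinement goes the wrong way for that deduction); the clean fix is to observe directly that $\equiv_\mathcal{L}^\mathcal{P}$ refines $\equiv_\mathcal{L}$ by definition, and $\mathcal{L}$ is trivially a union of $\equiv_\mathcal{L}$-classes.
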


\begin{proof}
Assume that $ \alpha \sim_\mathcal{A} \beta $. This means that if we read $ \alpha $ and $ \beta $ on $ \mathcal{A} $, then we reach the same state, which immediately implies that $ \mathcal{L(A)} $ is the union of some $ \sim_\mathcal{A} $-equivalence classes. Moreover, if $ \phi \in \Sigma^* $ satisfies $ \alpha \phi \in Pref (\mathcal{L(A)}) $, then it must be $ \beta \phi \in Pref (\mathcal{L(A)} $ and $ \delta(s, \alpha \phi) = \delta(s, \beta \phi) $, so proving that $ \sim_\mathcal{A} $ respects $ Pref (\mathcal{L(A)}) $, it is right-invariant and it refines $ \equiv_\mathcal{L} $. For every $ \alpha \in \pf L $ we have $ [\alpha]_\mathcal{A} = I_{\delta(s, \alpha)} $, which implies that $ \sim_\mathcal{A} $ is $ \mathcal{P} $-consistent and it has finite index. Moreover, $ \sim_\mathcal{A} $ is $ \mathcal{P} $-convex because for every $ \alpha \in Pref (\mathcal{L(A)}) $ we have that $ I_{\delta(s, \alpha)} $ is convex in $ U_\alpha $, because if $ q_1, \dots, q_k \in Q $ are such that $ U_\alpha = \cup_{i = 1}^k I_{q_i} $, then the $ I_{q_i} $'s are pairwise $ \le_\mathcal{A} $-comparable, being in the same $ \le_\mathcal{A} $-chain. Finally, $ \equiv_\mathcal{L}^\mathcal{P} $ is refined by $ \sim_\mathcal{A} $ because $ \equiv_\mathcal{L}^\mathcal{P} $ is the coarsest $ \mathcal{P} $-consistent, $ \mathcal{P} $-convex and right-invariant equivalence relation refining $ \equiv_\mathcal{L} $, so $ \equiv_\mathcal{L}^\mathcal{P} $ has finite index and $ \mathcal{L(A)} $ is the union of some $ \equiv_\mathcal{L}^\mathcal{P} $-equivalence classes because $ \sim_\mathcal{A} $ satisfies these properties. \qed
\end{proof}

We can now explain how to canonically build a $ \mathcal{P} $-sortable DFA starting from an equivalence relation.

\begin{lemma}\label{from_equiv_to_DFA}

Let $ L \subseteq \Sigma^* $ be a language, and let $ \mathcal{P} = \{U_1, \dots, U_p \} $ be a partition of $ \pf L $.  Assume that $ \mathcal{L} $ is the union of some classes of a $ \mathcal{P} $-consistent, $ \mathcal{P} $-convex, right invariant equivalence relation $ \sim $ on $ \pf L $ of finite index. Then, $ \mathcal{L} $ is recognized by a $ \mathcal{P} $-sortable DFA $ \mathcal{A_\sim} = (Q_\sim, s_\sim, \Sigma, E_\sim, F_\sim) $ such that:
\begin{enumerate}
\item $ |Q_\sim| $ is equal to the index of $ \sim $;
\item $ \sim_{A_\sim} $ and $ \sim $ are the same equivalence relation (in particular, $ |Q_\sim| $ is equal to the index of $ \sim_{A_\sim} $).
\end{enumerate}
Moreover, if $ \mathcal{B} $ is a $ \mathcal{P} $-sortable DFA that recognizes $ \mathcal{L} $, then $ \mathcal{A_{\sim_\mathcal{B}}} $ is isomorphic to $ \mathcal{B} $.
\end{lemma}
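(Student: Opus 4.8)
The plan is to carry out the standard quotient construction on $\sim$ and then verify that the two extra conditions — $\mathcal{P}$-sortability and the isomorphism claim — are forced by $\mathcal{P}$-consistency and $\mathcal{P}$-convexity. Concretely, I would define $\mathcal{A}_\sim$ by $Q_\sim = \{[\alpha]_\sim \mid \alpha \in \pf L \}$, $s_\sim = [\epsilon]_\sim$ (note $\epsilon \in \pf L$ since $\mathcal{L}$ is nonempty), $F_\sim = \{[\alpha]_\sim \mid \alpha \in \mathcal{L} \}$, and $E_\sim = \{([\alpha]_\sim, a, [\alpha a]_\sim) \mid \alpha, \alpha a \in \pf L,\ a \in \Sigma \}$. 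The edge relation is well defined because $\sim$ respects $\pf L$ (so $\alpha \sim \beta$ and $\alpha a \in \pf L$ give $\beta a \in \pf L$) and is right-invariant (so $\alpha a \sim \beta a$), and it is deterministic since $[\alpha a]_\sim$ is determined by $[\alpha]_\sim$ and $a$; and $F_\sim$ is well defined because $\mathcal{L}$ is a union of $\sim$-classes. An easy induction gives that reading $\alpha \in \pf L$ from $s_\sim$ lands in $[\alpha]_\sim$, so $\mathcal{L}(\mathcal{A}_\sim) = \mathcal{L}$; moreover every state $[\alpha]_\sim$ is reachable (read $\alpha$) and co-reachable (choose $\phi$ with $\alpha\phi \in \mathcal{L}$ and read $\phi$), so $\mathcal{A}_\sim$ meets the standing assumptions. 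From the same computation, $I_{[\alpha]_\sim} = [\alpha]_\sim$ for every $\alpha \in \pf L$, whence $\sim_{\mathcal{A}_\sim}$ coincides with $\sim$, and $|Q_\sim|$ equals the index of $\sim$ (and of $\sim_{\mathcal{A}_\sim}$), establishing (1) and (2).

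For $\mathcal{P}$-sortability I would take $Q_i = \{[\alpha]_\sim \mid \alpha \in U_i \}$ for $i = 1, \dots, p$. This is a partition of $Q_\sim$: it is well defined and the $Q_i$'s are disjoint because $\sim$ is $\mathcal{P}$-consistent (an entire $\sim$-class lies inside a single $U_i$). It also yields $\pf L^i = U_i$, since $[\alpha]_\sim \in Q_i$ iff $\alpha \sim \beta$ for some $\beta \in U_i$, iff $\alpha \in U_i$ by $\mathcal{P}$-consistency. It remains to check that each $Q_i$ is a $\le_{\mathcal{A}_\sim}$-chain. Take distinct $[\alpha]_\sim, [\beta]_\sim \in Q_i$ and suppose they are $\le_{\mathcal{A}_\sim}$-incomparable; recalling $I_{[\alpha]_\sim} = [\alpha]_\sim$, $I_{[\beta]_\sim} = [\beta]_\sim$, there are, without loss of generality, $\alpha_1, \alpha_2 \in [\alpha]_\sim$ and $\beta_1 \in [\beta]_\sim$ with $\alpha_1 \prec \beta_1 \prec \alpha_2$. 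Since $\alpha_1, \beta_1 \in U_i$, we have $U_{\alpha_1} = U_{\beta_1}$, so the $\mathcal{P}$-convexity of $\sim$ (convexity of $[\alpha_1]_\sim$ in $(U_{\alpha_1}, \preceq)$) forces $\beta_1 \sim \alpha_1$, i.e.\ $[\beta]_\sim = [\alpha]_\sim$, a contradiction. Hence $\{Q_i\}_{i=1}^p$ is a $\le_{\mathcal{A}_\sim}$-chain partition witnessing $\mathcal{P}$-sortability.

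For the final claim, by Lemma \ref{simA} the relation $\sim_\mathcal{B}$ (whose classes are $\{I_q \mid q \in Q_\mathcal{B} \}$) is $\mathcal{P}$-consistent, $\mathcal{P}$-convex, right-invariant, of finite index, and $\mathcal{L}$ is a union of its classes, so $\mathcal{A}_{\sim_\mathcal{B}}$ is defined. I would then exhibit the map $\Phi : Q_\mathcal{B} \to Q_{\sim_\mathcal{B}}$ sending $q$ to $I_q = [\alpha]_{\sim_\mathcal{B}}$, where $\alpha$ is any string reaching $q$ (it exists, and $I_q$ is exactly one $\sim_\mathcal{B}$-class, so the value is independent of $\alpha$). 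Then $\Phi$ is a bijection because the $\sim_\mathcal{B}$-classes are precisely the sets $I_q$; it sends $s_\mathcal{B}$ to $[\epsilon]_{\sim_\mathcal{B}}$ and final states to final states; and it is compatible with the edge relations, since $\delta_\mathcal{B}(q,a) = q'$ (with $\alpha$ reaching $q$) is equivalent to $\alpha a \in \pf L$ together with $I_{q'} = [\alpha a]_{\sim_\mathcal{B}}$, which matches the definition of $E_{\sim_\mathcal{B}}$; undefined transitions likewise correspond to absent edges on both sides. Thus $\Phi$ is an isomorphism $\mathcal{B} \cong \mathcal{A}_{\sim_\mathcal{B}}$.

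The only genuinely delicate point is the chain property of the blocks $Q_i$ — everything else is routine Myhill--Nerode bookkeeping — and there $\mathcal{P}$-convexity is exactly what converts ``all states in one block of the partition $\mathcal{P}$'' into ``all states pairwise $\preceq_{\mathcal{A}_\sim}$-comparable''. One should also keep an eye on the standing hypotheses that every state be reachable and co-reachable in $\mathcal{A}_\sim$, which is why the co-reachability argument via some $\phi$ with $\alpha\phi \in \mathcal{L}$ is included above.
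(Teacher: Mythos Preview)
Your proof is correct and follows essentially the same approach as the paper's: the quotient construction, the chain partition $Q_i = \{[\alpha]_\sim \mid U_\alpha = U_i\}$, the identification $I_{[\alpha]_\sim} = [\alpha]_\sim$, and the isomorphism map (yours is just the inverse of the paper's $\phi: Q_{\sim_\mathcal{B}} \to Q_\mathcal{B}$). If anything, you supply more detail than the paper does --- you spell out the chain argument via $\mathcal{P}$-convexity explicitly and address co-reachability, both of which the paper leaves implicit.
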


\begin{proof}
Define the DFA $ \mathcal{A_\sim} = (Q_\sim, E_\sim, \Sigma, s_\sim, F_\sim) $ as follows.
	\begin{itemize}
		\item $ Q_\sim = \{[\alpha]_\sim | \alpha \in \pf L\} $;
		\item $ s_\sim = [\epsilon]_\sim $, where $ \epsilon $ is the empty string;
		\item $ E_\sim = \{([\alpha]_\sim, [\alpha a]_\sim, a) | \alpha \in \Sigma^*, a \in \Sigma, \alpha a \in \pf L \}   $;
		\item $ F_\sim = \{[\alpha] | \alpha \in \mathcal{L} \} $.
	\end{itemize}
Since $ \sim $ is right-invariant, it has finite index and $ \mathcal{L} $ is the union of some $ \sim $-classes, then $ \mathcal{A_\sim} $ is a well-defined DFA (see also the proof of Theorem \ref{thm:equivalentholeproof}) and
\begin{equation}\label{eq13}
\alpha \in [\beta]_\sim \iff \delta_\sim (s_\sim, \alpha) = [\beta]_\sim.
\end{equation}
which implies that for every $ \alpha \in Pref(\mathcal{L}) $ it holds $ I_{[\alpha]_\sim} = [\alpha]_\sim $, and so $ \mathcal{L}(\mathcal{A_\sim}) = \mathcal{L} $.

For every $ i \in \{1, \dots, p \} $, define:
\begin{equation*}
    Q_i = \{[\alpha]_\sim | U_\alpha = U_i \}.
\end{equation*}

Notice that each $ Q_i $ is well-defined because $ \sim $ is $ \mathcal{P} $-consistent, and each $ Q_i $ is a $ \preceq_{\mathcal{A_\sim}} $-chain because $ \sim $ is $ \mathcal{P} $-convex, so $ \{Q_i\}_{i = 1}^p $ is a $ \preceq_{\mathcal{A_\sim}} $-chain partition of $ Q_\sim $.

From equation \ref{eq13} we obtain:
\begin{equation*}
\begin{split}
    Pref (\mathcal{L(A_\sim)})^i & = \{\alpha \in Pref (\mathcal{L(A_\sim)}) | \delta_\sim (s_\sim, \alpha) \in Q_i \} = \\
    & = \{\alpha \in Pref (\mathcal{L(A_\sim)}) | (\exists [\beta]_\sim \in Q_i | \alpha \in [\beta]_\sim) \} = \\
    & = \{\alpha \in Pref (\mathcal{L(A_\sim)}) | U_\alpha = U_i \} = U_i.
\end{split}
\end{equation*}
In other words, $ \mathcal{A}_\sim $ witnesses that $ \mathcal{L} $ is recognized by a $ \mathcal{P} $-sortable DFA. Moreover:
\begin{enumerate}
\item The number of states of $ \mathcal{A}_\sim $ is clearly equal to the index of $ \sim $.
\item By equation \ref{eq13}:
\begin{equation*}
    \alpha \sim_{A_\sim} \beta \iff \delta_\sim (s_\sim, \alpha) = \delta_\sim (s_\sim, \beta) \iff [\alpha]_\sim = [\beta]_\sim \iff \alpha \sim \beta
\end{equation*}
so $ \sim_{A_\sim} $ and $ \sim $ are the same equivalence relation.
\end{enumerate}
Finally, suppose $ \mathcal{B} $ is a $ \mathcal{P} $-sortable DFA that recognizes $ \mathcal{L} $. Notice that by lemma \ref{simA} we have that $ \sim_{\mathcal{B}} $ is a $ \mathcal{P} $-consistent, $ \mathcal{P} $-convex, right invariant equivalence relation on $ Pref(\mathcal{L}) $ of finite index such that $ \mathcal{L} $ is the union of some $ \sim_\mathcal{B} $-classes, so $ \mathcal{A_{\sim_\mathcal{B}}} $ is well-defined. Call $ Q_\mathcal{B} $ the set of states of $ \mathcal{B} $, and let $ \phi: Q_{\sim_\mathcal{B}} \to Q_\mathcal{B} $ be the function sending $ [\alpha]_{\sim_\mathcal{B}} $ into the state in $ Q_\mathcal{B} $ reached by reading $ \alpha $. Notice that $ \phi $ is well-defined because by the definition of $ \sim_{\mathcal{B}} $ we obtain that all strings in $ [\alpha]_{\sim_\mathcal{B}} $ reach the same state of $ \mathcal{B} $. It is easy to check that $ \phi $ determines an isomorphism between $ \mathcal{A_{\sim_\mathcal{B}}} $ and $ \mathcal{B} $. \qed
   \end{proof}

Here is the proof of our Myhill-Nerode theorem.

\paragraph*{\textbf{Statement of Theorem \ref{thm:MN}.}}    Let $ \mathcal{L} $ be a language. Let $ \mathcal{P} $ be a partition of $ \pf L $. The following are equivalent:
    \begin{enumerate}
        \item $ \mathcal{L} $ is recognized by a $ \mathcal{P} $-sortable DFA.
        \item $ \equiv_\mathcal{L}^\mathcal{P} $ has finite index.
        \item $ \mathcal{L} $ is the union of some classes of a $ \mathcal{P} $-consistent, $ \mathcal{P} $-convex, right invariant equivalence relation on $ \pf L $ of finite index.
    \end{enumerate}
Moreover, if one of the above statements is true (and so all the above statements are true), then there exists a unique minimum $ \mathcal{P} $-sortable DFA (that is, two $ \mathcal{P} $-sortable DFA having the minimum number of states must be isomorphic).

\paragraph*{\textbf{Proof}} 
$ (1) \to (2) $ It follows from lemma \ref{simA}

$ (2) \to (3) $ The desired equivalence relation is simply $ \equiv_\mathcal{L}^\mathcal{P} $. 

$ (3) \to (1) $ It follows from lemma \ref{from_equiv_to_DFA}.

Now, let us prove that the minimum automaton is $ \mathcal{A_{\equiv_\mathcal{L}^\mathcal{P}}} $ as defined in Lemma \ref{from_equiv_to_DFA}. First, $ \mathcal{A_{\equiv_\mathcal{L}^\mathcal{P}}} $ is well-defined because $ \equiv_\mathcal{L}^\mathcal{P} $ is $ \mathcal{P} $-consistent, $ \mathcal{P} $-convex an right-invariant by definition, and it has finite index and $ \mathcal{L(A)} $ is the union of some $ \equiv_\mathcal{L}^\mathcal{P} $-equivalence classes by lemma \ref{simA}. Now, the number of states of $ \mathcal{A_{\equiv_\mathcal{L}^\mathcal{P}}} $ is equal to the index of $ \equiv_\mathcal{L}^\mathcal{P} $, or equivalently, of $ \sim_{\mathcal{A_{\equiv_\mathcal{L}^\mathcal{P}}}} $. On the other hand, let $ \mathcal{B} $ be any $ \mathcal{P} $-sortable DFA recognizing $ \mathcal{L} $ non-isomorphic to $ \mathcal{A_{\equiv_\mathcal{L}^\mathcal{P}}} $. Then $ \sim_\mathcal{B} $ is a refinement of $ \equiv_\mathcal{L}^\mathcal{P} $ by Lemma \ref{from_equiv_to_DFA}, and it must be a strict refinement of $ \equiv_\mathcal{L}^\mathcal{P} $, otherwise $ \mathcal{A_{\equiv_\mathcal{L}^\mathcal{P}}} $ would be equal to $ A_{\sim_\mathcal{B}} $, which by Lemma \ref{from_equiv_to_DFA} is isomorphic to $ \mathcal{B} $, a contradiction. We conclude that the index of $ \equiv_\mathcal{L}^\mathcal{P} $ is smaller than the index of $ \sim_\mathcal{B} $, so again by Lemma \ref{from_equiv_to_DFA} the number of states of $ \mathcal{A_{\equiv_\mathcal{L}^\mathcal{P}}} $ is smaller than the number of states of $ A_{\sim_\mathcal{B}} $ and so of $ \mathcal{B} $. \qed


\begin{thebibliography}{ADPP20b}

\bibitem[ADPP20a]{DBLP:conf/soda/AlankoDPP20}
Jarno Alanko, Giovanna D'Agostino, Alberto Policriti, and Nicola Prezza.
\newblock Regular languages meet prefix sorting.
\newblock In Shuchi Chawla, editor, {\em Proceedings of the 2020 {ACM-SIAM}
  Symposium on Discrete Algorithms, {SODA} 2020, Salt Lake City, UT, USA,
  January 5-8, 2020}, pages 911--930. {SIAM}, 2020.

\bibitem[ADPP20b]{alanko2020wheeler}
Jarno Alanko, Giovanna D'Agostino, Alberto Policriti, and Nicola Prezza.
\newblock Wheeler languages.
\newblock {\em CoRR}, abs/2002.10303, 2020.

\bibitem[AGNB19]{DBLP:conf/dcc/AlankoGNB19}
Jarno~N. Alanko, Travis Gagie, Gonzalo Navarro, and Louisa~Seelbach Benkner.
\newblock Tunneling on wheeler graphs.
\newblock In Ali Bilgin, Michael~W. Marcellin, Joan Serra{-}Sagrist{\`{a}}, and
  James~A. Storer, editors, {\em Data Compression Conference, {DCC} 2019,
  Snowbird, UT, USA, March 26-29, 2019}, pages 122--131. {IEEE}, 2019.

\bibitem[BF80]{DBLP:journals/jcss/BrzozowskiF80}
Janusz~A. Brzozowski and Faith~E. Fich.
\newblock Languages of r-trivial monoids.
\newblock {\em J. Comput. Syst. Sci.}, 20(1):32--49, 1980.

\bibitem[BI15]{DBLP:journals/corr/BackursI15}
Arturs Backurs and Piotr Indyk.
\newblock Which regular expression patterns are hard to match?
\newblock {\em CoRR}, abs/1511.07070, 2015.

\bibitem[BW94]{burrows1994block}
Michael Burrows and David~J Wheeler.
\newblock A block-sorting lossless data compression algorithm.
\newblock Technical Report 124, Digital Equipment Corporation, 1994.

\bibitem[CDPP21]{DBLP:journals/corr/abs-2102-06798}
Nicola Cotumaccio, Giovanna D'Agostino, Alberto Policriti, and Nicola Prezza.
\newblock Which regular languages can be efficiently indexed?
\newblock {\em CoRR}, abs/2102.06798, 2021.

\bibitem[CP21]{NicNic2021}
Nicola Cotumaccio and Nicola Prezza.
\newblock On indexing and compressing finite automata.
\newblock In D{\'{a}}niel Marx, editor, {\em Proceedings of the 2021 {ACM-SIAM}
  Symposium on Discrete Algorithms, {SODA} 2021, Virtual Conference, January 10
  - 13, 2021}, pages 2585--2599. {SIAM}, 2021.

\bibitem[Dil50]{dilworth}
R.~P. Dilworth.
\newblock A decomposition theorem for partially ordered sets.
\newblock {\em Annals of Mathematics}, 51(1):161--166, 1950.

\bibitem[DMP21]{dagostino2021ordering}
Giovanna D'Agostino, Davide Martincigh, and Alberto Policriti.
\newblock Ordering regular languages: a danger zone.
\newblock {\em CoRR}, abs/2106.00315, 2021.

\bibitem[EGM19]{DBLP:journals/corr/abs-1901-05264}
Massimo Equi, Roberto Grossi, and Veli M{\"{a}}kinen.
\newblock On the complexity of exact pattern matching in graphs: Binary strings
  and bounded degree.
\newblock {\em CoRR}, abs/1901.05264, 2019.

\bibitem[EMT21]{DBLP:conf/sofsem/EquiMT21}
Massimo Equi, Veli M{\"{a}}kinen, and Alexandru~I. Tomescu.
\newblock Graphs cannot be indexed in polynomial time for sub-quadratic time
  string matching, unless {SETH} fails.
\newblock In Tom{\'{a}}s Bures, Riccardo Dondi, Johann Gamper, Giovanna
  Guerrini, Tomasz Jurdzinski, Claus Pahl, Florian Sikora, and Prudence W.~H.
  Wong, editors, {\em {SOFSEM} 2021: Theory and Practice of Computer Science -
  47th International Conference on Current Trends in Theory and Practice of
  Computer Science, {SOFSEM} 2021, Bolzano-Bozen, Italy, January 25-29, 2021,
  Proceedings}, volume 12607 of {\em Lecture Notes in Computer Science}, pages
  608--622. Springer, 2021.

\bibitem[GHT21]{DBLP:conf/sosa/GibneyHT21}
Daniel Gibney, Gary Hoppenworth, and Sharma~V. Thankachan.
\newblock Simple reductions from formula-sat to pattern matching on labeled
  graphs and subtree isomorphism.
\newblock In Hung~Viet Le and Valerie King, editors, {\em 4th Symposium on
  Simplicity in Algorithms, {SOSA} 2021, Virtual Conference, January 11-12,
  2021}, pages 232--242. {SIAM}, 2021.

\bibitem[GMS17]{GAGIE201767}
Travis Gagie, Giovanni Manzini, and Jouni Sirén.
\newblock {Wheeler graphs: A framework for BWT-based data structures}.
\newblock {\em Theoretical Computer Science}, 698:67 -- 78, 2017.
\newblock Algorithms, Strings and Theoretical Approaches in the Big Data Era
  (In Honor of the 60th Birthday of Professor Raffaele Giancarlo).

\bibitem[GT19]{DBLP:conf/esa/GibneyT19}
Daniel Gibney and Sharma~V. Thankachan.
\newblock On the hardness and inapproximability of recognizing wheeler graphs.
\newblock In {\em 27th Annual European Symposium on Algorithms, {ESA} 2019,
  September 9-11, 2019, Munich/Garching, Germany.}, pages 51:1--51:16, 2019.

\bibitem[Lib04]{Libkin}
Leonid Libkin.
\newblock {\em Elements of Finite Model Theory}.
\newblock Springer, 2004.

\bibitem[MK21]{DBLP:journals/lmcs/MasopustK21}
Tom{\'{a}}s Masopust and Markus Kr{\"{o}}tzsch.
\newblock Partially ordered automata and piecewise testability.
\newblock {\em Log. Methods Comput. Sci.}, 17(2), 2021.

\bibitem[Pre21]{DBLP:conf/soda/Prezza21}
Nicola Prezza.
\newblock On locating paths in compressed tries.
\newblock In D{\'{a}}niel Marx, editor, {\em Proceedings of the 2021 {ACM-SIAM}
  Symposium on Discrete Algorithms, {SODA} 2021, Virtual Conference, January 10
  - 13, 2021}, pages 744--760. {SIAM}, 2021.

\bibitem[PS20]{DBLP:journals/ipl/PotechinS20}
Aaron Potechin and Jeffrey~O. Shallit.
\newblock Lengths of words accepted by nondeterministic finite automata.
\newblock {\em Inf. Process. Lett.}, 162:105993, 2020.

\bibitem[ST74]{thierrin1974ordered_aut}
H.-J. Shyr and G.~Thierrin.
\newblock Ordered automata and associated languages.
\newblock {\em Tamkang J. Math}, 5:9--20, 1974.

\bibitem[STV01]{DBLP:conf/dlt/SchwentickTV01}
Thomas Schwentick, Denis Th{\'{e}}rien, and Heribert Vollmer.
\newblock Partially-ordered two-way automata: {A} new characterization of {DA}.
\newblock In Werner Kuich, Grzegorz Rozenberg, and Arto Salomaa, editors, {\em
  Developments in Language Theory, 5th International Conference, {DLT} 2001,
  Vienna, Austria, July 16-21, 2001, Revised Papers}, volume 2295 of {\em
  Lecture Notes in Computer Science}, pages 239--250. Springer, 2001.

\end{thebibliography}

\end{document}